\newtheorem{theorem}{Theorem}[section]
\newtheorem{corollary}{Corollary}[theorem]
\newtheorem{definition}{Definition}[section]
\newtheorem{lemma}[theorem]{Lemma}
\newtheorem{remark}[theorem]{Remark}
\numberwithin{equation}{section}
\def\p{\partial}
\def\cd{\cdot}
\def\div{\hbox{div}}
\def\curl{\hbox{curl}}
\def\zh{\mathbf{\hat{z}}}
\def\ep{{\epsilon}}
\def\bx{{\bf x}}
\def\bR{{\bf R}}
\def\bu{{\bf u}}
\def\bv{{\bf v}}
\pgfplotsset{compat=1.14}
\begin{document}
\title{Stochastic wave-current interaction \\ in thermal shallow water dynamics\\ \bigskip\Large
Darryl D. Holm and Erwin Luesink 
\\ \bigskip\small
Department of Mathematics, Imperial College London SW7 2AZ, UK}
\date{}                                           % Activate to display a given date or no date

\maketitle
%\section{}
%\subsection{}

\makeatother

\noindent
\section*{Abstract}
Holm (Proc. Roy. Soc 2015) introduced a variational framework for stochastically parametrising unresolved scales of hydrodynamic motion. This variational framework preserves fundamental features of fluid dynamics, such as Kelvin's circulation theorem, while also allowing for dispersive nonlinear wave propagation, both within a stratified fluid and at its free surface. The present paper combines asymptotic expansions and vertical averaging with the stochastic variational framework to formulate a new approach for developing stochastic  parametrisation schemes for nonlinear waves in fluid dynamics. The approach is applied to two sequences of shallow water models which descend from Euler's three-dimensional fluid equations with rotation and stratification under approximation by asymptotic expansions and vertical averaging. In the entire family of nonlinear stochastic wave-current interaction equations derived here using this approach, Kelvin's circulation theorem reveals a barotropic mechanism for wave generation of horizontal circulation or convection (cyclogenesis) which is activated whenever the gradients of wave elevation and/or topography are not aligned with the gradient of the vertically averaged buoyancy. %This barotropic mechanism applies to the generation of both warm-core and cold-core eddies.

%Holm (Proc. Roy. Soc 2015) introduced a variational framework for stochastically parametrising unresolved scales of hydrodynamic motion. This variational approach preserves certain fundamental structural features of fluid dynamics, such as Kelvin's circulation theorem. An interesting question is whether, or under which conditions, model approximations via asymptotic analysis would also preserve such fundamental hydrodynamic features. In this paper, the stochastic variational principle is applied to the rotating, stratified Euler equations for a three dimensional fluid and this model is simplified by using asymptotic analysis. In particular, stochastic two dimensional barotropic models which include the effects of buoyancy, rotation and dispersion will be derived from the three dimensional fluid models, by using a vertical averaging principle. This spatial averaging principle will be applied both to the equations and to the variational principle. We find that if the Strouhal number is unity, both approaches preserve the circulation theorem; and, if not, then only averaging within the variational principle preserves it.  \fbox{\color{red}This is a weak ending.}

\newpage

\tableofcontents

\newpage

\section*{Introduction}
Weather forecasting, climate change prediction and global ocean circulation all face the same fundamental challenge to create models which incorporate the effects of measurement error and uncertainty due to unresolved scales, unknown physical phenomena and incompleteness of observed data. We tackle this issue of modelling effects of unknown causes in observational science by applying new methods in stochastic data-driven modelling which are designed to predict both future measurements and their uncertainty, based on analysing the available data for the problem at hand.

For example, a common approach for modelling and simulating climate and weather is based on \textit{stochastic parametrisation}.
For recent reviews of stochastic parametrisation in geophysical fluid dynamics (GFD), see, e.g. \cite{berner2012systematic, berner2017stochastic, gottwald2016stochastic}.  The fundamental conclusions of \cite{berner2012systematic} are twofold:
\begin{quote}
\textit{A posteriori} addition of stochasticity to an already tuned model is simply not viable.

Stochasticity must be incorporated at a very basic level within the design of physical process parametrisations and improvements to the dynamical core.
\end{quote}
A new approach \cite{holm2015variational} which meets the challenge of incorporating stochastic parametrisation at the fundamental level enunciated in \cite{berner2012systematic} introduces stochastic transport directly into the loop velocity in Kelvin's circulation theorem. The dynamical quantities of physical interest are then modelled \textit{together with their statistical uncertainty}, and data assimilation is used to reduce that uncertainty. This is the SALT approach.

The SALT (Stochastic Advection by Lie Transport) approach combines stochasticity in the velocity of the fluid material loop in Kelvin's circulation theorem with ensemble forecasting. The ensemble forecasting in SALT has been coordinated with the results of the particle filtering method of data assimilation. A protocol for applying the SALT approach in combination with data assimilation based on comparing fine scale and coarse scale computational simulations has recently been established in \cite{cotter2019numerically,cotter2018modelling}. These results demonstrate the capability of the SALT approach to successfully reduce forecast uncertainty in a variety of test problems for fluid dynamics in two spatial dimensions. The three dimensional SALT theory has been developed and analysed to determine their existence, uniqueness and blow-up criterion in \cite{crisan2019solution}, but it still awaits computational implementation at the present time. 

The present paper aims to use the SALT approach for fluid dynamics described above to provide a barotropic (vertically averaged) description of wave-current interaction (WCI) in a stratified incompressible  fluid flow, by incorporating stochastic fluid transport and circulation with nonlinear dispersive wave propagation internally and on the free surface. In doing so, this paper combines a variational principle approach with asymptotic analysis to derive simplified models. Historically in ocean modelling, the rapid propagation of the barotropic (or, external) mode representing disturbances on the free surface, for example, has required special handling; because otherwise incorporating the simulation of its rapid time scale and multicomponent physical processes would tend to occupy an inordinate amount of computer power \cite{dukowicz1994implicit, fox2019challenges}.  

In addressing this challenge, the Camassa--Holm 1992 model (referred to as CH92 hereafter) derived in \cite{camassa1992dispersive}  used vertical averaging to transform the 3D Euler--Boussinesq fluid equations into a family of 2D stratified `rotating shallow water equations' which incorporate effects of weak deviations from hydrostatic balance, weak stratification and strong topography. Through a series of approximations and asymptotic limits, the CH92 model was found to contain the Kadomsev-Petviashvili (KP) and Korteweg-de Vries (KdV) equations in a rotating frame, as shown in Figure \ref{fig:CH1992}.  

The present paper will develop two families of stochastic models of barotropic wave-current interaction for mesoscale and submesoscale ocean dynamics based on the deterministic CH92 model and its further development in \cite{camassa1996long, camassa1997long}. Our approach combines dimensional analysis, asymptotic expansions and vertical averaging to obtain the barotropic component of the fluid motion, as done in \cite{camassa1992dispersive}, extended first to the Euler--Poincar\'e variational approach of \cite{holm1998euler} and then to the SALT approach \cite{holm2015variational} for introducing stochasticity. In the Euler-Poincar\'e version of the SALT approach, the approximation of the Lagrangian is separate from the introduction of stochasticity. The asymptotic expansions are applied to the Lagrangian first, and then the stochasticity is introduced in the variations of the Eulerian variables, which depend on spatially smooth maps with stochastic time dependence. In the variational step to include stochasticity, one also introduces the Strouhal number. In the process, we handle the barotropic effects by vertically averaging, applied either to the equations of motion as in \cite{wu1981long}, or to the variational principle for SALT  \cite{holm2015variational}. 
Of course, the vertical averaging procedure eliminates vertical buoyancy gradients. However, horizontal gradients of the vertically-averaged buoyancy remain. Here, the equations obtained after vertical averaging which retain horizontal gradients of buoyancy will be called \textit{thermal} equations. This name applies because the buoyancy plays the role of entropy per unit mass in the equation of state for adiabatic compressible fluid flows. Likewise, the variation of the energy with respect to the buoyancy plays the role of temperature in the adiabatic compressible fluid case.
Thus, the present paper aims to incorporate stochasticity into the theory of nonlinear dispersive water waves interacting with horizontal buoyancy gradients, as governed by vertically-averaged fluid equations. This stochastic theory of wave-current interaction in thermal shallow water dynamics is expected to be useful for quantifying uncertainty and perhaps even reducing it by using data assimilation in the SALT approach
\cite{holm2020stochastic}. 

%
%\begin{figure}[H]
%\centering
%\includegraphics[width = .9\textwidth]{CH1992-Diagram}
%\caption{The flow diagram of approximations via vertical averages and asymptotic expansions in \cite{camassa1992dispersive}.}
%\label{fig:CH1992}
%\end{figure} 
%
\begin{figure}[H]
\begin{center}
\textbf{CH92 derivation road map}
\end{center}
\bigskip

\centering
\begin{tikzcd}
[row sep = 4em, 
column sep=1.5em, 
cells = {nodes={top color=blue!40, bottom color=white,draw=blue!90}},
arrows = {draw = black, rightarrow, line width = .03cm}]
& & {\begin{matrix}\text{3D Euler Boussinesq equations}\\ \text{ for stratified, rotating,}\\  \text{incompressible fluids}\end{matrix}} \arrow[d,"{\begin{matrix}
\text{Expand in small dimensionless}\\ \text{parameters and integrate vertically}\end{matrix}}", shorten <= 1mm, shorten >= 1mm] & & \\
& & {\begin{matrix} \text{2D Dispersive, stratified, rotating}\\  \text{shallow water equations} \end{matrix}} \arrow[dr, "{\begin{matrix} \text{Expand in stretched,}\\ \text{moving coordinates}\end{matrix}}", shorten <= 4mm, shorten >= 4mm, near end] \arrow[dl, "{\begin{matrix} \text{Neglect stratification and}\\ \text{impose hydrostatic balance}\end{matrix}}" swap, shorten <= 4mm, shorten >= 4mm, near end]& & \\
& {\begin{matrix}\text{Rotating shallow water}\\ \text{equations}\end{matrix}} \arrow[d, "{\begin{matrix} \text{Rigid lid, no}\\ \text{horizontal divergence}\end{matrix}}" swap, shorten <= 1mm, shorten >= 1mm] & & {\begin{matrix}\text{Kadomtsev--Petviashvili}\\ \text{equation with stratification}\\ \text{and rotation}\end{matrix}} \arrow[d, "{\begin{matrix}\text{No transverse}\\ \text{coordinate dependence}\end{matrix}}",  shorten <= 1mm, shorten >= 1mm] & \\
& {\begin{matrix}\text{Barotropic equations}\\ \text{for potential vorticity}\end{matrix}}& & {\begin{matrix}\text{Korteweg--De Vries equation}\\ \text{with stratification}\end{matrix}} &
\end{tikzcd}
\caption{The flow diagram of approximations via vertical averages and asymptotic expansions in \cite{camassa1992dispersive}.}
\label{fig:CH1992}
\end{figure}
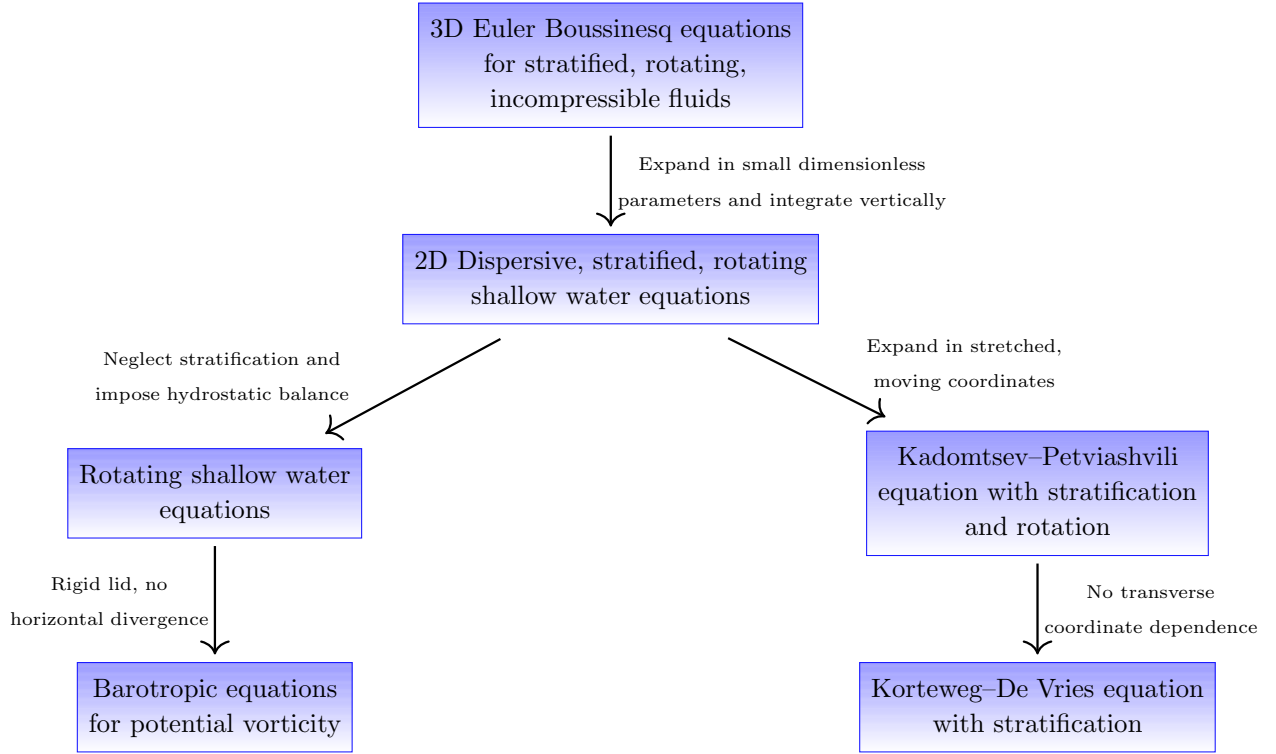

\paragraph{Background.} A framework for combining data with existing models in a probabilistic manner was presented in \cite{holm2015variational}, where a stochastic variational principle for continuum mechanics was introduced. This stochastic variational principle enables one to derive stochastic models of inviscid fluid dynamics which satisfy a Kelvin circulation theorem, starting from the Lagrangian of the corresponding deterministic fluid model and using a Clebsch constraint to introduce the Stochastic Advection by Lie Transport (SALT). This approach decomposes the fluid velocity vector field into the sum of a drift velocity and a Stratonovich stochastic velocity. The former is obtained from the constrained variational principle and the latter is determined by analysing available data according to the protocol established in \cite{cotter2019numerically,cotter2018modelling}. The constraints may be introduced either by imposing the advection equations for the relevant physical quantities of the model, or equivalently by imposing the advection equation for the fluid labels. 

Recently, in \cite{de2020implications}, the known Euler-Poincar\'e and Hamilton-Pontryagin stochastic variational principles were reformulated and shown to be equivalent to the Clebsch variant, by proving existence and uniqueness of the solution of the SALT advection constraint. The noise used in the \cite{holm2015variational} approach also appears in \cite{cotter2017stochastic}, where the decomposition  for SALT of the fluid velocity vector field into the sum of a drift velocity and a Stratonovich stochastic velocity was derived by using multi-time homogenisation theory. Many subsequent investigations of the properties of the equations of fluid dynamics with the SALT modification have appeared in the literature over the last four years. In particular, the SALT approach preserves most physical conservation laws by construction, while it also possesses much of the analytical structure of the underlying deterministic model. For example, in \cite{crisan2019solution}, the three dimensional SALT Euler equations are shown to have the same local-in-time existence and uniqueness analytical properties as the deterministic version, as well as the same Beale-Kato-Majda \cite{beale1984remarks} criterion for blow-up of solutions. In \cite{geurts2017lyapunov}, the Lorenz 63 equations are derived from Rayleigh-B\'enard convection with this type of stochasticity and the rate of convergence towards the attractor is shown to be preserved by this type of noise. From a more operational point of view, in \cite{cotter2019numerically}, SALT was introduced into the two dimensional Euler equations and it was shown that the stochastic equations, which are solved on a coarse grid, mimic the deterministic equations, which are solved on a fine grid, for a significant period of time. In \cite{cotter2018modelling}, a similar result was established for the flow in a channel of a two layer quasigeostrophic system. 
\bigskip 

In this paper, we are concerned with consistency of SALT under asymptotic expansion and analysis for the simulation of the barotropic mode in ocean dynamics. As mentioned earlier, the barotropic mode in ocean dynamics is the fastest excitation in the free-surface dynamics. It is treated separately (for example, by subcycling) in most 3D simulations of large scale ocean circulation. The issue of the free-surface treatment which motivated the original investigation of the various types of nonlinear wave behaviour in \cite{camassa1992dispersive} is still of current concern. 
\bigskip

A motivating question for introducing SALT into nonlinear dispersive water wave theory to be addressed in the present paper is: How can one use available data to quantify the uncertainty due to the barotropic mode in the free-surface treatment for computational simulations? This work is done in preparation for using the data assimilation methods of  \cite{cotter2019numerically,cotter2018modelling} to reduce that uncertainty, e.g., by using satellite data.  \bigskip

As in \cite{camassa1992dispersive}, we will combine asymptotic analysis with the vertical averaging principle of \cite{wu1981long} to derive a sequence of two dimensional barotropic models. This averaging principle will be applied both on the equations and also on the variational principle. The latter turns out to be advantageous in situations where the Strouhal number (the ratio of the chosen time scale over the natural time scale induced by the length and fluid velocity scales) is not equal to unity. The starting point of these derivations is the three dimensional rotating stratified Euler model, a three dimensional fluid model that includes the effects of rotation and buoyancy stratification. By making assumptions about the buoyancy stratification, we transition into the Euler-Boussinesq model. Here, we apply the averaging principle to derive two dimensional models with nonhydrostatic effects, rotation and stratification. The two dimensional models will be derived with respect to two different time scales: the first time scale is the natural one and the second is the time scale that corresponds to gravity waves. When the time scale is the natural one, the Strouhal number is equal to unity, which means that the asymptotic analysis applied to the equations and the asymptotic analysis applied inside the variational principle lead to the same result at each order in the asymptotic expansion. The assumption that the free surface amplitude is very small leads to the Great Lake, Lake and Benney long wave equations, first derived in \cite{camassa1996long, camassa1997long,benney1973some}, respectively, although in this paper we also include the effects of rotation, stratification and stochasticity. The second scaling regime is where the Strouhal number is equal to the inverse of the Froude number. This scaling regime leads to equations in the Green-Naghdi \cite{green1976derivation} class, if the free surface amplitude is assumed to be small, rather than very small. This derivation was first accomplished in \cite{camassa1992dispersive}, where also a Kadomtsev-Petviashvili equation is derived, augmented by the effects of rotation and bathymetry. In the presence of stochasticity, however, this derivation cannot be done directly. As we shall see, in the situation where the Strouhal number is not equal to unity, asymptotic analysis applied to the equations fails to respect the geometric structure of the problem, but the asymptotic analysis of the variational principle does preserve the geometric structure.
\bigskip

In \cite{holm2015variational}, the SALT  vector field whose characteristic curves generate the stochastic Lagrangian fluid trajectories is defined as
\begin{equation*}
{\sf d}\boldsymbol \chi_t := \mathbf{u}(\mathbf{x},t) dt + \sum_{i=1}^M \boldsymbol \xi_i(\mathbf{x})\circ dW_t^i.
\label{SALT}
\end{equation*} 
Here $\mathbf{u}(\mathbf{x},t)$ is the fluid velocity field, $\boldsymbol \xi_i(\mathbf{x})$ are the vector fields that represent spatial velocity-velocity correlations, $W_t^i$ denotes independent, identically distributed Wiener processes for each $i=1,\dots,M$, and the symbol $\circ$ means Stratonovich integration. The number $M$ of eigenvectors $\boldsymbol \xi_i(\mathbf{x})$ required for a given level of accuracy can be determined via the amount of variance required from a principal component analysis, or via empirical orthogonal function analysis. Via data assimilation procedures, in particular via novel high-dimensional particle filtering methods, the uncertainty may be controlled and reduced dramatically when even a small amount of new data is observed, as shown in \cite{cotter2019numerically,cotter2018modelling}. As we shall see, the  variational approach of SALT used here has the additional advantage of preserving the Kelvin circulation theorem and the Hamiltonian framework, both of which have been fundamental in the history of studying wave-current interaction and now can be made stochastic.
\bigskip

\noindent
\subsection*{Overview of the paper} 
The starting point, described in Section \ref{sec:StochVarPrinc}, will be the introduction of a number of tools which are invaluable for this work. First we will introduce the stochastic Euler-Poincar\'e variational principle, Kelvin's circulation theorem and an averaging principle. Then, starting with the rotating, stratified Euler equations, we will assume that the buoyancy stratification is weak enough to allow us to work with the Euler-Boussinesq equations. This is a justified assumption when the goal is to model the ocean. The flow in wave-current interaction is primarily incompressible, so the models used here will reflect this property. The ocean is shallow, compared to the horizontal distances of interest. In particular, the characteristic height scale is much smaller than the characteristic horizontal scales. This situation allows a reduction in spatial dimension by vertically integrating the Euler-Boussinesq equations to find the vertical average of the nonlinearity and an unknown vertically averaged pressure. Not surprisingly, these are the two terms which we cannot determine from the averaged equations alone. In order to derive a set of closed equations, we will turn to asymptotic analysis, which we will execute in two different regimes. Within each of those two regimes, we will apply asymptotic analysis in two different ways. In the first regime, called ``long time - very small wave scaling",  the time scale is determined by the ratio of the characteristic velocity scale and horizontal length scale, and with very small wave amplitude. The second regime, called the ``short time - small wave scaling", will employ the time scale based on the gravity wave speed and a characteristic horizontal length. The vertical averaging principle of \cite{wu1981long} will be applied, both on the 3D equations, and on the corresponding Euler--Poincar\'e Lagrangian. We will show that in the first regime, the approaches coincide and produce the same equations. In the second regime the asymptotic analysis requires special treatment, as the Strouhal number is not equal to unity in that situation. This difference in Strouhal number means that the material derivative contributes at two different orders in the asymptotic expansion. We shall focus on deriving  two dimensional stochastic fluid models in these two different time-scale regimes, starting from a model for a three dimensional stochastic fluid with rotation and stratification in a shallow box with bathymetry and a free surface.
\bigskip

In Section \ref{sec:LT-Vsmall}, ``the long time - very small wave scaling regime" of the Euler-Boussinesq equations with negligible buoyancy stratification will be derived from asymptotics applied to the equations and to the corresponding Lagrangian. At leading order, this will give rise to the Benney long wave equations, before making the columnar motion assumption. It will produce the stochastic and rotating version of the Lake equations after assuming that the motion is columnar. The Benney equations have an interesting mathematical structure, such as an infinity of conservation laws, as presented in \cite{kupershmidt2006extended}. From a different perspective, the rotating Lake equations are also obtained after assuming that the rotating shallow water equations have a rigid lid. At the next order we find the stochastic and rotating version of the Great Lake equations \cite{camassa1996long,camassa1997long}, which can be interpreted as the rigid lid version of the Green-Naghdi equations. The deterministic versions of the Lake and Great Lake equations are both globally wellposed in time, as shown in \cite{levermore1996global,levermore1996global2}.
\bigskip

In Section \ref{sec:STL-smallwavelimit}, ``the short time - small wave scaling" of the Euler-Boussinesq equations with non negligible buoyancy will be considered. The results of the asymptotics obtained in this regime are quite different from those obtained in the previous section. In this regime, the Strouhal number is not unity and the asymptotics on the equations provides us with a set of equations which are not closed. The reader may refer to \cite{camassa1992dispersive} for the deterministic derivation of these equations and their relation to the Kadomtsev--Petviashvili equation. The corresponding asymptotic analysis on the Lagrangian does give a closed set of equations, though, as it results in a buoyant version of the Green--Naghdi equations. As it turns out, a variational derivation of equations for the free surface alone is not available. Hence, the corresponding Boussinesq type water wave equations are not available, unless model assumptions in the variational principle were to be changed. It will be shown that a hierarchy of stochastic Camassa--Holm equations can be derived from this point of view leading to the stochastic Korteweg--De Vries equation, as well.

In Section \ref{sec:Conclude} we summarise by diagramming the pathways which relate the sequences of approximations leading to the results obtained in this paper for each of the two families of nonlinear fluid wave equations.  

\section{Stochastic variational principle and averaging principle}\label{sec:StochVarPrinc}
Central to this work is the stochastic Euler-Poincar\'e variational principle, presented in \cite{de2020implications}, which is equivalent to the variational principle in \cite{holm2015variational}. However, the Euler-Poincar\'e variational principle uses prescribed variations, rather than variations induced by constraints used in \cite{holm2015variational}. The most general version of the Euler-Poincar\'e theorem is formulated on the Lie algebra of a semidirect product Lie group and uses the language of differential geometry and representation theory, which first appeared deterministically in \cite{holm1998euler}. For fluids, the group of interest is the diffeomorphism group, which is the space of differentiable maps whose inverse maps are equally differentiable. The group action is composition of functions. The group of diffeomorphisms is a suitable group for geometric mechanics in the sense of \cite{ebin1970groups}. In order to state the Euler-Poincar\'e theorem, we first need to introduce some notation. 

%\subsubsection*{Notation.}
\paragraph{Notation.}
The domain of interest for the paper is a three dimensional box with bathymetry specified by $h(x,y)$ and a free surface $\zeta(x,y,t)$, as illustrated in figure \ref{fig:domain} below.  The domain, which we will call $\Omega$, is a subset of $\mathbb{R}^3$ which is equipped with Cartesian coordinates. As we proceed, we will present the Euler-Poincar\'e theorem and its sequence of implications in $\mathbb{R}^3$ vector calculus, rather than using the more abstract differential geometric notation. 

Two dimensional and three dimensional objects will be distinguished by putting a subscript on the three dimensional objects, as follows
\begin{equation}
\mathbf{x}_3 = (\mathbf{x},z), \qquad \mathbf{u}_3 = (\mathbf{u},w), \qquad \nabla_3 = \left(\nabla, \frac{\partial}{\partial z}\right).
\end{equation} 
Here $\mathbf{x}_3$ denotes the coordinate system, $\mathbf{u}_3$ is the fluid vector field and $\nabla_3$ is the gradient. In oceanographic terms, we will work primarily with mesoscale dynamics, where the typical horizontal length scale $L$ is in the order of one hundred kilometres, or more, and the typical depth $H$ is four kilometres. Hence, the domain will be shallow. The rotation of the planet is included by introducing the vector potential $\mathbf{R}_3(\mathbf{x}_3)=(\mathbf{R}(\mathbf{x}),0)$ for the Coriolis parameter, $f(\mathbf{x})\hat{\mathbf{z}}$, so that
\begin{equation}
\nabla_3\times \mathbf{R}_3(\mathbf{x}) = f(\mathbf{x})\hat{\mathbf{z}}
\end{equation}
and $\hat{\mathbf{z}}$ is the unit vector in the vertical direction. 

\begin{figure}[H]
\centering
\includegraphics[width = .9\textwidth]{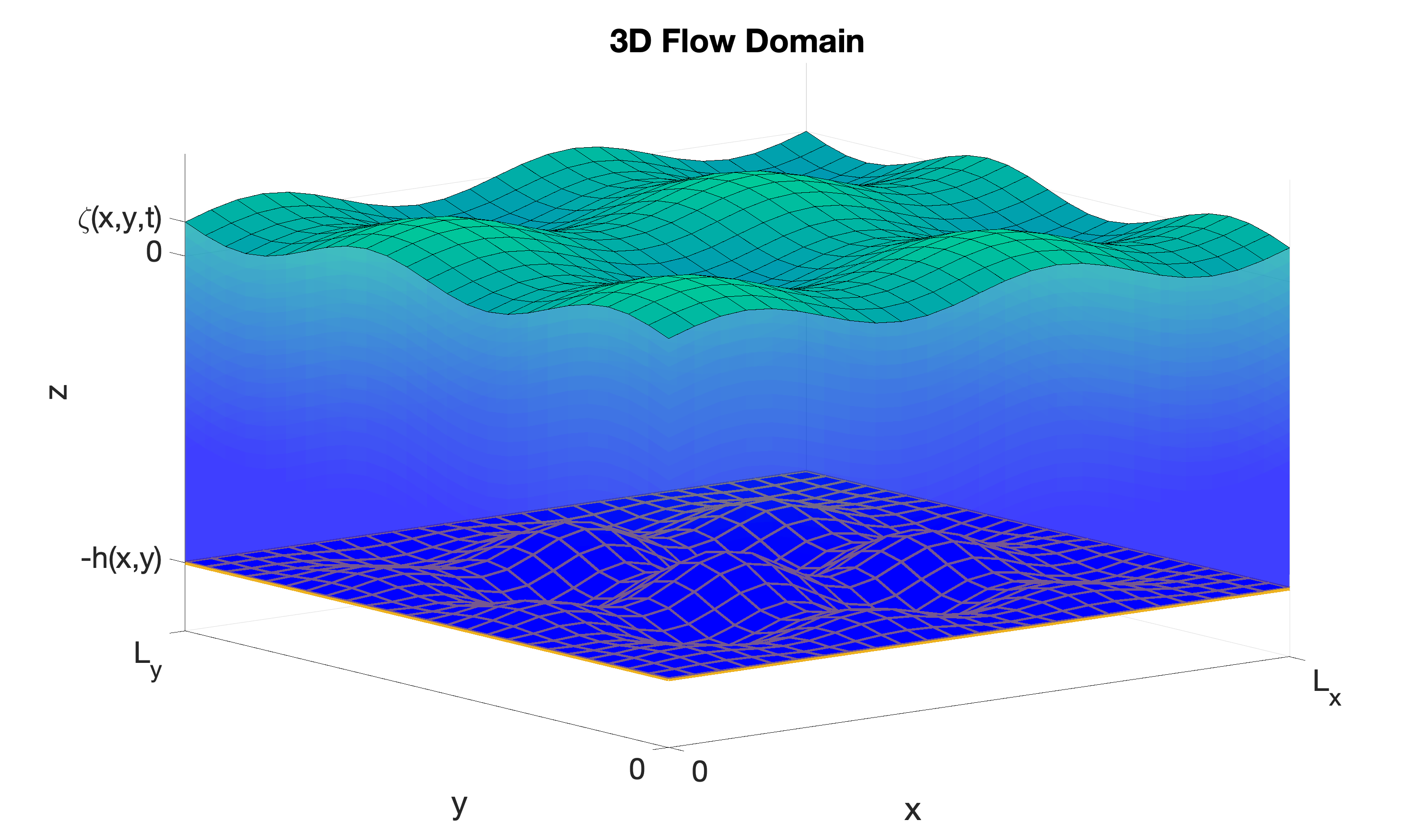}
\caption{The 3D flow domain, $\Omega$. The wavy green surface is the free surface $\zeta(x,y,t)$ and the wavy blue surface is the bathymetry $h(x,y)$. This figure is not to scale, as the horizontal length scale of an ocean domain is typically much larger than its height scale. In the paper we will assume that $L_x = L_y = L$.}
\label{fig:domain}
\end{figure} 

By $\mathfrak{X}(\Omega)$ we denote the space of vector fields over $\Omega$ and by $V^*$ we mean the abstract vector space of advected quantities, which are usually tensor fields of different degrees. In this paper, the elements of $V^*$ that we will consider are buoyancy $b$, which is a scalar function, the density $D$, which is a volume form and later in the two dimensional setting, we will consider the depth $\eta(\mathbf{x},t):= \zeta(\mathbf{x},t) + h(\mathbf{x})$, which is the volume form in that scenario. The stochastic vector fields which generate the Lagrangian transport in three and two dimensions are given, respectively, below. In this paper we will always work with the same stochastic basis, consisting of a set, a $\mathbb{P}$-complete $\sigma$-algebra, a probability measure $\mathbb{P}$ and a right-continuous filtration. All stochastic processes that we will consider are adapted to the filtration. The stochastic vector field for three dimensional transport is given by
\begin{equation}
{\sf d}\boldsymbol \chi_{3t} := \mathbf{u}_3(\mathbf{x}_3,t)dt + \sum_{i=1}^M \boldsymbol \xi_{3i}(\mathbf{x}_3)\circ dW_t^i\,.
\label{3Dnoise}
\end{equation}
The stochastic vector field $\boldsymbol \chi_{3t}$ in \eqref{3Dnoise} is an example of a semimartingale. 
\begin{definition}[Semimartingale]\label{def:smartingale}
A c\'adl\'ag process $Y$ is a semimartingale if it can be written as 
\[
Y_t = Y_0 + M_t + V_t
\,,\]
where $M$ is a c\'adl\'ag local martingale, $V$ is a c\'adl\'ag process of finite variation and $M_0 = V_0 = 0$.
\end{definition}
The adjective c\'adl\'ag stands for ``right continuous with a limit on the left''. The processes we will be considering will have continuous paths almost surely. Continuous semimartingales can be decomposed into a martingale part and a finite variation part uniquely. For more background on semimartingales and stochastic integration, see e.g. \cite{protter2005stochastic}. Semimartingales have several nice properties. In particular:
\begin{itemize}
\item For a suitably bounded predictable process $X$ and a semimartingale $Y$, the stochastic integral $\int X dY$ is again a semimartingale.
\item For a twice differentiable function $f$, the quantity $f(Y)$ is again a semimartingale.
\end{itemize} 
Note that the stochastic integral is of the Stratonovich type. This stochastic integral has the benefit that the stochastic calculus associated with it uses the usual chain rule and product rule, which are both required to derive the coordinate free stochastic Euler-Poincar\'e theorem. The Stratonovich integral relates to the It\^o integral via the following transformation. Let $X$ and $Y$ both be continuous semimartingales, then
\begin{equation}
\int_0^t X_s\circ dY_s = \int_0^t X_s dY_s + \frac{1}{2}\langle X,Y\rangle_t,
\end{equation}
where $\langle X,Y\rangle_t$ denotes the quadratic covariation process of $X$ and $Y$. In particular, the It\^o integral is an adapted process and also the quadratic covariation process is an adapted process. This means that also the Stratonovich integral is an adapted process. The two dimensional version of \eqref{3Dnoise} is given by
\begin{equation}
\label{eq:chi-defn}
{\sf d}\boldsymbol \chi_{t} := \mathbf{u}(\mathbf{x},t)dt + \sum_{i=1}^M \boldsymbol \xi_i(\mathbf{x})\circ dW_t^i.
\end{equation}
Hereafter, we will use Einstein's summation convention (summing repeated indices over their range) to keep the notation compact.
\paragraph{Boundary conditions.}
Having defined these vector fields for fluid transport, we can now specify the boundary conditions for the domain illustrated in figure \ref{fig:domain} above. One assumes that the free surface at the top is a Lagrangian surface, and that no fluid penetrates the bottom and vertical walls. Consequently, the following stochastic kinematic boundary conditions hold for the vertical velocity
\begin{equation}
wdt+\zh\cdot\boldsymbol \xi_{3i}\circ dW_t^i = {\sf d}\zeta + ({\sf d}\boldsymbol \chi_t\cdot\nabla)\zeta 
\quad \text{ at } z=\zeta(\mathbf{x},t), \quad\text{ and }\quad 
wdt+\zh\cdot\boldsymbol \xi_{3i}\circ dW_t^i = -({\sf d}\boldsymbol \chi_t \cdot \nabla)h \quad \text{ at } z = -h(\mathbf{x})\,.
\label{bc:kinematic}
\end{equation}
Here $\zh\cdot\boldsymbol \xi_{3i}$ selects the vertical component of the data vector fields. Since the stochastic flow does not penetrate the lateral boundaries, the horizontal velocity is taken to be tangential to the lateral boundaries
\begin{equation}
{\sf d}\boldsymbol \chi_{t}\cdot \mathbf{\hat{n}} = 0, \quad \text{ on any vertical lateral boundary,}
\label{bc:lateral}
\end{equation}
where $\mathbf{\hat{n}}$ is the unit vector normal to the lateral boundaries.
Finally, we assume the dynamic boundary condition for the pressure, namely,
\begin{equation}
p = 0 \qquad \text { at } \quad z = \zeta(\mathbf{x},t),
\label{bc:dynamic}
\end{equation}
or, alternatively, one can take $p=\zeta$ at $z=0$. This condition means that at the free surface the pressure is purely hydrostatic. In this formulation, surface tension has been neglected and the ambient pressure has been set to be zero at the surface. The lateral boundary condition is consistent with the incompressibility condition
\begin{equation}
\nabla_3\cdot{\sf d}\boldsymbol\chi_{3t} = 0.
\label{eq:incompressible}
\end{equation}
We want to be able to recover the deterministic fluid equations upon removing the stochastic terms in \eqref{3Dnoise} and \eqref{eq:chi-defn}, each of which is the sum of a deterministic vector and a stochastic vector. That is,  the stochastic fluid equations must return to the deterministic fluid equations when the noise term on the transport velocity is switched off. This type of consideration will be repeated as a `sanity check' throughout the paper. For example, this consideration requires that both terms in the transport vector field in \eqref{3Dnoise} must be divergence-free,
\begin{equation}
\nabla_3\cdot \mathbf{u}_3 = 0,  \text{ and } \nabla_3\cdot\boldsymbol \xi_i = 0, \hbox{ for all } i=1,\dots,M.
\label{eq:incompdiscussion}
\end{equation}
We will assume that the free surface and the pressure are both semimartingales. 

\subsection{Stochastic Euler-Poincar\'e theorem and averaging}

\paragraph{Variational derivatives of functionals.}
\begin{definition}[Functionals and functional derivatives]$\,$ \\
\noindent
A functional $F[\rho]$ is defined as a map  $F: \mathcal{B}\to \mathbb{R}$, where $\mathcal{B}$ is a Banach space. The variational derivative of a functional $F(\rho)$, denoted $\delta F/ \delta \rho$, is defined by
\begin{align}\label{eq:var-op-def}
\delta F[\rho]
:= \lim_{\varepsilon\to 0}\frac{F[\rho+\varepsilon \phi]-F[\rho]}{\varepsilon} 
=:  \frac{d}{d\varepsilon}F[\rho+\varepsilon \phi]\bigg|_{\epsilon=0}
=
 \int_\Omega \frac{\delta F}{\delta\rho}(x) \phi(x) \; dx 
 =: \left\langle \frac{\delta F}{\delta\rho}\,,\,\phi \right\rangle
\end{align}
where $\varepsilon\in \mathbb{R}$ is a real parameter, $\phi\in \mathcal{B}$  is arbitrary and the angle brackets $\langle\,\cdot\,,\,\cdot\,\rangle$ indicate $L^2$ pairing on $\mathcal{B}$. The derivative itself can be interpreted as a Fr\'echet derivative. The function $ \phi(x)$ above is called the `variation of $\rho$' and it will be denoted as $\delta \rho:= \phi(x) $. For notational convenience, we denote the functional derivative $\delta$ operationally as
\[
\delta F[\rho] = \left\langle \frac{\delta F}{\delta\rho}\,,\,\delta \rho \right\rangle.
\]
\end{definition}

\paragraph{Euler-Poincar\'e theorem.}
Given the boundary conditions and definitions above, the following form of the Euler-Poincar\'e theorem with stochastic variations provides the corresponding stochastic equations of motion derived from Hamilton's principle with a deterministic Lagrangian functional $\ell:\mathfrak{X}\times V^*\to \mathbb{R}$ defined on the domain of flow, $\Omega$. Here $\mathfrak{X}$ denotes the Lie algebra of smooth vector fields whose  action in three dimensional space by the Jacobi--Lie bracket is denoted as $[\,\cdot\,,\,\cdot]:\mathfrak{X}\times \mathfrak{X}\to \mathfrak{X}$, and is defined for $u,v\in \mathfrak{X}$ by the commutator relation, which in turn defines the \textit{minus} adjoint operator, ${\rm ad}$, given by
\begin{equation}\label{eq:ad-def}
\big[u,v\big] := \big( (\mathbf{u}_3\cdot\nabla_3)\mathbf{v}_3 
-
(\mathbf{v}_3\cdot\nabla_3)\mathbf{u}_3 \big)\cdot\nabla_3
=: - \,{\rm ad}_uv
\,.
\end{equation}
\begin{theorem}[Stochastic Euler-Poincar\'e equations \cite{holm2015variational,de2020implications}]
\label{thm:SEP}$\,$\\
The following two statements are equivalent:
\begin{enumerate}[i)]
\item Hamilton's variational principle in Eulerian coordinates, with $\mathbf{u}_3\in\mathfrak{X}(\Omega)$ and $b, D\in V^*(\Omega)$,\begin{equation}
\delta S := \delta\int_{t_1}^{t_2}\ell(\mathbf{u}_3,b,D) \,dt= 0,
\label{eq:hvp}
\end{equation}
holds on $\mathfrak{X}(\Omega)\times V^*$, upon using variations of the form
\begin{equation}
\delta \mathbf{u}_3 \,dt = {\sf d}\mathbf{v}_3 - [{\sf d}\boldsymbol \chi_{3t}, \mathbf{v}_3], \qquad \delta b \,dt = -(\mathbf{v}_3\cdot\nabla_3) b \,dt, \qquad \delta D\, dt= -\nabla_3\cdot(D\mathbf{v}_3)dt\,,
\label{eq:variations}
\end{equation}
where the arbitrary vector field $\mathbf{v}_3$ is a semimartingale.
\item The stochastic Euler-Poincar\'e equations hold. These equations are
\begin{equation}
{\sf d}\frac{\delta \ell}{\delta \mathbf{u}_3} 
+ ({\sf d}\boldsymbol \chi_{3t}\cdot\nabla_3)\frac{\delta\ell}{\delta \mathbf{u}_3} 
+ (\nabla_3{\sf d}\boldsymbol \chi_{3t})\cdot\frac{\delta\ell}{\delta \mathbf{u}_3} 
+ \frac{\delta\ell}{\delta \mathbf{u}_3} (\nabla_3\cdot{\sf d}\boldsymbol \chi_{3t})
= -\frac{\delta \ell}{\delta b}\nabla_3 b \,dt + D\nabla_3\frac{\delta\ell}{\delta D}dt
\label{eq:EPeq}
\end{equation}
or, equivalently,
\begin{equation}
{\sf d}\frac{\delta \ell}{\delta \mathbf{u}_3} 
- {\sf d}\boldsymbol \chi_{3t}\times \left(\nabla_3 \times  \frac{\delta\ell}{\delta \mathbf{u}_3} \right) 
+ \nabla_3 \left({\sf d}\boldsymbol \chi_{3t}\cdot\frac{\delta\ell}{\delta \mathbf{u}_3} \right)
+ \frac{\delta\ell}{\delta \mathbf{u}_3} (\nabla_3\cdot{\sf d}\boldsymbol \chi_{3t})
= -\frac{\delta \ell}{\delta b}\nabla_3 b\, dt + D\nabla_3\frac{\delta\ell}{\delta D}dt
\,,\label{eq:EPeq-Cartan}
\end{equation}
with advection equations 
\begin{align}
{\sf d}b = - \,{\sf d}\boldsymbol \chi_{3t} \cdot \nabla b
\quad\hbox{and}\quad
{\sf d}D = -\, \nabla_3\cdot(D{\sf d}\boldsymbol \chi_{3t})
\,.
\label{eq:EB-advectionLaws}
\end{align}
\end{enumerate}
\end{theorem}

\begin{remark}
The abstract statement of the Euler--Poincar\'e Theorem \ref{thm:SEP}, formulated on general semidirect product Lie groups, is presented in \cite{holm1998euler} deterministically and in \cite{holm2015variational,de2020implications} stochastically.
\end{remark}\medskip

\begin{remark}
In Theorem \ref{thm:SEP}, the operator $\delta$ in \eqref{eq:hvp} is the functional derivative defined in \eqref{eq:var-op-def}, the brackets $[\,\cdot\,,\,\cdot\,]$ denote the commutator of vector fields defined in \eqref{eq:ad-def}, and $\mathbf{v}_3 \in\mathfrak{X}(\Omega)$ is an arbitrary semimartingale vector field in three dimensions which vanishes at the endpoints in time, $t_1$ and $t_2$. Note that the stochasticity is introduced in the variation of the Eulerian velocity in \eqref{eq:variations}. This stochasticity in the variation is inherited from the stochasticity in the Lagrangian particle paths, as in \cite{holm2015variational}.
\end{remark}\medskip

\begin{remark}[Newton's Law interpretation of Euler-Poincar\'e equation \eqref{eq:EPeq}] $\,$\\
One may interpret the stochastic Euler-Poincar\'e equation \eqref {eq:EPeq} as the Newton's law of motion for a stochastic process. That is, the stochastic rate of change of the covector momentum $\mathbf{P}:=\delta\ell/\delta \mathbf{u}_3$ equals the sum of forces on the right hand side of equation \eqref{eq:EPeq}. Of course, when the stochasticity is removed from the vector field in \eqref{eq:chi-defn}, equation \eqref{eq:EPeq} recovers its deterministic version. 
\end{remark}

\begin{proof}
Hamilton's variational principle implies
\begin{align*}
0 &= \int_{t_1}^{t_2}\left[ \left\langle\frac{\delta\ell}{\delta\mathbf{u}_3},\delta\mathbf{u}_3 dt\right\rangle_\mathfrak{X} + \left\langle \frac{\delta\ell}{\delta b},\delta b dt\right\rangle_{V^*} + \left\langle\frac{\delta\ell}{\delta D},\delta D dt\right\rangle_{V^*}\right]
\\
&=  \int_{t_1}^{t_2}\left[ \left\langle\frac{\delta\ell}{\delta\mathbf{u}_3},{\sf d}\mathbf{v}_3 - [{\sf d}\boldsymbol \chi_{3t}, \mathbf{v}_3]\right\rangle_\mathfrak{X} + \left\langle \frac{\delta\ell}{\delta b},-(\mathbf{v}_3\cdot\nabla_3)b dt\right\rangle_{V^*} + \left\langle\frac{\delta\ell}{\delta D},-\nabla_3\cdot(D \mathbf{v}_3)dt\right\rangle_{V^*}\right]
\\
&= \int_{t_1}^{t_2}\left[ \left\langle -{\sf d}\frac{\delta\ell}{\delta \mathbf{u}_3}-({\sf d}\boldsymbol \chi_{3t}\cdot\nabla_3)\frac{\delta\ell}{\delta \mathbf{u}_3} - (\nabla_3{\sf d}\boldsymbol\chi_{3t})\cdot\frac{\delta\ell}{\delta \mathbf{u}_3} + \frac{\delta\ell}{\delta\mathbf{u}_3}(\nabla_3\cdot{\sf d}\boldsymbol \chi_{3t}),\mathbf{v}_3\right\rangle_\mathfrak{X} + \left\langle -\frac{\delta\ell}{\delta b}\nabla_3 b dt, \mathbf{v}_3\right\rangle_\mathfrak{X}\right.\\
&\quad \left.+ \left\langle D\nabla_3\frac{\delta\ell}{\delta D} dt, \mathbf{v}_3\right\rangle_\mathfrak{X} \right].
\end{align*}
The subscripts $\mathfrak{X}$ and $V^*$ on the $L^2$ pairings indicate over which space that the pairing is defined. Since the semimartingale $\mathbf{v}_3$ is arbitrary, except for vanishing at the endpoints $t_1$ and $t_2$ in time, the following equation holds,
\begin{align*}
{\sf d}\frac{\delta \ell}{\delta \mathbf{u}_3} + ({\sf d}\boldsymbol \chi_{3t}\cdot\nabla_3)\frac{\delta\ell}{\delta \mathbf{u}_3} + (\nabla_3{\sf d}\boldsymbol \chi_{3t})\cdot\frac{\delta\ell}{\delta \mathbf{u}_3} + \frac{\delta\ell}{\delta\mathbf{u}_3}(\nabla_3\cdot{\sf d}\boldsymbol \chi_{3t})= -\frac{\delta \ell}{\delta b}\nabla_3 b \,dt + D\nabla_3\frac{\delta\ell}{\delta D}\,dt.
\end{align*}
This finishes the proof of the stochastic Euler-Poincar\'e equation in \eqref{eq:EPeq}. The equivalent form in equation \eqref{eq:EPeq-Cartan} follows by means of a standard vector identity. 
\end{proof}

\subsection{Stochastic Kelvin--Noether circulation theorem}
A straight forward calculation combining equation \eqref{eq:EPeq} and the second advection equation in \eqref{eq:EB-advectionLaws} proves the following.
\begin{lemma}[Circulation form of the stochastic Euler-Poincar\'e equation \cite{holm2015variational,de2020implications}]\label{lemma:circulationform}
The stochastic Euler-Poincar\'e equation in \eqref{eq:EPeq} is equivalent to the following,
\begin{equation}
{\sf d}\left(\frac{1}{D}\frac{\delta \ell}{\delta \mathbf{u}_3}\right)
+ ({\sf d}\boldsymbol \chi_{3t}\cdot\nabla_3)\left(\frac{1}{D}\frac{\delta \ell}{\delta \mathbf{u}_3}\right)
+ (\nabla_3{\sf d}\boldsymbol \chi_{3t})\cdot\left(\frac{1}{D}\frac{\delta \ell}{\delta \mathbf{u}_3}\right)
= -\,\frac{1}{D}\frac{\delta \ell}{\delta b}\nabla_3 b \,dt + \nabla_3\frac{\delta\ell}{\delta D}dt
\,.\label{eq:EPeq-circ}
\end{equation}
\end{lemma}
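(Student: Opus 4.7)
The plan is to show directly that equation \eqref{eq:EPeq-circ} is obtained from \eqref{eq:EPeq} by dividing through by $D$ and then absorbing the factor $1/D$ into each of the transport operators on the left-hand side, with the algebraic bookkeeping controlled by the advection law ${\sf d}D = -\nabla_3\cdot(D{\sf d}\boldsymbol\chi_{3t})$. The key enabling fact is that the Stratonovich differential $\sf d$ obeys the ordinary Leibniz rule, so the computation is formally identical to its deterministic analogue.

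Concretely, writing $\mathbf{M} := \delta\ell/\delta \mathbf{u}_3$, I would first compute ${\sf d}(1/D)$ from the advection equation: since ${\sf d}(1/D) = -D^{-2}\,{\sf d}D = D^{-2}\nabla_3\cdot(D{\sf d}\boldsymbol\chi_{3t})$, expanding the divergence gives
\begin{equation*}
{\sf d}(1/D) = D^{-1}\bigl(\nabla_3\cdot {\sf d}\boldsymbol\chi_{3t}\bigr) + D^{-2}\bigl({\sf d}\boldsymbol\chi_{3t}\cdot\nabla_3 D\bigr).
\end{equation*}
Then by the Stratonovich product rule
\begin{equation*}
{\sf d}\!\left(\frac{\mathbf{M}}{D}\right) = \frac{1}{D}\,{\sf d}\mathbf{M} + \mathbf{M}\,{\sf d}(1/D) = \frac{1}{D}\,{\sf d}\mathbf{M} + \frac{\mathbf{M}}{D}\bigl(\nabla_3\cdot {\sf d}\boldsymbol\chi_{3t}\bigr) + \frac{\mathbf{M}}{D^2}\bigl({\sf d}\boldsymbol\chi_{3t}\cdot\nabla_3 D\bigr).
\end{equation*}
Separately, the ordinary Leibniz rule gives
\begin{equation*}
({\sf d}\boldsymbol\chi_{3t}\cdot\nabla_3)\!\left(\frac{\mathbf{M}}{D}\right) = \frac{1}{D}({\sf d}\boldsymbol\chi_{3t}\cdot\nabla_3)\mathbf{M} - \frac{\mathbf{M}}{D^2}\bigl({\sf d}\boldsymbol\chi_{3t}\cdot\nabla_3 D\bigr),
\end{equation*}
and $(\nabla_3{\sf d}\boldsymbol\chi_{3t})\cdot(\mathbf{M}/D) = D^{-1}(\nabla_3{\sf d}\boldsymbol\chi_{3t})\cdot\mathbf{M}$ since $\nabla_3\,{\sf d}\boldsymbol\chi_{3t}$ does not act on $1/D$.

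Adding these three lines, the two terms proportional to $D^{-2}{\sf d}\boldsymbol\chi_{3t}\cdot\nabla_3 D$ cancel, leaving exactly $D^{-1}$ times the left-hand side of \eqref{eq:EPeq}, which equals $-D^{-1}(\delta\ell/\delta b)\nabla_3 b\,dt + \nabla_3(\delta\ell/\delta D)\,dt$; this is precisely the right-hand side of \eqref{eq:EPeq-circ}. The reverse direction follows by multiplying \eqref{eq:EPeq-circ} by $D$ and running the same identities backwards.

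The only subtlety, and the one step I would double-check, is the use of the Leibniz rule for ${\sf d}(\mathbf{M}/D)$ when $\mathbf{M}$ and $D$ are both semimartingales: this is legitimate because Stratonovich calculus preserves the classical chain/product rules, so no It\^o correction arises. Given the advection equations in \eqref{eq:EB-advectionLaws}, both $D$ and $1/D$ are semimartingales with continuous paths (assuming $D$ stays bounded away from zero, which is implicit in writing $1/D$), and the smooth vector field $\delta\ell/\delta \mathbf{u}_3$ inherits semimartingale behaviour along the flow. So the Stratonovich product rule applies and the calculation above is valid, completing the proof.
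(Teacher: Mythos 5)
Your proof is correct and follows exactly the route the paper indicates: the paper dismisses this as ``a straightforward calculation combining equation \eqref{eq:EPeq} and the second advection equation in \eqref{eq:EB-advectionLaws}'', and your computation of ${\sf d}(1/D)$ from the continuity equation, the Stratonovich product rule, and the cancellation of the $D^{-2}({\sf d}\boldsymbol\chi_{3t}\cdot\nabla_3 D)$ terms is precisely that calculation made explicit. Your closing remark on the validity of the Leibniz rule for Stratonovich semimartingales (and the implicit assumption $D>0$) is a worthwhile addition the paper leaves unstated.
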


One of the main benefits of Theorem \ref{thm:SEP} is that its stochastic Euler--Poincar\'e equations satisfy the following Kelvin circulation theorem.
\begin{theorem}[Stochastic Kelvin--Noether circulation theorem \cite{holm2015variational,de2020implications}] \label{thm:Kelvin}
For an arbitrary loop $c({\sf d}\boldsymbol \chi_{3t})$ which is advected by the stochastic velocity field ${\sf d}\boldsymbol \chi_{3t}$, the following circulation dynamics holds
\begin{equation}\label{eq:EPeq-circ-thm}
{\sf I} := \oint_{c({\sf d}\boldsymbol \chi_{3t})}\frac{1}{D}\frac{\delta\ell}{\delta \mathbf{u}_3}\cdot d\mathbf{x}_3 
\,,\qquad
{\sf d} {\sf I}
= -\oint_{c({\sf d}\boldsymbol \chi_{3t})} \left(\frac{1}{D}\frac{\delta\ell}{\delta b}\right)\nabla_3 b\cdot d\mathbf{x}_3 \,dt\,.
\end{equation}
\end{theorem}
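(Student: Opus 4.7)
The plan is to reduce the Kelvin--Noether identity to the circulation-form equation of Lemma \ref{lemma:circulationform} by transporting the moving loop to a fixed reference curve along the stochastic flow, differentiating there, and interpreting the result. Let $\phi_t$ denote the stochastic flow generated by the Stratonovich SDE ${\sf d}\phi_t(\mathbf{x}_0)={\sf d}\boldsymbol\chi_{3t}(\phi_t(\mathbf{x}_0))$, so that $c(t)=\phi_t(c_0)$ for the fixed initial loop $c_0=c(t_1)$, parametrised by $s\in[0,1]$. Writing $\mathbf{M}:=D^{-1}\delta\ell/\delta\mathbf{u}_3$, the circulation is rewritten as
\[
{\sf I}(t)=\oint_{c(t)}\mathbf{M}\cdot d\mathbf{x}_3=\oint_{c_0}\big(\mathbf{M}\circ\phi_t\big)\cdot\frac{\partial\phi_t}{\partial s}\,ds.
\]

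Because $c_0$ is fixed in $t$, I can push ${\sf d}$ inside the integral. Stratonovich calculus preserves the ordinary chain rule, so differentiating the integrand along the flow yields two contributions: a material-type derivative of $\mathbf{M}$, giving ${\sf d}\mathbf{M}+({\sf d}\boldsymbol\chi_{3t}\cdot\nabla_3)\mathbf{M}$; and a deformation of the tangent vector, $\partial_s({\sf d}\phi_t)=(\nabla_3{\sf d}\boldsymbol\chi_{3t})\cdot\partial_s\phi_t$, which when paired with $\mathbf{M}$ produces $(\nabla_3{\sf d}\boldsymbol\chi_{3t})\cdot\mathbf{M}$. These are precisely the three terms on the left-hand side of \eqref{eq:EPeq-circ}. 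Substituting that equation therefore replaces the integrand by its right-hand side, namely $-D^{-1}(\delta\ell/\delta b)\nabla_3 b\,dt+\nabla_3(\delta\ell/\delta D)\,dt$.

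To finish, I observe that $\nabla_3(\delta\ell/\delta D)$ is an exact gradient, so its integral around the closed curve $c(t)$ vanishes identically. Pulling the remaining term back from $c_0$ to $c(t)$ via $\phi_t$ (equivalently: dropping the pullback since the identity already holds pointwise along the advected loop) delivers the announced formula
\[
{\sf d}{\sf I}=-\oint_{c({\sf d}\boldsymbol\chi_{3t})}\left(\frac{1}{D}\frac{\delta\ell}{\delta b}\right)\nabla_3 b\cdot d\mathbf{x}_3\,dt.
\]

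The main obstacle is the tangent-vector piece of the differentiation step, i.e.\ justifying the deformation identity $\partial_s({\sf d}\phi_t)=(\nabla_3{\sf d}\boldsymbol\chi_{3t})\cdot\partial_s\phi_t$ in the Stratonovich sense. This is the geometric Stratonovich--Kunita--It\^o--Wentzell statement that the pullback of a $1$-form by the stochastic flow satisfies ${\sf d}(\phi_t^*\alpha)=\phi_t^*\big({\sf d}\alpha+\mathcal{L}_{{\sf d}\boldsymbol\chi_{3t}}\alpha\big)$, and it requires enough regularity on the $\boldsymbol\xi_i$ that $\phi_t$ is a flow of $C^1$-diffeomorphisms. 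Granting that analytic hypothesis (standard throughout the SALT literature cited above), the rest of the argument is the purely algebraic matching between the transport operator on the left-hand side of \eqref{eq:EPeq-circ} and the Lie derivative $\mathcal{L}_{{\sf d}\boldsymbol\chi_{3t}}$ acting on the $1$-form $\mathbf{M}\cdot d\mathbf{x}_3$.
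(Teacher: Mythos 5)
Your argument is correct and follows essentially the same route as the paper: pull the loop integral back to the fixed initial curve along the stochastic flow, differentiate inside using the Stratonovich chain rule (the Kunita--It\^o--Wentzell pullback identity for $1$-forms, which the paper delegates to \cite{de2019implications}), substitute the circulation-form Euler--Poincar\'e equation \eqref{eq:EPeq-circ}, and kill the exact-gradient term by the fundamental theorem of calculus on the closed loop. Your version merely makes the $s$-parametrisation and the tangent-vector deformation step explicit where the paper states them in words.
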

\begin{proof}
The Kelvin circulation law \eqref{eq:EPeq-circ-thm} follows from Newton's law of motion obtained from the stochastic Euler-Poincar\'e equation \eqref {eq:EPeq-circ} for the evolution of momentum/mass $D^{-1}\delta\ell/\delta \mathbf{u}_3$ concentrated on an advecting material loop, $c({\sf d}\boldsymbol \chi_{3t}) = \phi_t c(0)$, where $\phi_t$ is the stochastic flow map generated by the stochastic vector field ${\sf d}\boldsymbol \chi_{3t}$ defined in equation \eqref{3Dnoise}. Upon changing variables to pull back the integrand to its initial position, the stochastic differential can be moved inside and the Kunita-It\^o-Wentzell formula may be applied \cite{de2020implications}. Then by inverting the pull-back we have the following
\begin{align*}
{\sf d}\oint_{c({\sf d}\boldsymbol \chi_{3t})}\frac{1}{D}\frac{\delta \ell}{\delta \mathbf{u}_3}\cdot d\mathbf{x}_3 
&= \oint_{c({\sf d}\boldsymbol \chi_{3t})} ({\sf d} + {\sf d}\boldsymbol\chi_{3t}\cdot\nabla_3 
+ (\nabla_3{\sf d}\boldsymbol\chi_{3t})\cdot)\left(\frac{1}{D}\frac{\delta \ell}{\delta \mathbf{u}_3}\right)\cdot d\mathbf{x}_3
\\
&= 
-\oint_{c({\sf d}\boldsymbol \chi_{3t})} \frac{1}{D}\frac{\delta \ell}{\delta b}\nabla_3 b \,dt\cdot d\mathbf{x}_3 
+ \oint_{c({\sf d}\boldsymbol \chi_{3t})}\nabla_3 \frac{\delta\ell}{\delta D} \cdot d\mathbf{x}_3\,dt
\\
&=  
-\oint_{c({\sf d}\boldsymbol \chi_{3t})} \left(\frac{1}{D}\frac{\delta \ell}{\delta b}\right)\nabla_3 b \cdot d\mathbf{x}_3 \,dt\,.
\end{align*}
In the second line we have used the Euler-Poincar\'e equation \eqref{eq:EPeq} and the advection equation for the density. The last step applies the fundamental theorem of calculus to show vanishing of the last loop integral in the second line. For the corresponding proof in the deterministic case, see \cite{holm1998euler}.
For detailed discussion of pull-back by stochastic flow maps, see \cite{de2020implications}.
\end{proof}

\begin{corollary}[Generation of circulation, ${\sf I}$]\label{SKN-Stokes}$\,$\\
By Stokes Law, equation  \eqref{eq:circLaw} in the stochastic Kelvin--Noether circulation theorem \ref{thm:Kelvin} implies 
\begin{equation}\label{eq:circLaw}
{\sf d} {\sf I} = 
-\int\!\!\int_{\partial S = c({\sf d}\boldsymbol \chi_{3t})} 
\nabla_3\left(\frac{1}{D}\frac{\delta\ell}{\delta b}\right)\times\nabla_3 b\cdot d\mathbf{S}_3 \,dt\,.
\end{equation}
Therefore, circulation is created by misalignment of the gradients of buoyancy $b$ and its dual quantity $D^{-1}\delta\ell/\delta b$.
\end{corollary}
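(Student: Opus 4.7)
The strategy is essentially a one–line application of Stokes' theorem to the loop integral appearing in Theorem \ref{thm:Kelvin}, combined with the elementary vector identity $\nabla_3\times(f\,\nabla_3 g) = \nabla_3 f\times\nabla_3 g$. No additional stochastic machinery is needed because the identity $\mathsf{d}\mathsf{I} = -\oint_{c(\mathsf{d}\boldsymbol\chi_{3t})}(D^{-1}\delta\ell/\delta b)\,\nabla_3 b\cdot d\mathbf{x}_3\,dt$ already isolates the noise in the $\mathsf{d}\mathsf{I}$ symbol and the evolving loop, while the spatial integrand to be manipulated is a classical $1$-form on $\Omega$ at the frozen time $t$.

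\textbf{Step 1.} Fix $t$ and choose any smooth oriented surface $S\subset\Omega$ whose boundary is the (instantaneously frozen) advected loop, so that $\partial S = c(\mathsf{d}\boldsymbol\chi_{3t})$. Identify the integrand of the right–hand side of \eqref{eq:EPeq-circ-thm} as the vector field
\[
\mathbf{F} := \left(\frac{1}{D}\frac{\delta\ell}{\delta b}\right)\nabla_3 b,
\]
which is of the form $f\,\nabla_3 g$ with $f = D^{-1}\delta\ell/\delta b$ and $g = b$, both evaluated at time $t$.

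\textbf{Step 2.} Apply the classical Stokes theorem in $\mathbb{R}^3$ to replace the line integral by a surface integral of $\nabla_3\times \mathbf{F}$ over $S$. Using the product rule for curl together with $\nabla_3\times\nabla_3 g = 0$, compute
\[
\nabla_3\times\bigl(f\,\nabla_3 g\bigr) \;=\; \nabla_3 f \times \nabla_3 g \;+\; f\,\nabla_3\times\nabla_3 g \;=\; \nabla_3\!\left(\frac{1}{D}\frac{\delta\ell}{\delta b}\right)\times\nabla_3 b .
\]
Substituting into the circulation law of Theorem \ref{thm:Kelvin} then yields exactly \eqref{eq:circLaw}. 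The geometric interpretation — that circulation is generated precisely where the two gradients fail to be parallel — is read off directly from the cross product in the integrand: the right-hand side vanishes identically when $\nabla_3(D^{-1}\delta\ell/\delta b)\parallel\nabla_3 b$.

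\textbf{Anticipated difficulty.} There is no genuine analytical obstacle, only a bookkeeping one: one must be careful that Stokes' theorem is being applied to the spatial $1$-form at a fixed time, not to the semimartingale $\mathsf{d}\mathsf{I}$ itself, and that the surface $S$ spanning the stochastic loop may be chosen freely since $\nabla_3\times\mathbf{F}$ is independent of that choice. Provided the integrands are sufficiently regular and $S$ lies in $\Omega$ (so that $f$ and $g$ are smooth on $S$), the identity follows immediately. The only modelling caveat worth flagging is that the formula presumes enough regularity of $D^{-1}\delta\ell/\delta b$ and $b$ to permit a classical Stokes application pathwise in $\omega$; this is consistent with the semimartingale hypotheses already adopted in the paper.
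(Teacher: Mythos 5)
Your proof is correct and is precisely the argument the paper intends: the corollary is stated without an explicit proof because it follows immediately from Theorem \ref{thm:Kelvin} by applying Stokes' theorem pathwise at fixed time together with the identity $\nabla_3\times(f\,\nabla_3 g)=\nabla_3 f\times\nabla_3 g$. Your additional remarks on the independence of the choice of spanning surface and on the regularity needed for a pathwise classical Stokes application are sensible and consistent with the paper's semimartingale hypotheses.
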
\medskip

\begin{remark}[A mechanism for cyclogenesis]
Formula \eqref{eq:circLaw} expresses the mechanism for generation of circulation (i.e., convection) driven by misalignment of certain potential gradients with gradients of scalar advected fluid quantities such as the buoyancy, $b$. In particular, formula \eqref{eq:circLaw} is the fundamental mechanism for generation of circulation or convection by wave-current interaction in stratified fluids. For the vertically averaged stratified fluid models treated later in the present paper, this formula will express a barotropic mechanism for generating horizontal circulation by misalignment of horizontal gradients of certain barotropic fluid quantities (such as wave elevation or bottom topography) with the horizontal gradient of vertically averaged buoyancy. 
\end{remark}

In three dimensional stochastic fluid dynamics, the Lagrangian in the Euler-Poincar\'e theorem is a functional defined over the volume of flow which, as we will discuss below, involves the kinetic energy density of the fluid relative to the rotating frame and the potential energy density. Our aims in the remainder of the paper are to combine asymptotic expansions and vertical averaging with the stochastic Euler-Poincar\'e variational theorem to formulate a new approach for developing stochastic parametrisation methods. To achieve these aims, we will apply asymptotic expansions in a vertically averaged  (barotropic) stochastic Euler-Poincar\'e variational principle. For this purpose, we will apply asymptotic expansions to the nondimensionalised Lagrangian for 3D incompressible flows of a stratified and rotating Euler fluid, then evaluate the vertical integral at an appropriate order in the expansion and finally use the Euler-Poincar\'e theorem to derive the equations of motion and advection we seek. We will then analyse and discuss their solution properties from the viewpoints of Newton's laws of motion and the Kelvin--Noether circulation theorem. We will also discuss the conservation laws for these equations.

\subsection{Nondimensionalising the Lagrangian}
The dimensional form of the Lagrangian in Hamilton's principle for the rotating, stratified Euler equations (rsE) is given by
\begin{equation}
\ell_{rsE}(\mathbf{u}_3,b,\zeta,D) := \int_\Omega \rho_0 (1+b)\left(\frac{1}{2}|\mathbf{u}|^2 + \frac{1}{2}w^2 + \mathbf{u}\cdot \mathbf{R} - gz\right)D\,dx\,dy\,dz.
\label{lag:rsE}
\end{equation}
Here, $\rho_0$ represents the reference density and $g$ represents gravity. The ocean has quite a few small dimensionless numbers which can be used to simplify the rsE Lagrangian and will allow one to access a hierarchy of simplified models. In particular, we want to derive the Lagrangian for the Euler-Boussinesq equations, which requires assumptions on the smallness of buoyancy, in terms of the Rossby number. To derive the equations of motion associated to the Lagrangian, we introduce the following action
\begin{equation}
S_{rsE} = \int_{t_1}^{t_2} \ell_{rsE} \,dt - \langle {\sf d}p, D-1\rangle =: \int_{t_1}^{t_2} c\ell_{rsE},
\label{action:rsE}
\end{equation} 
where ${\sf d}p$ is the Lagrange multiplier that enforces the density ratio $D$ to be equal to one, the times $0\leq t_1 \leq t_2$ are arbitrary, and the angle brackets refer to the $L^2$ pairing over the domain $\Omega$. The notation $c\ell_{rsE}$ refers to constrained Lagrangian and is introduced to keep the notation similar to the stochastic Euler-Poincar\'e theorem \ref{thm:SEP}. This constraint implies incompressibility and is required because it affects the measure $D\,dx\,dy\,dz$ in the Lagrangian. The treatment of the stochastic pressure is explained in the following remark.
\bigskip

\begin{remark}[Semimartingale pressure]
At this point one recognises a departure from the stochastic Euler-Poincar\'e equations without constraints derived in the Euler--Poincar\'e theorem \ref{thm:SEP}. Namely, we have written the Lagrange multiplier ${\sf d}p$ which imposes the constraint $D-1=0$. The notation stresses that ${\sf d}p$ is imposing a constraint that is stochastic. Now, setting $D=1$ in the advection equation for $D$ by the stochastic vector field ${\sf d}\boldsymbol \chi_{3t}$ implies that $\nabla_3\cdot( {\sf d}\boldsymbol \chi_{3t})=0$. Following the discussion leading to \eqref{eq:incompdiscussion}, this in turn must also imply $\nabla_3\cdot\mathbf{u}_3=0$. By its definition in \eqref{eq:chi-defn}, the quantity $\boldsymbol \chi_{3t}$ is a semimartingale. Therefore, accounting for both the deterministic and stochastic parts of the motion equation in \eqref{eq:SEB} will require that the pressure ${\sf d}p$ must also be a semimartingale, hence the notation. The point is that the semimartingale $D$ cannot be enforced to be a constant by a deterministic Lagrange multiplier. The Lagrange multiplier must also be obtained from a semimartingale equation. In the present case, this can be accomplished by acknowledging that the pressure is a semimartingale and writing its contribution in the motion equation as ${\sf d}p$, in a notation which implies a sum of both Lebesque and stochastic time integrations. Then, upon imposing the consequence of $D=1$ in the form $\nabla_3\cdot \mathbf{u}_3=0$ we find a semimartingale Poisson equation for ${\sf d}p$ which encompasses both the deterministic and stochastic parts of the constrained motion equation. Finally, the time integration of the solution of the Poisson equation for ${\sf d}p$ determines the semimartingale $p$. For a treatment of general semimartingale driven variational principles, see \cite{street2020semi}.
\end{remark}
\bigskip

The nondimensional versions of all the relevant variables and parameters are given below,
\begin{framed}
\begin{equation}
\begin{matrix}
\mathbf{x}_3 = L(\mathbf{x}', \sigma z'), & \mathbf{u}_3 = U(\mathbf{u}', \sigma w'), &\nabla_3 = \dfrac{1}{L}\left(\nabla', \dfrac{1}{\sigma}\dfrac{\partial}{\partial z'}\right), & t = Tt', & W_t = \dfrac{1}{\sqrt{T}}W_{t'},
\\[10pt] 
h = Hh', & \zeta = \alpha H\zeta',& \mathbf{R} = f_0 L \mathbf{R}', &  \rho = \rho_0 \rho', & {\sf d}p = \rho_0 gH {\sf d}p',
\\[10pt]
\sigma = \dfrac{H}{L}, & \alpha = \dfrac{\zeta_0}{H}, & {\rm Fr} = \dfrac{U}{\sqrt{gH}}, & {\rm Ro} = \dfrac{U}{f_0 L}, & {\rm Sr} = \dfrac{L}{UT}.
\end{matrix}
\label{tab:scaling}
\end{equation}
\end{framed}
Here $L$ denotes the horizontal scale, $H$ is the vertical scale, $U$ is the typical horizontal velocity, $f_0$ is the rotation frequency, $\zeta_0$ is the typical free surface amplitude and $T$ is the time scale. The dimensionless numbers in the bottom row are, respectively, the aspect ratio $\sigma$, the wave amplitude $\alpha$, the Froude number ${\rm Fr}$, the Rossby number ${\rm Ro}$ and the Strouhal number ${\rm Sr}$. Note that we have also scaled the Brownian motion so that in the nondimensional setting, the noise is again a standard Brownian motion. The dimensional factor that arises can be absorbed into the $\boldsymbol \xi_{3i}$ for each $i$. The vertical component of the data vector fields $\boldsymbol \xi_{3i}$ is scaled with the aspect ratio as well, that is $\boldsymbol \xi_{3i} = (\boldsymbol \xi_i',\sigma\zh\cdot\boldsymbol \xi_{3i}')$. In particular, this means that we can write
\begin{equation}
{\sf d}\boldsymbol \chi_{3t} = U({\sf d}\boldsymbol \chi_t', \sigma \zh\cdot{\sf d}\boldsymbol \chi_{3t}').
\end{equation} 
We do not make any assumptions on the size of the data vector fields relative to the velocity field itself. Last, but not least, the stratification parameter $\mathfrak{s}$ is introduced. Since the buoyancy is already dimensionless, it does not appear in the table, but it works as follows
\begin{equation}
b = \mathfrak{s}b'.
\end{equation} 
The purpose of the stratification parameter is to make sure that the buoyancy variable $b$ is an order $\mathcal{O}(1)$ variable. By controlling the size of the stratification parameter $\mathfrak{s}$, the Boussinesq approximation can be introduced. The nondimensional rsE Lagrangian is obtained by substituting \eqref{tab:scaling} into \eqref{lag:rsE} and dropping the primes, which yields
\begin{equation}
\ell_{rsE}(\mathbf{u}_3,b,D) = \int_{\Omega} (1+\mathfrak{s}b)\left(\frac{1}{2}|\mathbf{u}|^2 + \frac{\sigma^2}{2}w^2 + \frac{1}{{\rm Ro}}\mathbf{u}\cdot\mathbf{R}- \frac{1}{{\rm Fr}^2}z\right) D\,dx\,dy\,dz,
\label{DimensionlessEuler}
\end{equation} 
and the dimensionless action is given by
\begin{equation}
S_{rsE} = \int_{t_1}^{t_2}\ell_{rsE}\,dt - \left\langle\frac{1}{{\rm Fr}^2}{\sf d}p,D-1\right\rangle =: \int_{t_1}^{t_2}c\ell_{rsE}.
\end{equation}
In the ocean, the horizontal scale $L$ is of the order of hundreds of kilometres, whereas the vertical scale $H$ is typically about four kilometres. The free surface amplitude is five metres and the horizontal velocity is about a tenth of a metre per second. Hence the aspect ratio $\sigma \ll 1$, the wave amplitude $\alpha \ll 1$ and the Froude number ${\rm Fr}\ll1$. The Rossby number at these scales is also small, ${\rm Ro}\ll 1$. Also, the buoyancy stratification is weak, which allows us to apply the Boussinesq approximation. This approximation corresponds to $\mathfrak{s}\ll 1$, that is, requiring the stratification parameter to be small. When the dimensionless numbers satisfy $\mathcal{O}(\alpha) = \mathcal{O}(\mathfrak{s}) = \mathcal{O}({\rm Ro}) = \mathcal{O}({\rm Fr}) = \mathcal{O}(\sigma^2)$, the Lagrangian can be approximated. Consequently, the rsE Lagrangian simplifies, as the remaining effect of buoyancy is restricted to the potential energy term. This yields the Euler-Boussinesq (EB) Lagrangian, given by
\begin{equation}
\ell_{EB}(\mathbf{u}_3,b,D) = \int_{\Omega} \left(\frac{1}{2}|\mathbf{u}|^2 + \frac{\sigma^2}{2}w^2 + \frac{1}{{\rm Ro}}\mathbf{u}\cdot\mathbf{R} - \frac{1}{{\rm Fr}^2}(1+\mathfrak{s}b)z\right)D\,dx\,dy\,dz.
\label{lag:EB}
\end{equation}
The Euler-Boussinesq equations are obtained by applying the Euler-Poincar\'e theorem to the action obtained by taking the Lagrangian in \eqref{lag:EB} with the pressure constraint, as in \eqref{action:rsE}. The action for the EB equations is then given by 
\begin{equation}
S_{EB} = \int_{t_1}^{t_2}\ell_{EB} \, dt - \left\langle \frac{1}{{\rm Fr}^2}{\sf d}p, D-1\right\rangle =: \int_{t_1}^{t_2} c\ell_{EB}.
\label{action:EB}
\end{equation} 
Besides assuming the buoyancy is small, we will assume that the variations of the Coriolis parameter and of the bathymetry profile are also small, of order $O({\rm Ro})$,
\begin{equation}
f(\mathbf{x}) = 1 + {\rm Ro}\, f_1(\mathbf{x}), \qquad h(\mathbf{x}) = 1 + {\rm Ro}\, h_1(\mathbf{x}).
\label{as:bathymetry}
\end{equation}
These assumptions are made because they are consistent with the assumptions for quasi-geostrophy. The Lagrangian of interest in \eqref{lag:EB} is in dimensionless form, but the constraints in theorem \ref{thm:SEP} are still dimensional. Since $\mathbf{v}_3$ is arbitrary, multiplying it by some constant does not change its arbitrary nature. Hence, besides the $\delta\mathbf{u}_3$ constraint, nothing changes upon nondimensionalisation. As said earlier, the $\delta\mathbf{u}_3$ variational constraint does change, as follows,
\begin{equation}
\delta\mathbf{u}_3 dt = {\rm Sr}\,{\sf d}\mathbf{v}_3 - [{\sf d}\boldsymbol \chi_{3t},\mathbf{v}_3].
\label{eq:nondimconstraint}
\end{equation}
Time does not appear explicitly anywhere in the rsE and EB Lagrangians. Thus, the Strouhal number has not appeared before; but time rescaling has a significant impact on the behaviour of the model. In \eqref{eq:nondimconstraint} one can see that if the Strouhal number is not unity, advection will no longer be balanced. This observation will be crucial later, when we look at the short time limit. So far, we have obtained a theorem which, for a certain deterministic Lagrangian for three-dimensional fluids, provides us with the corresponding stochastic equations. By explicitly evaluating the vertical integral, when possible, in that theorem, we have a systematic way to obtain the vertically averaged version of the three dimensional fluid equations of interest. We also have introduced a general nondimensionalisation and identified the scales in the problem which determine the small dimensionless numbers in the ocean. Now, an application of theorem \ref{thm:SEP} to the EB Lagrangian \eqref{action:EB}, with variations given by
\begin{equation}
\begin{aligned}
\frac{\delta c\ell_{EB}}{\delta\mathbf{u}} &= D\left(\mathbf{u} + \frac{1}{{\rm Ro}}\textbf{R}\right),\\
\frac{\delta c\ell_{EB}}{\delta w} &= \sigma^2 D w,\\
\frac{\delta c\ell_{EB}}{\delta D} &= \frac{1}{2}|\mathbf{u}|^2 + \frac{\sigma^2}{2}w^2 + \frac{1}{{\rm Ro}}\mathbf{u}\cdot\mathbf{R} - \frac{1}{{\rm Fr}^2}(1+\mathfrak{s}b)z - \frac{1}{{\rm Fr}^2}{\sf d}p,\\
\frac{\delta c\ell_{EB}}{\delta b} &= -\frac{\mathfrak{s}}{{\rm Fr}^2}Dz,\\
\frac{\delta c\ell_{EB}}{\delta {\sf d}p} &= \frac{1}{{\rm Fr}^2} (D-1).
\end{aligned}
\end{equation} 
implies the following stochastic Euler-Poincar\'e equations in circulation form (see lemma \ref{lemma:circulationform})
\begin{equation}
\begin{aligned}
{\rm Sr}\,{\sf d}\mathbf{u} + ({\sf d}\boldsymbol \chi_{3t}\cdot\nabla_3)\mathbf{u} + (\nabla\boldsymbol\xi_{3i})\cdot\mathbf{u}_3\circ dW_t^i &= -\frac{1}{{\rm Fr}^2} \nabla {\sf d}p  - \frac{1}{{\rm Ro}}f\hat{\mathbf{z}}\times{\sf d}\boldsymbol \chi_t - \frac{1}{{\rm Ro}}\nabla(\boldsymbol \xi_i\cdot \mathbf{R})\circ dW_t^i,\\
\sigma^2\left({\rm Sr}\,{\sf d}w + ({\sf d}\boldsymbol \chi_{3t}\cdot\nabla_3)w+ \Big(\frac{\partial}{\partial z}\boldsymbol \xi_{3i}\Big)\cdot\bu_3\circ dW_t^i\right) &= -\frac{1}{{\rm Fr}^2}\frac{\partial}{\partial z}{\sf d}p + \frac{1}{{\rm Fr}^2}(1+\mathfrak{s}b)\,dt,\\
{\rm Sr}\,{\sf d}b + ({\sf d}\boldsymbol \chi_{3t}\cdot\nabla_3)b &= 0,\\
\nabla_3\cdot( {\sf d}\boldsymbol \chi_{3t}) & = 0.
\end{aligned}
\label{eq:SEB}
\end{equation}
The Euler-Boussinesq equations satisfy the following Kelvin circulation theorem, for any closed loop $c({\sf d}\boldsymbol \chi_{3t})$ which is advected with the stochastic velocity ${\sf d}\boldsymbol \chi_{3t}$ in equation \eqref{3Dnoise}, 
\begin{equation}
\begin{aligned}
{\rm Sr}\,{\sf d}\oint_{c({\sf d}\boldsymbol \chi_{3t})}\left((\mathbf{u},\sigma^2 w) + \frac{1}{{\rm Ro}}(\mathbf{R},0)\right)\cdot d\mathbf{x}_3 &= -\frac{\mathfrak{s}}{{\rm Fr}^2}\oint_{c({\sf d}\boldsymbol \chi_{3t})} z\nabla_3 b\cdot d\mathbf{x}_3 dt\\
&= -\frac{\mathfrak{s}}{{\rm Fr}^2}\int\!\!\int_{\partial S=c({\sf d}\boldsymbol \chi_{3t})} \hat{\mathbf{z}}\times\nabla_3 b\cdot d\mathbf{S} dt,
\end{aligned}
\label{thm:KCEB}
\end{equation}
where the notation $(\mathbf{u}, \sigma^2 w)$ denotes a three dimensional vector field, two horizontal components from $\mathbf{u}$ and the vertical component $\sigma^2 w$. As $\mathbf{R}$ is strictly horizontal, the vertical component is zero. Hence the misalignment of the unit vector in the vertical direction and the gradient of buoyancy creates vertical circulation, or convection. 

Additionally, the Euler-Boussinesq equations satisfy the Silberstein-Ertel theorem for potential vorticity. This theorem states that the potential vorticity, defined by
\begin{equation}
q := \mathfrak{s}\nabla_3 b \cdot \nabla_3\times\left((\textbf{u},\sigma^2 w) + \frac{1}{{\rm Ro}}(\mathbf{R},0)\right),
\end{equation}
is conserved along particle trajectories and thus satisfies the following equation
\begin{equation}
{\rm Sr}\,{\sf d}q + ({\sf d}\boldsymbol \chi_{3t}\cdot\nabla_3)q = 0.
\label{eq:PVEB}
\end{equation}
Since the buoyancy and the potential vorticity are constant along particle trajectories, the spatially integrated quantity,
\begin{equation}
C_\Phi = \int_{\Omega} \Phi(b,q) \,dx\,dy\,dz,
\label{eq:CasimirsEB}
\end{equation}
is also preserved in time for any differentiable function, $\Phi$, for which the integral exists. The proof is analogous to the deterministic case, which is shown in \cite{holm1998euler,holm1999euler}. A special case of this statement is the preservation of the enstrophy, which is defined as the $L^2$ norm of the potenial vorticity. Since the flow is divergence free, one can also define the enstrophy in terms of the gradients of the velocity. This shows that the Euler-Boussinesq equations, even in the presence of SALT, have an infinite number of conservation laws. This structure must also be preserved by the vertical averaging. The spatially integrated quantities $C_\Phi$ are also referred to as Casimirs, as they are the functions whose Lie--Poisson bracket corresponding to the Euler-Boussinesq equations vanishes for any Hamiltonian expressed in the Eulerian fluid variables. 

\subsection{Averaging of Newton's second law}
Besides evaluating the vertical integral in the variational principle, one can also choose to use Newton's second law to derive the equations of fluid motion in this domain, rather than using the Euler-Poincar\'e theorem. By means of the method of control volumes, it is possible to derive the equations and also come up with an averaging principle. This is what is shown in \cite{wu1981long} for the deterministic case. The stochastic case is not that different, but there is one issue that requires careful treatment: there is an additional advection term. Let us denote the vertical average by putting a bar over the relevant quantity
\begin{equation}
\overline{f} := \frac{1}{\eta}\int_{-h}^{\alpha\zeta} f dz.
\label{eq:average}
\end{equation} 
The stochastic vector field in the averaged situation is denoted
\begin{equation}
\overline{{\sf d}\boldsymbol \chi}_t = \overline{\bu}\,dt + \overline{\boldsymbol \xi}_i\circ dW_t^i.
\end{equation}
For incompressible flows, the advection equation for a scalar and the continuity equation for a density can be written in the same form. That is, the average of a scalar function $f(\mathbf{x}_3,t)$ and that of a volume form $f(\mathbf{x}_3,t) d^3x$, for incompressible flows,
%(justified abuse of notation due to incompressibility)  
are of the same form,
\begin{equation}
{\rm Sr}\, {\sf d}\int_{-h}^\zeta f(\mathbf{x}_3,t) dz + \nabla\cdot\int_{-h}^{\alpha\zeta} f(\mathbf{x}_3,t){\sf d}\boldsymbol \chi_t dz = 0.
\label{eq:verticalintegral}
\end{equation}
In the deterministic case, it is possible to substitute in the fluid velocity for $f$ in \eqref{eq:verticalintegral} and obtain the vertically averaged momentum equation after applying \eqref{eq:average}. The formula above holds for scalars and densities, but fluid velocity is neither. However, the fluid velocity equation obtained in this way is correct, but only in the deterministic case. The explanation for this coincidence is the following. In the deterministic setting, the advective terms in the equation for the fluid velocity for incompressible fluids are $(\mathbf{u}\cdot\nabla)\mathbf{u} + (\nabla\mathbf{u})\cdot \mathbf{u}$. The latter term is equal to the gradient of the kinetic energy, so a cancellation occurs in Newton's second law. When SALT is introduced in this problem, the kinetic energy is the same as in the deterministic situation, but the advective terms are now stochastic, hence this cancellation no longer occurs. 

Applying \eqref{eq:average} and \eqref{eq:verticalintegral} to the Euler-Boussinesq equations \eqref{eq:SEB} yields the following vertically averaged nonlinear equations, 
\begin{equation}
\begin{aligned}
{\rm Sr}\,{\sf d} (\eta\overline{\mathbf{u}}) 
+ \nabla\cdot(\eta\overline{{\sf d}\boldsymbol \chi_{t}\otimes\mathbf{u}}) + \eta(\nabla\overline{\boldsymbol\xi}_i)\cdot\overline{\mathbf{u}} \circ dW_t^i &= -\frac{1}{{\rm Fr}^2}\eta \overline{\nabla{\sf d} p} - \frac{1}{{\rm Ro}}\eta f\hat{\mathbf{z}}\times\overline{{\sf d}\boldsymbol \chi_t} - \frac{1}{{\rm Ro}}\eta \nabla(\overline{\boldsymbol \xi}_i\cdot \mathbf{R})\circ dW_t^i,\\
{\rm Sr}\,{\sf d}(\eta\overline{b}) + \nabla\cdot (\eta\overline{b {\sf d}\boldsymbol\chi_t}) &= 0,\\
{\rm Sr}\,{\sf d}\eta + \nabla\cdot(\eta \overline{{\sf d}\boldsymbol \chi_t}) &= 0.
\end{aligned}
\label{eq:ASEB}
\end{equation}
The last equation is obtained by substituting unity into \eqref{eq:verticalintegral}. It corresponds to conservation of volume in the two dimensional setting. As the problem is incompressible, the vertical velocity can be expressed in terms of the horizontal velocity field and the vertical component of the data vector fields $\boldsymbol \xi_{3i}$ can be expressed in terms of the horizontal components as
\begin{equation}
\begin{aligned}
w(\mathbf{x},z) &= - \nabla\cdot \int_{-h}^z \mathbf{u}(\mathbf{x},z') dz' = \nabla\cdot\int_z^{\alpha\zeta} \mathbf{u}(\mathbf{x},z')dz',\\
\zh\cdot\boldsymbol \xi_{3i}(\mathbf{x},z) &= - \nabla\cdot \int_{-h}^z \boldsymbol \xi_i(\mathbf{x},z')dz' = \nabla\cdot\int_{z}^{\alpha\zeta}\boldsymbol \xi_i(\mathbf{x},z')dz'.
\end{aligned}
\label{eq:wdivu}
\end{equation}
This expression has been derived by vertically integrating the three dimensional incompressibility condition \eqref{eq:incompressible}, using the uniqueness of the semimartingale decomposition and using the boundary conditions on the vertical velocity to pull the divergence outside of the integral. Importantly, the boundary conditions introduce a dependence between the horizontal components of the vector fields and the vertical component. A horizontal two dimensional model with bathymetry and a free surface will therefore give some information about what is happening in the vertical direction. This holds also for the $\boldsymbol \xi_i$. Even though the Newtonian averaging approach is very insightful, there is a drawback. Namely, the averaged equations \eqref{eq:ASEB} are not closed. Indeed, they contain three terms which are unknown. In the momentum equation, the average of the nonlinear term and the average of the pressure are unknown. In the buoyancy equation, the advection term is unknown. In order to close this set of equation, we will use asymptotic analysis, which we shall employ in two different scaling regimes.
These are the long time - very small wave (LT-VSW) scaling regime in section \ref{sec:LT-Vsmall} and the short time - small wave (ST-SW) scaling regime in section \ref{sec:STL-smallwavelimit}.
Here, `long time scale' is $T = L/U$, the time it takes for a fluid parcel to cross the horizontal length scale; and `short time scale' is $T = L/\sqrt{gH}$, the time it takes for a gravity wave to cross the horizontal  length scale. Likewise, `small wave' means that the wave amplitude is small, but not small enough to consider taking the rigid lid limit; while `very small wave' means that the wave amplitude is the small parameter of interest.

\section{Long time - very small wave scaling regime}\label{sec:LT-Vsmall}
Long time corresponds to choosing the time scale to be $T = L/U$ and very small wave means that the amplitude of the wave $\alpha$ is the small parameter of interest. In this setting we therefore have the following dimension-free parameters,
\begin{framed}
\begin{equation}
\begin{matrix}
\mathbf{x}_3 = H\left(\dfrac{1}{\sigma}\mathbf{x}', z'\right), & \mathbf{u}_3 = U(\mathbf{u}', \sigma w'), &\nabla_3 = \dfrac{1}{L}\left(\nabla', \dfrac{1}{\sigma}\dfrac{\partial}{\partial z'}\right), & t = \dfrac{L}{U} t', & W_t = \sqrt{\dfrac{L}{U}}W_{t'},
\\[10pt] 
h = Hh', & \zeta = \alpha H\zeta',& \mathbf{R} = f_0 L \mathbf{R}', &  \rho = \rho_0 \rho', & {\sf d}p = \rho_0 gH {\sf d}p',
\\[10pt]
\sigma = \dfrac{H}{L}, & \alpha = \dfrac{\zeta_0}{H}, & {\rm Fr} = \dfrac{U}{\sqrt{gH}}, & {\rm Ro} = \dfrac{U}{f_0 L}, & {\rm Sr} = 1.
\end{matrix}
\label{tab:ltscaling}
\end{equation}
\end{framed}
In particular, the velocity field and the data vector fields are scaled in the same way, hence we have
\begin{equation}
{\sf d}\boldsymbol \chi_{3t} = U({\sf d}\boldsymbol \chi_t', \sigma \zh\cdot{\sf d}\boldsymbol \chi_{3t}').
\end{equation}
With these scaling relations and the stratification parameter $\mathfrak{s}$, the constrained EB Lagrangian in equation \eqref{action:EB} takes the following form
\begin{equation}
\begin{aligned}
S_{EB}(\mathbf{u}_3,b,D) &= \int_{t_1}^{t_2}\int_{\Omega} \left(\frac{1}{2}|\mathbf{u}|^2 + \frac{\sigma^2}{2}w^2 + \frac{1}{{\rm Ro}}(\mathbf{u}\cdot\mathbf{R}) - \frac{1}{{\rm Fr}^2}(1+\mathfrak{s}b)z\right) D\,dx\,dy\,dz\,dt - \langle {\sf d}p, D-1\rangle
,\\
&=: \int_{t_1}^{t_2} c\ell_{EB}
\end{aligned}
\label{lag:ltEB}
\end{equation}
Note that no information about the very small free surface amplitude appears in the Lagrangian; it only contains the aspect ratio, which controls the size of the vertical kinetic energy. However, information about the size of the free surface amplitude does appear in the boundary conditions, which are
\begin{equation}
\begin{matrix}
p = \zeta \qquad & \text{ at } z = 0,\\[4pt]
wdt + \zh\cdot\boldsymbol \xi_{3i}\circ dW_t^i= \alpha\big({\sf d}\zeta + ({\sf d}\boldsymbol \chi_t\cdot\nabla)\zeta\big) \qquad & \text{ at } z = \alpha\zeta(\mathbf{x},t),\\[4pt]
wdt + \zh\cdot\boldsymbol \xi_{3i}\circ dW_t^i= -({\sf d}\boldsymbol \chi_t\cdot\nabla)h \qquad & \text{ at } z = -h(\mathbf{x}),\\[4pt]
{\sf d}\boldsymbol \chi_t\cdot \mathbf{n} = 0 \qquad & \text{ on lateral boundaries}.
\end{matrix}
\label{bc:ltEB}
\end{equation}
An application of the stochastic Euler-Poincar\'e Theorem \ref{thm:SEP} on the long time scale Lagrangian in \eqref{lag:ltEB} now yields the following equations
\begin{equation}
\begin{aligned}
{\sf d}\mathbf{u} + ({\sf d}\boldsymbol \chi_{3t}\cdot\nabla_3)\mathbf{u} + (\nabla\boldsymbol\xi_{3i})\cdot \mathbf{u}_3\circ dW_t^i &= -\frac{1}{{\rm Fr}^2}\nabla {\sf d}p - \frac{1}{{\rm Ro}}f\hat{\mathbf{z}}\times {\sf d}\boldsymbol \chi_t - \frac{1}{{\rm Ro}}\nabla(\boldsymbol \xi_i\cdot \mathbf{R})\circ dW_t^i,\\
\sigma^2\left({\sf d}w + ({\sf d}\boldsymbol \chi_{3t}\cdot\nabla_3)w + \Big(\frac{\partial}{\partial z}\boldsymbol \xi_{3i}\Big)\cdot\bu_3 \circ dW_t^i\right) &= - \frac{1}{{\rm Fr}^2}\frac{\partial}{\partial z}{\sf d}p - \frac{1}{{\rm Fr}^2}(1+\mathfrak{s}b) dt,\\
\nabla\cdot {\sf d}\boldsymbol \chi_{3t} &= 0.
\end{aligned}
\label{eq:ltEB}
\end{equation}
The equations in \eqref{eq:ltEB} satisfy the Kelvin circulation theorem as in \eqref{thm:KCEB}, and have conservation of potential vorticity along particle trajectories as in \eqref{eq:PVEB}. These equations also conserve an infinity of integral quantities as in \eqref{eq:CasimirsEB}. In the long-time scaling in \eqref{tab:ltscaling} the Strouhal number is equal to one. In this scaling regime, the equations take a particularly nice form.  The dimensionless numbers of interest are the aspect ratio $\sigma$ and the wave amplitude $\alpha$, the Rossby number ${\rm Ro}$ shall be left untouched. In particular, we consider $\alpha\ll\sigma\ll 1$, where we let the wave amplitude tend to zero while holding the aspect ratio fixed. 

\paragraph{Rigid lid approximation.}
The effect of sending the wave amplitude $\alpha$ to zero is the \textit{rigid lid approximation}, where the free surface is no longer allowed to vary and becomes a rigid boundary, instead. This removes gravity waves from the problem. However, the leading order dynamics can still be recovered from the dynamic boundary condition on the pressure. The effect of sending $\alpha\to 0$ before touching the aspect ratio is that one can derive equations that include the nonhydrostatic effect due to the vertical velocity. The corresponding equations are the so-called Great Lake equations, first derived in \cite{camassa1996long,camassa1997long}. Taking $\sigma\to 0$ after the rigid lid limit leads to the Lake equations. If one takes $\sigma\ll\alpha\ll 1$, the result is the same, but the route is slightly different. Upon sending $\sigma\to 0$, the vertical component in the Lagrangian \eqref{lag:ltEB} vanishes and upon assuming columnar motion, one can integrate the Lagrangian vertically. This leads to the Lagrangian for rotating shallow water. Sending $\alpha\to 0$ corresponds to putting a rigid lid on top of the rotating shallow water equations and this leads to the Lake equations. Upon taking $\alpha \to 0$ while keeping $\sigma$ fixed, the equations \eqref{eq:ltEB} do not change, but the boundary conditions in \eqref{bc:ltEB} do:
\begin{equation}
\begin{matrix}
p = \zeta \qquad & \text{ at } z = 0,\\[4pt]
w = 0 \qquad & \text{ at } z = 0,\\[4pt]
wdt + \zh\cdot\boldsymbol \xi_{3i}\circ dW_t^i = -({\sf d}\boldsymbol \chi_t\cdot\nabla)h \qquad & \text{ at } z = -h(\mathbf{x}),\\[4pt]
{\sf d}\boldsymbol \chi_t\cdot \mathbf{n} = 0 \qquad & \text{ on lateral boundaries}.
\end{matrix}
\label{bc:ltEB1}
\end{equation}
In the limit $\alpha \to 0$, the depth $\eta = h$, as the contribution of the free surface vanishes. Also, the expressions for the vertical velocity and the vertical component of the data vector field simplify, as the free surface contribution vanishes, and take the form
\begin{equation}
\begin{aligned}
w &= \nabla\cdot \int_z^0 \bu\, dz',\\
\zh\cdot\boldsymbol \xi_{3i} &= \nabla\cdot \int_z^0 \boldsymbol \xi_i\, dz'.
\end{aligned}
\end{equation}
Averaging with the Newtonian approach leads to the following vertically averaged versions of the equations \eqref{eq:ltEB}, 
\begin{equation}
\begin{aligned}
{\sf d} \overline{\mathbf{u}} + \frac{1}{h}\nabla\cdot(h\overline{{\sf d}\boldsymbol \chi_{t}\otimes \mathbf{u}}) + (\nabla\overline{\boldsymbol\xi}_{i})\cdot\overline{\mathbf{u}}\circ dW_t^i &= -\frac{1}{{\rm Fr}^2}\overline{\nabla {\sf d}p} - \frac{1}{{\rm Ro}} f \hat{\mathbf{z}}\times \overline{{\sf d}\boldsymbol \chi}_t - \frac{1}{{\rm Ro}}\nabla(\overline{\boldsymbol \xi}_i\cdot \mathbf{R})\circ dW_t^i,\\
{\sf d}\overline{b} + \nabla\cdot(\overline{b {\sf d}\boldsymbol \chi}_t)&= 0,
\label{eq:ltEBav}
\end{aligned}
\end{equation}
and
\begin{equation}
\nabla \cdot (h\overline{{\sf d}\boldsymbol \chi}_t) = 0.
\label{eq:weightedincomp}
\end{equation}
The continuity equation has become a weighted incompressibility condition \eqref{eq:weightedincomp}, where the weight is determined by the bathymetry profile. As in the discussion above about the incompressibility condition \eqref{eq:incompdiscussion}, the weighted incompressibility must hold for the velocity field and the $\boldsymbol \xi_i$ independently. If the bathymetry is flat, one finds the two-dimensional incompressibility condition. However, the momentum equation and the buoyancy equation above still suffer from the problem that terms are present which we, as yet, have not determined.

\subsection{Leading order expansion in the long time -- very small wave scaling  regime}\label{sec:rotatinglake}

As an initial approach, let us assume a leading order expansion  in $\sigma^2$. Even though the Rossby number is small as well, we will consider a single scale expansion in $\sigma^2$ for the variables:
\begin{equation}
\begin{matrix}
\mathbf{u} = \mathbf{u}_0 + o(1), & w = w_0 + o(1), & \boldsymbol \xi_{3i} = \boldsymbol \xi_{0,3i} + o(1),\\
{\sf d}\boldsymbol \chi_{3t} = {\sf d}\boldsymbol \chi_{0,3t} + o(1), & {\sf d}p = {\sf d}p_0 + o(1), & \zeta = \zeta_0 + o(1), \\ b = 0 + o(1). & &
\end{matrix}
\label{exp:leadingorder}
\end{equation}
The buoyancy does not contribute in the leading order expansion, since the stratification parameter is required to satisfy $\mathfrak{s} \ll 1$ for the Boussinesq approximation. Also note that the data vector fields $\boldsymbol \xi_{3i}$ are expanded in the same way as the velocity itself. No assumptions are made about the size of the data vector fields. Upon substituting \eqref{exp:leadingorder} into \eqref{eq:ltEB}, the vertical velocity equation at leading order implies hydrostatic balance
\begin{equation}
\frac{\partial}{\partial z}{\sf d}p_0 + 1\,dt= 0,
\label{eq:hydrostaticbalance}
\end{equation}
and the dynamic boundary condition \eqref{bc:dynamic} implies that the leading order pressure is equal to the leading order free surface elevation.
\bigskip

\begin{remark}\label{semimartingaledecomposition}
Note that there is no stochasticity entering \eqref{eq:hydrostaticbalance} explicitly. Due to the assumption of the pressure being a semimartingale, the pressure has the standard semimartingale decomposition. When there is no stochasticity in the equation, the martingale part of the pressure must vanish and we have the expression ${\sf d}p_0 = p_0 dt$ with a slight abuse of notation.
\end{remark} 

Interestingly, the substitution of the leading order expansion leads to a closed model even before averaging, when one uses the expression \eqref{eq:wdivu} for the vertical velocity as an additional equation. Given the boundary conditions in \eqref{bc:ltEB1}, the leading order expansion leads to a set of equations reminiscent of the Benney long wave model \cite{benney1973some}. There are a few twists, though, since stochasticity and rotation are also involved. Moreover, the wave amplitude $\alpha$ is very small, which enforces the rigid lid approximation in the vertical integral. At leading order, there cannot be any confusion as to which order of the expansion we are considering. Consequently, we may drop the subscript $o$ in writing the following set of equations,
\begin{equation}
\begin{aligned}
{\sf d}\mathbf{u} + ({\sf d}\boldsymbol \chi_{3t}\cdot\nabla_3)\mathbf{u} + (\nabla\boldsymbol\xi_{i})\cdot\mathbf{u}\circ dW_t^i &= -\frac{1}{{\rm Fr}^2}\nabla {\sf d}p - \frac{1}{{\rm Ro}}f\hat{\mathbf{z}}\times{\sf d}\boldsymbol \chi_t - \frac{1}{{\rm Ro}}\nabla(\boldsymbol \xi_i \cdot \mathbf{R})\circ dW_t^i,\\
w &= \nabla\cdot\int_z^0 \mathbf{u} \,dz',\\
\zh\cdot\boldsymbol \xi_{3i} &= \nabla\cdot\int_z^0 \boldsymbol \xi_i\, dz'.
\end{aligned}
\label{eq:benneylike}
\end{equation}
Together with the weighted incompressibility condition in \eqref{eq:weightedincomp}, the dynamic boundary condition on the pressure \eqref{bc:dynamic} and the lateral boundary condition \eqref{bc:lateral}, the Benney-like equations \eqref{eq:benneylike} form a closed set. The Benney long wave equations are interesting because they have a very rich mathematical structure, including an infinite hierarchy of conservation laws, as shown in \cite{kupershmidt2006extended}. If we now make the additional assumption that the leading order component of the horizontal velocity field and the leading order component of the horizontal data vector field are independent of the vertical coordinate; that is, if we assume that the leading order component is columnar, then a considerable simplification of \eqref{eq:benneylike} occurs. Namely, the derivative in the vertical direction drops out. Consequently, it is no longer necessary to determine the vertical velocity and now every term in the equation is horizontal. This set of equations we will refer to as the stochastic, rotating, Lake equations, given by
\begin{equation}
{\sf d}\mathbf{u} + ({\sf d}\boldsymbol \chi_t\cdot\nabla)\mathbf{u}+ (\nabla\boldsymbol\xi_i)\cdot\mathbf{u}\circ dW_t^i = -\frac{1}{{\rm Fr}^2}\nabla {\sf d}\zeta- \frac{1}{{\rm Ro}}f\hat{\mathbf{z}}\times{\sf d}\boldsymbol \chi_t - \frac{1}{{\rm Ro}}\nabla(\boldsymbol \xi_i \cdot \mathbf{R})\circ dW_t^i,
\label{eq:lake}
\end{equation}
accompanied by the weighted incompressibility condition in \eqref{eq:weightedincomp} and the lateral boundary condition \eqref{bc:lateral}. The dynamic boundary condition can now be used to determine the pressure at the free surface. The deterministic, irrotational version of these equations has been shown by \cite{levermore1996global,levermore1996global2,
levermore1997analyticity} to be globally wellposed. These equations satisfy a Kelvin circulation theorem, namely
\begin{equation}
{\sf d}\oint_{c({{\sf d}\boldsymbol \chi_t})}\left(\mathbf{u}+\frac{1}{{\rm Ro}}\mathbf{R}\right)\cdot d\mathbf{x} = 0,
\end{equation}
where $c({{\sf d}\boldsymbol \chi_t})$ is any fluid loop that is advected by the stochastic vector field ${{\sf d}\boldsymbol \chi_t}$. This means that circulation is conserved, as there are no terms on the right hand side to generate circulation. Hence the enstrophy in this model is conserved as well. The proof of the Kelvin circulation theorem is either a direct computation, or a corollary of the Euler--Poincar\'e theorem. We will derive these equations from a variational point of view as well, which will prove the Kelvin circulation theorem above.

\subsection{Higher order expansion in the long time -- very small wave scaling regime}\label{sec:rotatinggreatlake}
Let us now consider a higher order perturbation expansion:
\begin{equation}
\begin{matrix}
\mathbf{u} = \mathbf{u}_0 + \sigma^2\mathbf{u}_1 + o(\sigma^2), & w = w_0 +\sigma^2 w_1 + o(\sigma^2),& \boldsymbol \xi_{3i} = \boldsymbol \xi_{0,3i} + \sigma^2 \boldsymbol \xi_{1,3i}+ o(\sigma^2),\\
{\sf d}\boldsymbol \chi_{3t} = {\sf d}\boldsymbol \chi_{0,3t} + \sigma^2 {\sf d}\boldsymbol \chi_{1,3t}+ o(\sigma^2), & {\sf d}p = {\sf d}p_0 + \sigma^2 {\sf d}p_1 + o(\sigma^2), & \zeta = \zeta_0 + \sigma^2 \zeta_1+ o(\sigma^2),\\
\qquad \mathfrak{s}b = \sigma^2 b + o(\sigma^2). & &
\end{matrix}
\label{exp:higherorder}
\end{equation}
It is natural to assume that the leading order terms satisfy the Lake equations \eqref{eq:lake}. Note that the stratification parameter is assumed to satisfy $\mathcal{O}(\mathfrak{s})=\mathcal{O}(\sigma^2)$. This will allow us to consider the buoyancy independently from the higher order terms that will appear in the equations to come. Hence, at leading order in the vertical velocity equation, we have hydrostatic balance \eqref{eq:hydrostaticbalance} and in the horizontal component we have columnar motion. At the next order, we substitute \eqref{eq:wdivu} for the vertical velocity and obtain
\begin{equation}
\frac{\partial}{\partial z}{\sf d}p_1 + b_1\,dt = z\big({\sf d}\nabla\cdot \mathbf{u}_0 + ({\sf d}\boldsymbol \chi_{0,t}\cdot\nabla)(\nabla\cdot \mathbf{u}_0) - (\nabla\cdot{\sf d}\boldsymbol \chi_{0,t})(\nabla\cdot\bu_0)\big).
\end{equation}
On the right hand side, everything in the brackets is independent of the vertical coordinate, so integration is particularly simple and leads to
\begin{equation}
{\sf d}p_1 = {\sf d}\zeta_1 - b_1 z dt + \frac{1}{2}z^2 \big({\sf d}\nabla\cdot \mathbf{u}_0 + ({\sf d}\boldsymbol \chi_{0,t}\cdot\nabla)(\nabla\cdot \mathbf{u}_0) - (\nabla\cdot{\sf d}\boldsymbol \chi_{0,t})(\nabla\cdot\bu_0)\big).
\end{equation}
This shows that the pressure deviates from hydrostatic balance at order $\sigma^2$, as the pressure is a function of free surface elevation, buoyancy and horizontal velocity. The vertical average of the horizontal gradient of the pressure above is
\begin{equation}
\begin{aligned}
\overline{{\sf d}\nabla p_1} &= \nabla {\sf d}\zeta_1 + \frac{1}{2}h\nabla b_1 dt + \frac{1}{6}h^2\big({\sf d}\nabla\nabla\cdot \mathbf{u}_0 + ({\sf d}\boldsymbol\chi_{0,t} \cdot\nabla)(\nabla\nabla\cdot \mathbf{u}_0) + (\nabla{\sf d}\boldsymbol\chi_{0,t})\cdot(\nabla\nabla\cdot \mathbf{u}_0)\\
&\quad - (\nabla\nabla\cdot {\sf d}\boldsymbol \chi_{0,t})(\nabla\cdot \mathbf{u}_0)- (\nabla\cdot{\sf d}\boldsymbol \chi_{0,t})(\nabla\nabla\cdot\bu_0)\big).
\end{aligned}
\end{equation}
By using the weighted incompressibility condition \eqref{eq:weightedincomp}, the expression above can be simplified and combined into
\begin{equation}
\overline{{\sf d}\nabla p_1} = \nabla {\sf d}\zeta_1 + \frac{1}{2}h\overline{\nabla b_1} dt + \bigg({\sf d} + ({\sf d}\boldsymbol \chi_{0,t}\cdot\nabla) + (\nabla{\sf d}\boldsymbol \chi_{0,t})\cdot \bigg)\left(\frac{1}{6}h^2(\nabla\nabla\cdot \mathbf{u}_0)\right).
\label{eq:avpress}
\end{equation}
The following observation allows us to deal with the average of the nonlinear term. Namely, if the leading order terms satisfy the stochastic, rotating Lake equations \eqref{eq:lake}, then the leading order component of the stochastic velocity field is independent of the vertical coordinate. The higher order component of the stochastic vector field is not independent of the vertical coordinate, though, so its average is not trivial. Hence the average of the full stochastic velocity field is
\begin{equation}
\overline{{\sf d}\boldsymbol \chi}_t = {\sf d}\boldsymbol \chi_{0,t} + \sigma^2 \overline{{\sf d}\boldsymbol \chi}_{1,t} + o(\sigma^2).
\label{eq:averageidentity}
\end{equation}
From this expression, it is clear that the average of the product minus the product of the average is a higher order term:
\begin{equation}
\overline{{\sf d}\boldsymbol \chi_t\otimes \mathbf{u}} - \overline{{\sf d}\boldsymbol \chi}_t\otimes \overline{\mathbf{u}} = \mathcal O(\sigma^4).
\end{equation}
Therefore, by adding and subtracting the product of the average in \eqref{eq:ltEBav}, we can write a closed system of equations. For notational convenience, we define
\begin{equation}
\overline{\mathbf{V}}(\mathbf{x},t) := \overline{\mathbf{u}}(\mathbf{x},t) + \frac{\sigma^2}{6}h^2\nabla(\nabla\cdot \overline{\mathbf{u}}) + o(\sigma^2), 
\label{def:v}
\end{equation}
and use our expression for the average of the pressure \eqref{eq:avpress} into \eqref{eq:ltEBav} to write
\begin{equation}
\begin{aligned}
{\sf d}\overline{\mathbf{V}} + (\overline{{\sf d}\boldsymbol \chi_t}\cdot\nabla)\overline{\mathbf{V}} + (\nabla\overline{{\sf d}\boldsymbol \chi_t})\cdot \overline{\mathbf{V}} &= -\frac{1}{{\rm Fr}^2}\nabla{\sf d}\zeta + \frac{1}{2}|\overline{\mathbf{u}}|^2dt - \frac{\mathfrak{s}}{2\,{\rm Fr}^2}h\nabla \overline{b} dt - \frac{1}{{\rm Ro}}f\hat{\mathbf{z}}\times\overline{{\sf d}\boldsymbol\chi}_t - \frac{1}{{\rm Ro}}\nabla(\overline{\boldsymbol \xi}_i \cdot \mathbf{R})\circ dW_t^i,\\
{\sf d}\overline{b} + (\overline{{\sf d}\boldsymbol \chi}_t \cdot\nabla)\overline{b}&= 0.
\end{aligned}
\label{eq:greatlake}
\end{equation}
The stratification parameter $\mathfrak{s}$ is assumed to be of the same order as $\sigma^2$, the aspect ratio squared. When the buoyancy becomes negligible, the stratification parameter tends to zero. This removes the buoyancy from equation \eqref{eq:greatlake}, but the nonhydrostatic pressure terms stay. Taking the shallow water limit by letting the aspect ratio tend to zero also removes the buoyancy contribution, since the buoyancy is linked to the aspect ratio in the expansions introduced in \eqref{exp:higherorder}. Together with the weighted incompressibility condition \eqref{eq:weightedincomp} and lateral boundary condition \eqref{bc:lateral}, the set of equations \eqref{eq:greatlake} comprises the stochastic, rotating, thermal Great Lake equations. The deterministic, non-rotating version of these equations is presented in \cite{camassa1996long,camassa1997long}, together with the elliptic operator that relates $\overline{\mathbf{V}}$ and $\overline{\mathbf{u}}$. To solve for the pressure ${\sf d}\zeta$, one uses the elliptic operator just mentioned, which is defined by
\begin{equation}
\begin{aligned}
h\overline{\mathbf{V}} &= h\overline{\mathbf{u}} + \left[-\frac{\sigma^2}{3}\nabla(h^3\nabla\cdot\overline{\mathbf{u}}) - \frac{\sigma^2}{2}\nabla(h^2\overline{\mathbf{u}}\cdot\nabla h) + \frac{\sigma^2}{2}h^2(\nabla\cdot\overline{\mathbf{u}})\nabla h + \sigma^2 h(\overline{\mathbf{u}}\cdot\nabla h)\nabla h\right],\\
&=: \mathfrak L(h)\overline{\mathbf{u}}.
\end{aligned}
\label{def:ellipticoperator}
\end{equation} 
This operator is positive-definite and self-adjoint since $h>0$. An application of the Lax-Milgram theorem guarantees the continuous dependence of $\overline{\mathbf{u}}$ on $\overline{\mathbf{V}}$ \cite{levermore1996global,levermore1996global2}. By operating with $\nabla\cdot h\mathfrak L(h)^{-1} h$ on the velocity equation in \eqref{eq:greatlake} and using the weighted incompressibility condition \eqref{eq:weightedincomp}, one finds an elliptic problem for ${\sf d}\zeta$.
The Kelvin circulation theorem for the stochastic, rotating, thermal Great Lake equations is given by
\begin{equation}
{\sf d}\oint_{c(\overline{{\sf d}\boldsymbol \chi_t})}\left(\overline{\mathbf{V}}+\frac{1}{{\rm Ro}}\mathbf{R}\right)\cdot d\mathbf{x} = -\frac{\mathfrak{s}}{2}\oint_{c(\overline{{\sf d}\boldsymbol \chi_t})}h\nabla\overline{b} \cdot d\mathbf{x}dt.
\label{thm:KCGL}
\end{equation}
Here $c(\overline{{\sf d}\boldsymbol \chi_t})$ is any fluid loop that is being advected by the vertically averaged stochastic vector field $\overline{{\sf d}\boldsymbol \chi_t}$. The right hand side of the circulation theorem reveals that circulation will be generated when the gradients of the buoyancy and the bathymetry are not aligned. This term can be seen as a baroclinic torque. The proof that the rotating, thermal, Great Lake equations satisfy this Kelvin theorem is postponed to end of the next subsection, where we will derive the same set of equations from a variational principle.
\bigskip

\begin{remark}
Note that the small aspect ratio limit $\sigma \to 0$ reduces the Great Lake equations in \eqref{eq:greatlake} to the Lake equations in \eqref{eq:lake}. If the bathymetry is flat, then the weighted incompressibility condition in \eqref{eq:weightedincomp} reduces to the usual two dimensional incompressibility condition. In this case, the nonhydrostatic pressure term that is part of $\mathbf{V}$ vanishes and one obtains the two dimensional version of the stochastic, rotating, Euler equations.
\end{remark}

\subsection{Averaged Euler-Poincar\'e Lagrangian for long time -- very small wave scaling}\label{sec:variationallongtime}
To apply vertical averaging in the Euler-Poincar\'e setting, we return to the dimensionless Lagrangian \eqref{lag:ltEB} with boundary conditions given in \eqref{bc:ltEB}. In line with the derivation of the Great Lake equations from the Newtonian point of view above, we assume that the horizontal velocity is independent of the vertical coordinate. This can be guaranteed upon replacing the horizontal velocity by its vertical average. In that situation the expression for the vertical velocity in terms of the horizontal velocity in \eqref{eq:wdivu} can be integrated explicitly and we obtain as before
\begin{equation}
w = -\nabla\cdot(z+h)\overline{\bu} = \nabla\cdot(\alpha\zeta - z)\overline{\bu}.
\label{eq:wexpression}
\end{equation}
The same reasoning applies to the data vector fields, for which we obtain
\begin{equation}
\zh\cdot\boldsymbol \xi_{3i} = -\nabla\cdot(z+h)\overline{\boldsymbol \xi}_i = \nabla\cdot(\alpha\zeta - z)\overline{\boldsymbol \xi}_i.
\label{eq:xiexpression}
\end{equation}
Note that in the limit $\alpha\to 0$, the expression on the right hand side in \eqref{eq:wexpression} and \eqref{eq:xiexpression} implies the free surface boundary condition when $w$ and $\zh\cdot\boldsymbol \xi_{3i}$ are evaluated on the free surface. However, evaluation on the bottom boundary does not imply the boundary condition \eqref{bc:kinematic} unless the weighted incompressibility condition \eqref{eq:weightedincomp} holds. Substituting \eqref{eq:wexpression} into the Euler-Boussinesq Lagrangian \eqref{lag:ltEB} and replacing $\bu$ by $\overline{\bu}$ means that we can evaluate the vertical integral. Hence we have the approximate Euler-Boussinesq Lagrangian
\begin{equation}
\begin{aligned}
\ell_{EB}
&\approx \int_{CS}\int_{-h}^{\alpha\zeta} \left(\frac{1}{2}|\overline{\mathbf{u}}|^2 + \frac{\sigma^2}{2}\big((z+h)(\nabla\cdot\overline{\bu}) + (\overline{\bu}\cdot\nabla)h\big)^2 + \frac{1}{{\rm Ro}}\overline{\bu}\cdot\mathbf{R} -\frac{1}{{\rm Fr}^2}(1+\mathfrak{s}\overline{b})z\right)dz\,dx\,dy.
\end{aligned}
\end{equation} 
This Lagrangian is an integral over the horizontal cross section of the domain $\Omega$, which we call $CS$. Evaluating the vertical integral leads to
\begin{equation}
\begin{aligned}
\ell_{TRGN} &= \int_{CS}\left(\frac{1}{2}|\overline{\bu}|^2 + \frac{\sigma^2}{6}\eta^2(\nabla\cdot\overline{\bu})^2 + \frac{\sigma^2}{2}\eta(\nabla\cdot\overline{\bu})(\overline{\bu}\cdot\nabla h)+ \frac{\sigma^2}{2}(\overline{\bu}\cdot\nabla h)^2 + \frac{1}{{\rm Ro}}\overline{\bu}\cdot\bR \right.\\
&\quad \left. - \frac{1}{2\,{\rm Fr}^2}(1+\mathfrak{s}\overline{b})(\eta-2h)\right)\eta\,dx\,dy.
\end{aligned}
\label{lag:trgn}
\end{equation}
The subscript on the Lagrangian in \eqref{lag:trgn} stands for \textit{thermal rotating Green-Naghdi}, because the equations that this Lagrangian gives rise to are a thermal and rotating extension to the usual Green-Naghdi equations \cite{green1976derivation}. The incompressibility constraint has been used to ensure that the expression for the vertical velocity is valid and are thus no longer required. However, the weighted incompressibility condition \eqref{eq:weightedincomp} must still hold; so, we introduce a new constraint to make the total depth equal to the bathymetry. Weighted incompressibility has to be enforced via a constraint because it affects the measure $\eta\,dx\,dy$ in the Lagrangian above. The constraint is equivalent to saying that the free surface elevation is zero, that is, $\eta-h=0$. Thus, the action for the thermal rotating Great Lake equations is given by
\begin{equation}
S_{TRGL} = \int_{t_1}^{t_2}\ell_{TRGN}\,dt + \left\langle\frac{1}{{\rm Fr}^2}{\sf d}\pi,\eta-h\right\rangle =: \int_{t_1}^{t_2}c\ell_{TRGL}.
\label{action:trgl}
\end{equation}
The action in \eqref{action:trgl} has been suggestively called the thermal rotating Great Lake action and defines the constrained thermal rotating Great Lake Lagrangian. Note that this Lagrangian features the $H_{div}$ Sobolev norm in the situation where the bathymetry is flat, which has interesting relations with integrable systems and geometric statistics, as shown in \cite{khesin2013geometry}. When the bathymetry is nontrivial, the norm is more complicated. Here ${\sf d}\pi$ is a semimartingale Lagrange multiplier, whose purpose is to ensure that the weighted incompressibility condition holds. In order to apply the Euler-Poincar\'e theorem \ref{thm:SEP} to this Lagrangian, we need to define the variations. By substituting the higher order perturbation expansion \eqref{exp:higherorder} into the formulas for the variations in the theorem, we obtain
\begin{equation}
\begin{aligned}
\delta \overline{\mathbf{u}} \,dt &= {\sf d}\mathbf{v} - \big[{\sf d}\overline{\boldsymbol \chi_t}, \mathbf{v}\big],
\label{var:3D}
\end{aligned}
\end{equation}
where the arbitrary vector field $\mathbf{v}$ is a vector field semimartingale. The variations of the advected quantities are obtained by directly integrating the formulae for the variations in the three dimensional case. First we notice that the only advected quantities in this problem are scalar functions and volume forms, which due to incompressibility, satisfy the same form of advection equation, as we saw above in the Newtonian averaging principle. The functional derivative and spatial derivatives commute. Hence, if $\mathbf{u}_3$ is incompressible, then $\delta \mathbf{u}_3$ must be incompressible, as well. This argument implies that the arbitrary vector field is also incompressible, which means that the constraints for the variations of the buoyancy and the density can be shown to satisfy
\begin{equation}
\begin{aligned}
\delta\overline{b}\, dt &= - (\mathbf{v}\cdot\nabla)\overline{b}\,dt,\\
\delta\int_{-h}^{\alpha\zeta} D dz\, dt &= -\nabla\cdot \left(\int_{-h}^{\alpha\zeta} D dz\,\mathbf{v}\right)dt.
\end{aligned}
\end{equation}
In this paper, $D = 1$, so the vertical integral of $D$ is the depth $\eta = \alpha\zeta + h$, showing that the depth $\eta$ functions as a two dimensional density; hence, its variation satisfies
\begin{equation}
\delta\eta \,dt = -\nabla\cdot \left(\eta\mathbf{v}\right)dt.
\end{equation} 
In the $\alpha \to 0$ limit, the depth is given by the bathymetry $\eta = h$, which is the constraint introduced to imply weighted divergence. The variations of the thermal rotating Great Lake Lagrangian in \eqref{action:trgl} are
\begin{equation}
\begin{aligned}
\frac{\delta c\ell_{TRGL}}{\delta \overline{\mathbf{u}}} 
&= 
\eta\overline{\mathbf{u}} - \frac{\sigma^2}{3}\nabla(\eta^3\nabla\cdot \overline{\mathbf{u}}) - \frac{\sigma^2}{2}\nabla\big(\eta^2(\overline{\bu}\cdot\nabla h)\big) + \frac{\sigma^2}{2}\eta^2(\nabla\cdot\overline{\bu})\nabla h + \sigma^2\eta(\overline{\bu}\cdot\nabla h)\nabla h + \frac{1}{{\rm Ro}}\eta\mathbf{R}\,,
\\
\frac{\delta c\ell_{TRGL}}{\delta \eta} 
&=
 \frac{1}{2}|\overline{\mathbf{u}}|^2 + \frac{\sigma^2}{2}(\eta\nabla\cdot\overline{\bu}+\overline{\bu}\cdot\nabla h)^2
 + \frac{1}{{\rm Ro}}(\overline{\mathbf{u}}\cdot \mathbf{R}) - \frac{1}{{\rm Fr}^2}(1+\mathfrak{s}\overline{b})(\eta-h) - \frac{1}{{\rm Fr}^2}{\sf d}\pi,
\\
\frac{\delta c\ell_{TRGL}}{\delta \overline{b}} 
&= 
-\frac{\mathfrak{s}}{2\,{\rm Fr}^2}(\eta^2-2\eta h) \,, \\
\frac{\delta c\ell_{TRGL}}{\delta {\sf d}\pi} 
&=
\frac{1}{{\rm Fr}^2}(\eta - h)\,.
\end{aligned}
\end{equation} 
The variational derivative with respect to $\overline{\bu}$ of the thermal rotating Great Lake Lagrangian shows that the elliptic operator that relates $\overline{\mathbf{V}}$ and $\overline{\bu}$ that we encountered in \eqref{def:ellipticoperator} arises naturally in the variational context upon evaluating $\eta=h$. Note that the variational derivative with respect to $\eta$ simplifies considerably upon evaluating $\eta=h$. The second term in the variational derivative with respect to $\eta$ is the square of the weighted incompressibility condition, which is equal to zero. Hence this term disappears. The fourth term in the variational derivative with respect to $\eta$ vanishes since $\eta-h=0$. The variational derivative with respect to the buoyancy simplifies. An application of the stochastic Euler-Poincar\'e Theorem \ref{thm:SEP} to the Great Lake Lagrangian  in \eqref{action:trgl} with these variational derivatives and the variations in \eqref{var:3D} leads to the stochastic Great Lake equations \eqref{eq:greatlake}, with rotation and buoyancy. For notational convenience, let us use \eqref{def:ellipticoperator} to define $\overline{\mathbf{V}}$. Then the thermal rotating Great Lake equations are given by
\begin{equation}
\begin{aligned}
{\sf d}\overline{\mathbf{V}} + (\overline{{\sf d}\boldsymbol \chi_t}\cdot\nabla)\overline{\mathbf{V}} + (\nabla\overline{{\sf d}\boldsymbol \chi_t})\cdot \overline{\mathbf{V}} &= -\frac{1}{{\rm Fr}^2}\nabla{\sf d}\pi + \frac{1}{2}|\overline{\mathbf{u}}|^2 dt - \frac{\mathfrak{s}}{2\,{\rm Fr}^2}h\nabla \overline{b} dt - \frac{1}{{\rm Ro}}f\hat{\mathbf{z}}\times\overline{{\sf d}\boldsymbol \chi_t} - \frac{1}{{\rm Ro}}\nabla(\overline{\boldsymbol \xi}_i\cdot \mathbf{R})\circ dW_t^i,\\
{\sf d}\overline{b} + (\overline{{\sf d}\boldsymbol \chi_{t}} \cdot \nabla)\overline{b} &= 0,\\
\nabla\cdot(h\overline{{\sf d}\boldsymbol \chi_t}) &= 0,
\end{aligned}
\label{GL-pressure}
\end{equation}
and with the boundary condition 
\begin{equation}
\overline{{\sf d}\boldsymbol \chi}_t \cdot \mathbf{n} = 0.
\end{equation}
The pressure ${\sf d}\pi$ is solved for using the elliptic operator defined in \eqref{def:ellipticoperator}. Hence one can make the identification $\pi=\zeta$. This calculation shows that the Great Lake equations with rotation, stratification and stochasticity can be obtained by averaging the equations and using a perturbation series approach, or by taking a variational approach. The results are identically equal. Since the Lagrangian framework implies the Kelvin circulation theorem \eqref{thm:KCGL}, the proof is now immediate that the circulation theorem has the form
\begin{equation}
\begin{aligned}
{\sf d}\oint_{c(\overline{{\sf d}\boldsymbol \chi_t})}\left(\overline{\mathbf{V}}+\frac{1}{{\rm Ro}}\mathbf{R}\right)\cdot d\mathbf{x} &= -\frac{\mathfrak{s}}{2\,{\rm Fr}^2}\oint_{c(\overline{{\sf d}\boldsymbol \chi_t})}h\nabla\overline{b}\cdot d\mathbf{x} \,dt,\\
&= -\frac{\mathfrak{s}}{2\,{\rm Fr}^2}\int\!\!\int_{\partial S = c(\overline{{\sf d}\boldsymbol\chi_t})}\nabla h \times\nabla\overline{b} \,d\mathbf{S}\,dt.
\end{aligned}
\label{GL-KelThm}
\end{equation} 
Thus, in this scaling regime, applying asymptotics to the equations implies the same result as applying the asymptotics in the variational principle. \medskip

\begin{remark}[Kelvin theorem result for generation of horizontal circulation]
The Kelvin circulation theorem in \eqref{GL-KelThm} shows that any \textit{misalignment of the horizontal gradients} of the bathymetry and of the vertically averaged buoyancy \textit{will generate horizontal circulation} in the material  loop $c(\overline{{\sf d}\boldsymbol \chi_t})$ which follows the stochastic Lagrangian flow velocity $\overline{{\sf d}\boldsymbol \chi_t}$ in the horizontal plane given in equation \eqref{eq:chi-defn}. The Kelvin circulation theorem \eqref{GL-KelThm} implies an evolution equation for potential vorticity, as well.
\end{remark}

In the next section, we will extend the comparative asymptotic expansion approach to consider the short time - small wave limit. This extension will be accomplished by first deriving equations using asymptotics in the Euler-Boussinesq equations and later doing asymptotics in the Lagrangian and applying the Euler-Poincar\'e theorem.

\section{Short time - small wave scaling regime}\label{sec:STL-smallwavelimit}

Short time corresponds to choosing the time scale to be $T = L/\sqrt{gH}$, the time it takes for a gravity wave to traverse the horizontal  length scale. `Small wave' means that the amplitude of the wave is small, but not small enough to consider taking the rigid lid limit. In this setting, the scales are given by
\begin{framed}
\begin{equation}
\begin{matrix}
\mathbf{x}_3 = L(\mathbf{x}', \sigma z'), & \mathbf{u}_3 = U(\mathbf{u}', \sigma w'), &\nabla_3 = \dfrac{1}{L}\left(\nabla', \dfrac{1}{\sigma}\dfrac{\partial}{\partial z'}\right), & t = \dfrac{L}{\sqrt{gH}} t', & W_t = \sqrt{\dfrac{L}{\sqrt{gH}}}W_{t'},
\\[10pt] 
h = Hh', & \zeta = \alpha H\zeta',& \mathbf{R} = f_0 L \mathbf{R}', &  \rho = \rho_0 \rho', & {\sf d}p = \rho_0 gH {\sf d}p',
\\[10pt]
\sigma = \dfrac{H}{L}, & \alpha = \dfrac{\zeta_0}{H}, & {\rm Fr} = \dfrac{U}{\sqrt{gH}}, & {\rm Ro} = \dfrac{U}{f_0 L}, & {\rm Sr} = \dfrac{1}{{\rm Fr}}.
\end{matrix}
\label{tab:stscaling}
\end{equation}
\end{framed}
In this scaling regime, the EB Lagrangian takes the form
\begin{equation}
\ell_{EB}(\mathbf{u}_3,b,D) = \int_{\Omega}D\left( \frac{1}{2}|\mathbf{u}|^2 + \frac{\sigma^2}{2} w^2 + \frac{1}{{\rm Ro}}\mathbf{u}\cdot \mathbf{R} - \frac{1}{{\rm Fr}^2}(1+b)z\right)\,dx\,dy\,dz,
\label{lag:stEB}
\end{equation}
so the corresponding action is given by
\begin{equation}
S_{EB} = \int_{t_1}^{t_2}\ell_{EB} \,dt - \left\langle \frac{1}{{\rm Fr}^2}{\sf d}p, D-1 \right\rangle =: \int_{t_1}^{t_2} c\ell_{EB},
\end{equation}
with boundary conditions given by
\begin{equation}
\begin{matrix}
p = \alpha\zeta \qquad & \text{ at } z = \alpha\zeta(\mathbf{x},t),\\[4pt]
wdt + \zh\cdot\boldsymbol \xi_{3i}\circ dW_t^i = \alpha\left(\dfrac{1}{{\rm Fr}}{\sf d}\zeta + ({\sf d}\boldsymbol \chi_t\cdot\nabla)\zeta\right) \qquad & \text{ at } z = \alpha\zeta(\mathbf{x},t),\\[6pt]
wdt + \zh\cdot\boldsymbol \xi_{3i}\circ dW_t^i = -({\sf d}\boldsymbol \chi_t\cdot\nabla)h \qquad & \text{ at } z = -h(\mathbf{x}),\\[4pt]
{\sf d}\boldsymbol \chi_t\cdot \mathbf{n} = 0 \qquad & \text{ on lateral boundaries}.
\end{matrix}
\label{bc:stEB}
\end{equation}
An application of the stochastic Euler-Poincar\'e theorem \ref{thm:SEP} on the short-time scaled Lagrangian in \eqref{lag:stEB} yields the following equations
\begin{equation}
\begin{aligned}
\frac{1}{{\rm Fr}}{\sf d}\mathbf{u} + ({\sf d}\boldsymbol \chi_{3t}\cdot\nabla_3)\mathbf{u} + (\nabla\boldsymbol \xi_{3i})\cdot \mathbf{u}_3\circ dW_t^i &= -\frac{1}{{\rm Fr}^2}\nabla {\sf d}p - \frac{1}{{\rm Ro}}f\hat{\mathbf{z}}\times {\sf d}\boldsymbol \chi_t - \frac{1}{{\rm Ro}}\nabla(\boldsymbol \xi_i\cdot R)\circ dW_t^i,\\
\sigma^2\left(\frac{1}{{\rm Fr}} {\sf d}w + ({\sf d}\boldsymbol \chi_{3t}\cdot\nabla_3)w + \Big(\frac{\partial}{\partial z}\boldsymbol \xi_{3i}\Big)\cdot\bu_3\circ dW_t^i \right) &= -\frac{1}{{\rm Fr}^2}\frac{\partial}{\partial z}{\sf d}p  -\frac{1}{{\rm Fr}^2}(1 + b)dt,\\
\frac{1}{{\rm Fr}}{\sf d}b + ({\sf d}\boldsymbol \chi_{3t}\cdot\nabla_3)b &= 0,\\
\nabla_3\cdot{\sf d}\boldsymbol \chi_{3t} &= 0.
\end{aligned}
\label{eq:stEB}
\end{equation}
These equations satisfy the Kelvin circulation theorem, which for the Euler-Boussinesq equations takes the form of \eqref{thm:KCEB}, and also have conservation of potential vorticity along fluid trajectories, as in \eqref{eq:PVEB}, as well as conservation of an infinity of integral quantities \eqref{eq:CasimirsEB}, but now the Strouhal number is explicitly given in terms of the Froude number. In this scaling, the free surface is small rather than very small. Hence, we will not take the limit of the Froude number going to zero explicitly. Instead, we will introduce a regular perturbation expansion with small parameters $\epsilon$ and $\gamma$ whose magnitudes need to be determined with respect to $\alpha$, ${\rm Fr}$ and $\sigma$. 
\begin{equation}
\begin{matrix}
\mathbf{u} = \mathbf{u}_0 + \epsilon\mathbf{u}_1 + o(\epsilon), & w = w_0 +\epsilon w_1 + o(\epsilon), & \boldsymbol \xi_i = \boldsymbol \xi_{0,i} + \epsilon\boldsymbol \xi_{1,i}+ o(\epsilon),\\
{\sf d}\boldsymbol \chi_t = {\sf d}\boldsymbol \chi_{0,t} + \epsilon{\sf d}\boldsymbol \chi_{1,t}+ o(\epsilon), & p = p_0 + \gamma p_1 + \gamma^2 p_2 + o(\gamma^2), & b = \mathfrak{s} b_1 + \mathfrak{s}^2 b_2 +  o(\gamma^2).
\end{matrix}
\label{exp:multi}
\end{equation}
Substitution of \eqref{exp:multi} into \eqref{eq:stEB} provides equations of unknown order. By requiring certain balances to hold, the order of the dimensionless numbers can be related to each other. The boundary condition related to the vertical velocity at the free surface in \eqref{bc:stEB} implies that $\alpha = \mathcal O({\rm Fr})$. In the horizontal velocity equation, the leading order velocity ${\rm Fr}\,{\sf d}\mathbf{u}_0$ needs to be of the same order as $\gamma\nabla{\sf d} p_1$, which means that $\gamma = \mathcal O({\rm Fr})$. At the next order, ${\rm Fr}\, \epsilon\,{\sf d}\mathbf{u}_1$ is required to be of the same order as $\gamma^2\nabla{\sf d} p_2$, which implies that $\epsilon = \mathcal O({\rm Fr})$. In the vertical velocity equation, we want hydrostatic balance to be broken at $\mathcal O(\gamma^2)$, which means that ${\rm Fr}\,\sigma^2 {\sf d}w_0$ has to be of the same order as $\gamma^2\frac{\partial}{\partial z}{\sf d}p_2 $. It also implies for our ordering scheme that $\sigma^2 = \mathcal O({\rm Fr})$. In the Boussinesq approximation, we assumed that $\mathcal O(\mathfrak{s})=\mathcal O({\rm Fr})$. To summarise, our ordering scheme is now fixed to be
\begin{equation}
\mathcal O(\alpha) = \mathcal O(\mathfrak{s}) = \mathcal O(\gamma) = \mathcal O(\epsilon) = \mathcal O({\rm Fr}) = \mathcal O(\sigma^2).
\label{os:orderingscheme}
\end{equation}

\subsection{Averaging of Newton's 2nd law in the short time - small wave scaling}\label{stswnewton}
Averaging in the Newtonian equations leads to the following vertically averaged version of \eqref{eq:stEB},
\begin{equation}
\begin{aligned}
\frac{1}{{\rm Fr}}\,{\sf d} \overline{\mathbf{u}} + \frac{1}{\eta}\nabla\cdot(\eta\overline{{\sf d}\boldsymbol \chi_t\otimes \mathbf{u}}) + (\nabla\boldsymbol\xi_i)\cdot\overline{\mathbf{u}}\circ dW_t^i &= -\frac{1}{{\rm Fr}^2}\nabla \overline{{\sf d}p} - \frac{1}{{\rm Ro}} f \hat{\mathbf{z}}\times \overline{{\sf d}\boldsymbol \chi_t} - \frac{1}{{\rm Ro}}\nabla( \overline{\boldsymbol\xi}_i\cdot \mathbf{R})\circ dW_t^i,\\
\frac{1}{{\rm Fr}}{\sf d}\overline{b} + \nabla\cdot(\overline{b {\sf d}\boldsymbol \chi_t})&= 0,\\
\frac{1}{{\rm Fr}}{\sf d}\eta + \nabla\cdot(\eta\overline{{\sf d}\boldsymbol\chi_t}) &= 0,
\end{aligned}
\label{eq:stEBav}
\end{equation}
where $\overline{{\sf d}\boldsymbol \chi_t} $ is the vertical average of ${\sf d}\boldsymbol \chi_t $ in equation \eqref{exp:multi}; namely,
\begin{equation}
\overline{{\sf d}\boldsymbol \chi_t} 
:= {\sf d}\boldsymbol \chi_{0,t} + \epsilon\overline{{\sf d}\boldsymbol \chi_{1,t}}+ o(\epsilon).
\label{eq:sfchi-bar-def} 
\end{equation}
In this part of our discussion, we will not consider a leading order expansion before doing a higher order expansion. Instead, we work with directly with the expansion introduced in \eqref{exp:multi} and use the ordering scheme \eqref{os:orderingscheme} to apply single scale asymptotics. \bigskip

\begin{remark}
It is possible to study the system \eqref{eq:stEBav} on its own. One can simplify the system by dropping the Coriolis terms and assume that the flow is irrotational. The equations \eqref{eq:stEBav} can then be written in the so-called Zakharov-Craig-Sulem formulation. Alternatively, one can reformulate the system in terms of the free surface elevation and the horizontal discharge. Both of these approaches are explained in great detail in lecture notes by D. Lannes \cite{lannes2019lecture}. See also \cite{lannes2013water} for a comprehensive and complete treatment of the general water wave problem and \cite{lannes2005well} for the wellposedness results on the water wave problem in two and three dimensions.
\end{remark}

At leading order in the vertical velocity equation one finds
\begin{equation}
\frac{\partial}{\partial z}{\sf d}p_0 + 1\,dt = 0,
\end{equation}
and from the horizontal velocity equation at the same order,
\begin{equation}
\nabla {\sf d}p_0 = 0,
\end{equation}
which implies hydrostatic balance.  This information determines the leading order pressure, upon integrating in the vertical direction, to find
\begin{equation}
{\sf d}p_0 = (const. - z)dt,
\end{equation}
for the leading order pressure. In remark \ref{semimartingaledecomposition} we discussed how to deal with the semimartingale equations when the stochasticity is absent. This allows us to compute the expression for $p_0$ above. The arbitrary constant is due to integration and will be eliminated later using the boundary condition for the pressure. At the next order in the vertical velocity equation, one finds
\begin{equation}
\frac{\partial}{\partial z}{\sf d}p_1 + b_1\,dt = 0.
\end{equation}
Vertical integration of the expression above leads to
\begin{equation}
{\sf d}p_1 = \left(-\int^z b_1 dz' + \psi(\mathbf{x},t)\right)dt,
\label{eq:p1}
\end{equation}
where $\psi(\mathbf{x},t)$ is an arbitrary function of horizontal coordinates and time, introduced by the integration. From the horizontal velocity equation at the same order, we have
\begin{equation}
{\sf d}\mathbf{u}_0 = -\nabla {\sf d}p_1.
\label{eq:u0}
\end{equation}
By applying the gradient to \eqref{eq:p1} and taking the vertical derivative of \eqref{eq:u0}, we can derive a relation between the horizontal velocity field and the buoyancy,
\begin{equation}
\frac{\partial}{\partial z}{\sf d}\mathbf{u}_0 = \nabla b_1 \,dt. 
\label{eq:u0b}
\end{equation}
From the buoyancy equation at order $\mathcal O(\mathfrak{s})$, it is clear that $b_1$ is independent of time. Upon integrating \eqref{eq:u0b} both vertically and in time, one finds
\begin{equation}
\mathbf{u}_0(\mathbf{x},z,t) = t\int^z\nabla b_1(\mathbf{x},z) dz' + \mathbf{u}_0'(\mathbf{x},t) + \widetilde{\mathbf{u}_0}(\mathbf{x},z).
\label{eq:u0full}
\end{equation}
Unless $\nabla b_1 = 0$, the first term in \eqref{eq:u0full} grows linearly in time. Consequently, we choose the buoyancy $b_1$ to have the following profile
\begin{equation}
b_1(z) = \widetilde{b} - Sz,
\label{eq:b1}
\end{equation}
where $\widetilde{b}$ is some constant background buoyancy and $S$ is some $\mathcal O(1)$ positive constant. Of course, one can choose a more complicated and more realistic dependence on the vertical coordinate, at the cost of making some computations slightly more involved. The first term in \eqref{eq:u0full} now vanishes. The third term in \eqref{eq:u0full} arose due to integration with respect to time, hence $\widetilde{\mathbf{u}_0}$ plays the role of the initial condition. It is also the only term that has $z$-dependence. So, let us choose an initial condition which is independent of the vertical coordinate. This choice leaves us with
\begin{equation}
\mathbf{u}_0(\mathbf{x},t) = \mathbf{u}_0'(\mathbf{x},t) + \widetilde{\mathbf{u}_0}(\mathbf{x}).
\end{equation} 
Hence $\mathbf{u}_0$ has no vertical dependence. We can then use the incompressibility condition \eqref{eq:incompressible} to obtain an expression for the vertical velocity as in \eqref{eq:wdivu}, but now only looking at the leading order component of this relation. This leads to 
\begin{equation}
w_0 = -(z+h)\nabla\cdot \mathbf{u}_0,
\label{eq:w0}
\end{equation}
provided the variations of the bathymetry are small enough. Substituting the expression for the leading order vertical velocity into the vertical velocity equation at order $\mathcal O(\gamma^2)$ yields
\begin{equation}
-(z+h){\sf d}(\nabla\cdot \mathbf{u}_0) + \frac{\partial}{\partial z} {\sf d}p_2 + b_2\,dt = 0.
\end{equation}
From the equation above, we can determine an expression for $p_2$. Rearranging and taking a vertical integral yield 
\begin{equation}
{\sf d}p_2 = \left(\frac{1}{2}z^2+zh\right){\sf d}(\nabla\cdot \mathbf{u}_0) - \int^z b_2 dz'dt + \psi'(\mathbf{x},t).
\label{eq:p2}
\end{equation}
Since the expressions for ${\sf d}p_1$ and ${\sf d}p_2$ in \eqref{eq:p1} and \eqref{eq:p2}, respectively, involve the unknown functions $\psi(\mathbf{x},t)$ and $\psi'(\mathbf{x},t)$, we are not yet in the position to write down  the average of the pressure. By means of the dynamic boundary condition \eqref{bc:dynamic} and the expansion for the pressure in \eqref{exp:multi}, though, we can write
\begin{equation}
\begin{aligned}
0 &= [{\sf d}p_0 + \gamma {\sf d}p_1 + \gamma^2 {\sf d}p_2 + \mathcal O(\gamma^3)]|_{z=\alpha\zeta} dt\\
&= (const.\,dt - \alpha{\sf d}\zeta + \gamma\left(-\int^{\alpha\zeta}b_1 dz' + \psi(\mathbf{x},t)\right)dt + \gamma^2\Bigg[ \left(\frac{1}{2}\alpha^2 \zeta^2 + \alpha\zeta h\right){\sf d}(\nabla\cdot \mathbf{u}_0)\\
&\quad - \left(\int^{\alpha\zeta} b_2 dz' + \psi'(\mathbf{x},t)\right)\Bigg]dt + \mathcal O(\gamma^3).
\end{aligned}
\end{equation}
The difference between the pressure at the free surface and elsewhere in the domain can now be evaluated. In particular, functions that are independent of $z$ will be eliminated in this procedure and we are left with
\begin{equation}
\begin{aligned}
{\sf d}p &= -z\,dt-\alpha{\sf d}\zeta + \gamma \int_z^{\alpha\zeta}b_1dz'dt\\
&\quad  + \gamma^2\left[\left(\frac{1}{2}(z^2 - \alpha^2\gamma^2)+(z-\alpha\gamma)h\right){\sf d}(\nabla\cdot \mathbf{u}_0) + \int_z^{\alpha\zeta}b_2 dz'dt\right] + \mathcal O(\gamma^3).
\end{aligned}
\end{equation}
We can now determine the gradient of the pressure and collect terms that are of order $\mathcal O(\gamma^3)$ or equivalent in the remainder. Since $b_1$ does not depend on the horizontal coordinates, the gradient of $b_1$ vanishes and we have
\begin{equation}
\nabla {\sf d}p = \alpha(1+\widetilde{b})\nabla{\sf d}\zeta + \gamma^2\left[\left(\frac{1}{2}z^2 + zh\right){\sf d}\nabla(\nabla\cdot \mathbf{u}_0) + \int_z^0 \nabla b_2 dz'dt\right]+ \mathcal O(\gamma^3,\alpha^2\gamma,\alpha\gamma^2),
\end{equation}
where the contribution of $\widetilde{b}$ is due to the evaluation of $b_1$ at the free surface boundary. By taking the vertical average of the pressure gradient and switching the order of integration on the $b_2$ term, we obtain
\begin{equation}
\overline{\nabla {\sf d}p} = \alpha(1+\widetilde{b})\nabla{\sf d}\zeta + \gamma^2\left(\frac{1}{3}h^2 {\sf d}\nabla(\nabla\cdot \mathbf{u}_0) + \overline{(z+h)\nabla b_2}dt\right)+ \mathcal O(\gamma^3,\alpha^2\gamma,\alpha\gamma^2).
\end{equation}
At this stage, we can make a choice. We can use the averaged equation for the advection of buoyancy \eqref{eq:ASEB}, or we can use the expanded buoyancy equation and find an equation for the evolution $\overline{(z+h)\nabla b_2}$. The latter choice dictates that we look at the expanded buoyancy equation at order $\mathcal O(\gamma^2)$, where we have
\begin{equation}
{\sf d}b_2 - S(z+h)(\nabla\cdot \mathbf{u}_0)dt = 0.
\end{equation}
Here we have used \eqref{eq:b1} and \eqref{eq:w0}. By taking the gradient, then multiplying by $(z+h)$ and taking the average, we obtain after some algebra
\begin{equation}
{\sf d}\overline{(z+h)\nabla b_2} = S\left(\frac{1}{3}h^2\nabla(\nabla\cdot \mathbf{u}_0)\right)dt.
\end{equation}
Similar to the derivation of the Great Lake equations, the difference between the average of the nonlinearity and the product of the average is of higher order, since $\mathbf{u}_0$ is independent of the vertical coordinate. Therefore, we can also express $\overline{\mathbf{u}} = \mathbf{u}_0 + \epsilon\overline{\mathbf{u}_1} + \mathcal O(\epsilon^2)$. At this stage, one follows \cite{camassa1992dispersive} to introduce the variables
\begin{equation}
\begin{aligned}
\mathbf{A}:&= \overline{(z+h)\nabla b_2},\\
\mathbf{D}:&= \frac{1}{3}h^2\nabla(\nabla\cdot \overline{\mathbf{u}}),
\end{aligned}
\label{def:AD}
\end{equation}
and writes the following set of stochastic partial differential equations (SPDEs),
\begin{equation}
\begin{aligned}
\frac{1}{{\rm Fr}}\,{\sf d}\overline{\mathbf{u}} + (\overline{{\sf d}\boldsymbol \chi}_t\cdot\nabla)\overline{\mathbf{u}} + (\nabla\boldsymbol\xi_i)\cdot\overline{\mathbf{u}}\circ dW_t^i &= -\frac{\alpha}{{\rm Fr}^2}(1+\widetilde{b})\nabla{\sf d}\zeta - \mathbf{A}dt + {\sf d}\mathbf{D}\\
& \qquad - \frac{1}{{\rm Ro}}f\hat{\mathbf{z}}\times\overline{{\sf d}\boldsymbol \chi}_t - \frac{1}{{\rm Ro}}\nabla(\overline{\boldsymbol \xi}_i\cdot \mathbf{R})\circ dW_t^i,\\
\frac{\alpha}{{\rm Fr}}{\sf d}\zeta + \nabla\cdot(\alpha\zeta + h)\overline{{\sf d}\boldsymbol \chi}_t &= 0,\\
{\sf d}\mathbf{A} &= S\mathbf{D}dt.
\end{aligned}
\label{eq:ch1992}
\end{equation}
where $\overline{{\sf d}\boldsymbol \chi_t} $ is defined in equation \eqref{eq:sfchi-bar-def}.

Equations \eqref{eq:ch1992} comprise the stochastic version of those obtained in \cite{camassa1992dispersive}, provided one sets the dynamic boundary condition to $p=\widetilde{p}$, rather than zero. \bigskip

%{\color{blue} \begin{remark}\label{rem:shorttimetowave}
%Note that the equations \eqref{eq:ch1992} are valid for times up to $0\leq t \leq {\rm Fr}$. To have validity for long time, we can transform variables. By the method of characteristics, we can show the following for a scalar or vector valued function $m$ and some scalar or vector valued differentiable function $f$. The transformation of $m(t,\mathbf{x})$ into $f\left(\mathbf{x} + \boldsymbol \kappa t\right)$, corresponds to the equation
%\begin{equation}
%\frac{1}{{\rm Fr}}{\sf d} m  = {\sf d}m + \mathbf{a}\cdot \nabla m.
%\label{eq:sttowave}
%\end{equation} 
%This shows that the short time scale can be eliminated by transforming into a frame that moves with the velocity $\boldsymbol \kappa := \frac{{\rm Fr}}{1-{\rm Fr}}\mathbf{a}$. This observation will be particularly useful when we start investigating the one dimensional models related to \eqref{eq:ch1992}. 
%\end{remark}}

In the special case of deterministic, irrotational motion around the quiescent state $\overline{\mathbf{u}}=0$, the covector quantities $\mathbf{A}$ and $\mathbf{D}$ form an oscillator pair which oscillates with the Brunt-V\"ais\"al\"a frequency $S$. Also, in the deterministic case, an elimination procedure allows one to derive the Kadomtsev-Petviashvili equation and subsequently the Korteweg-De Vries equation for shallow water waves, as is done in \cite{camassa1992dispersive}. The direct approach for the derivations for water wave equations requires the substitution of the velocity field into the free surface equation, which requires time derivatives. In the stochastic case, however, one cannot take these time derivatives; so, the corresponding stochastic shallow water wave equations cannot be derived by using SALT. If instead, one takes a pathwise approach so that at least one time derivative can be taken, then the corresponding water-wave equations can be derived in this framework. In the next subsection, a hierarchy of stochastic water-wave equations is derived from the variational point of view.\bigskip

The set of equations \eqref{eq:ch1992} can be solved by observing that the operator $F$, defined by $F\overline{\mathbf{u}}:= \overline{\mathbf{u}} - \frac{\gamma^2}{3}h^2\nabla(\nabla\cdot\overline{\mathbf{u}})$, is a positive definite, self-adjoint and invertible operator. The Kelvin circulation theorem takes the following form for the equations in \eqref{eq:ch1992},
\begin{equation}
\frac{1}{{\rm Fr}}{\sf d}\oint_{c(\overline{{\sf d}\boldsymbol \chi_t})}\left(\overline{\mathbf{u}} - \mathbf{D} + \frac{1}{{\rm Ro}}\mathbf{R}\right)\cdot d\mathbf{x} = - \frac{1}{{\rm Fr}^2}\oint_{c(\overline{{\sf d}\boldsymbol \chi}_t)} \left((\overline{{\sf d}\boldsymbol \chi_t}\cdot\nabla)\mathbf{D} + (\nabla\overline{\boldsymbol \xi}_i)\cdot\mathbf{D}\circ dW_t^i- \mathbf{A}dt\right)\cdot d\mathbf{x}.
\end{equation}
Note that besides the buoyancy term $\mathbf{A}$, also transport terms show up on the right hand side. These transport terms indicate that these fluid equations are not geometric, in the sense that geometric fluid equations will only feature the relevant forces on the right hand side. The reason that these transport terms appear is that strict asymptotics sees the advection constraint \eqref{eq:nondimconstraint} as two individual terms, rather than as two objects that should always go together. Possibly, a multiscale analysis approach would be able to resolve this problem. This issue is discussed extensively in \cite{gjaja1996self}. We will resolve this issue by linking these two objects in a variational principle for a system closely related to \eqref{eq:ch1992}. First we will investigate the one-dimensional equations related to \eqref{eq:ch1992}.

\subsection{Stochastic Benjamin-Bona-Mahony equations}\label{sec:bbm}
From the stochastic CH92 equations in \eqref{eq:ch1992}, one cannot derive the stochastic Kadomtsev-Petviashvili equation and further simplify to obtain the stochastic Korteweg--De Vries equation. This is due to the fact that an elimination procedure involving time derivatives was used. However, by restricting to one dimensional motion, we do obtain the stochastic versions of familiar one dimensional water wave models. To be able to restrict to one dimension, we ignore the effect of rotation. The variable $\mathbf{A}$ is related to the buoyancy at higher order. By replacing $b_2$ with the vertical average $\overline{b}_2$ in the definition of $\mathbf{A}$ in \eqref{def:AD}, we can explicitly evaluate the integral. In calculating the integral, we keep in mind that the equations are written up to order $\mathcal O(\gamma^2)$. This requires us to drop the free surface terms that arise due to the vertical integral. The equations that we obtain from \eqref{eq:ch1992} are 
\begin{equation}
\begin{aligned}
\frac{1}{{\rm Fr}} {\sf d}\overline{u} - \frac{\gamma^2}{3\,{\rm Fr}^2}h^2 {\sf d}\overline{u}_{xx} + \overline{{\sf d}\chi}_t\, \overline{u}_x + \overline{u} \,(\overline{\xi}_i)_x\circ dW_t^i &= -\alpha(1+\widetilde{b}){\sf d}\eta_x  - \frac{\gamma^2}{2}
h^2 (\overline{b}_2)_x dt,\\
\frac{1}{{\rm Fr}}{\sf d}\eta + (\eta\,\overline{{\sf d}\chi}_t)_x &= 0,\\
\frac{1}{2}h^2 {\sf d}\overline{b}_2 &= \frac{S}{3}h^2 u_{xx}dt.
\end{aligned}
\label{eq:BBM3}
\end{equation} 
The set of equations given by \eqref{eq:BBM3} can be interpreted as a non--unidirectional, stochastic version of the Benjamin-Bona-Mahony (BBM) equation, first derived in \cite{benjamin1972model}, that includes the effects of depth and buoyancy stratification. Since this set \eqref{eq:BBM3} consists of three equations, we will refer to this set as BBM3. Upon ignoring the effect of buoyancy stratification, we will obtain the two component version of BBM3, which we will call BBM2. This set of equations is given by 
\begin{equation}
\begin{aligned}
\frac{1}{{\rm Fr}} {\sf d}\overline{u} - \frac{\gamma^2}{3\,{\rm Fr}^2}h^2 {\sf d}\overline{u}_{xx} + \overline{{\sf d}\chi}_t\, \overline{u}_x + \overline{u} \,(\overline{\xi}_i)_x\circ dW_t^i &= -\alpha(1+\widetilde{b}){\sf d}\eta_x,\\
\frac{1}{{\rm Fr}}{\sf d}\eta + (\eta\,\overline{{\sf d}\chi}_t)_x &= 0,\\
\end{aligned}
\label{eq:BBM2}
\end{equation}
The two component version \eqref{eq:BBM2} still is affected by the variations of the free surface. We assume that the bathymetry is flat, which means that we let $h\mapsto h_0$ and $h_0$ is constant in space and in time. We also assume that the free surface elevation is zero. These assumptions lead to the stochastic BBM equation, given by 
\begin{equation}
\frac{1}{{\rm Fr}}{\sf d}\overline{u} - \frac{\gamma^2}{3\,{\rm Fr}^2}h_0^2{\sf d}\overline{u}_{xx} + \overline{{\sf d}\chi}_t\, \overline{u}_x + \overline{u}\,(\overline{\xi}_i)_x\circ dW_t^i = 0.
\label{eq:BBM}
\end{equation}
Upon including linear wave speed in formulation of \eqref{eq:BBM} and ignoring stochasticity, we arrive at the celebrated BBM equation \cite{benjamin1972model},
\begin{equation}
\frac{1}{{\rm Fr}}\overline{u}_t - \frac{\gamma^2}{3\,{\rm Fr}^2}h_0^2 \overline{u}_{xxt} + \overline{u}\,\overline{u}_x + \kappa\overline{u}_x = 0.
\label{eq:uniBBM}
\end{equation}
Here $\kappa$ is a positive constant that enforces unidirectionality. The deterministic unidirectional BBM equation \eqref{eq:uniBBM} is similar in shape to the Korteweg--De Vries equation, but is not completely integrable. Next, we consider the averaging procedure in this section from the Euler-Poincar\'e perspective.

\subsection{Averaged Euler-Poincar\'e Lagrangian for short time - small wave scaling}\label{sec:variationalshorttime}
In the previous section, we used direct asymptotics to derive the stochastic version of the equations in \cite{camassa1992dispersive}. These equations failed to satisfy the Kelvin circulation theorem in a reasonable form. This difficulty will be overcome in the Euler-Poincar\'e approach, because the variational approach is able to cope with arbitrary Strouhal number. The starting point is the thermal rotating Green-Naghdi Lagrangian in \eqref{lag:trgn}. This time, we are not interested in the rigid lid limit, so our action is given by
\begin{equation}
S_{TRGN} = \int_{t_1}^{t_2} \ell_{TRGN}\,dt.
\label{action:trgn}
\end{equation}
We will now take variations in much the same way as done for the Great Lake equations in the Euler-Poincar\'e approach. However, there is a crucial difference. In the present scaling regime, the Strouhal number ${\rm Sr}$ is not equal to unity. Instead, we have $Sr=1/{\rm Fr}$, which is the inverse Froude number. Consequently, in the present case, the Euler-Poincar\'e variations of the velocities are taken as,
\begin{equation}
\begin{aligned}
\delta \overline{\mathbf{u}}\,dt &= {\rm Sr}\,{\sf d}\mathbf{v} - [\overline{{\sf d}\boldsymbol \chi_t}, \mathbf{v}] = \frac{1}{{\rm Fr}}{\sf d}\mathbf{v} - [\overline{{\sf d}\boldsymbol \chi_t}, \mathbf{v}].
\label{var:st}
\end{aligned}
\end{equation} 
The averaging has already occured in deriving the Lagrangian \eqref{lag:trgn}, where the Strouhal number does not explictly appear. In the variational approach, the Strouhal number appears in the variation of the velocity field. We stick with the ${\rm Sr}$ notation to show the flexibility that one has with the variational approach. By selecting the value of the Strouhal number later, the results of the previous section can be recovered by truncating higher order terms. The variational derivatives of the nondimensional Lagrangian $\ell_{rtGN}$ in equation \eqref{lag:trgn} are the following:
\begin{equation}
\begin{aligned}
\frac{\delta \ell_{TRGN}}{\delta \overline{\mathbf{u}}} 
&= 
\eta\overline{\mathbf{u}} - \frac{\sigma^2}{3}\nabla(\eta^3\nabla\cdot \overline{\mathbf{u}}) - \frac{\sigma^2}{2}\nabla\big(\eta^2(\overline{\bu}\cdot\nabla h)\big) + \frac{\sigma^2}{2}\eta^2(\nabla\cdot\overline{\bu})\nabla h + \sigma^2\eta(\overline{\bu}\cdot\nabla h)\nabla h + \frac{1}{{\rm Ro}}\eta\mathbf{R}\,,
\\
\frac{\delta \ell_{TRGN}}{\delta \eta} 
&=
 \frac{1}{2}|\overline{\mathbf{u}}|^2 + \frac{\sigma^2}{2}(\eta\nabla\cdot\overline{\bu}+\overline{\bu}\cdot\nabla h)^2
 + \frac{1}{{\rm Ro}}(\overline{\mathbf{u}}\cdot \mathbf{R}) - \frac{1}{{\rm Fr}^2}(1+\mathfrak{s}\overline{b})(\eta-h) - \frac{1}{{\rm Fr}^2}{\sf d}\pi,
\\
\frac{\delta \ell_{TRGN}}{\delta \overline{b}} 
&= 
-\frac{\mathfrak{s}}{2\,{\rm Fr}^2}(\eta^2-2\eta h) \,, \\
\end{aligned}
\label{var:TRGN}
\end{equation}
For notational convenience, we define
\begin{equation}
h\overline{\mathbf{V}} = h\overline{\mathbf{u}} + \left[-\frac{\sigma^2}{3}\nabla(\eta^3\nabla\cdot\overline{\mathbf{u}}) - \frac{\sigma^2}{2}\nabla(\eta^2\overline{\mathbf{u}}\cdot\nabla h) + \frac{\sigma^2}{2}\eta^2(\nabla\cdot\overline{\mathbf{u}})\nabla h + \sigma^2 \eta(\overline{\mathbf{u}}\cdot\nabla h)\nabla h\right].
\end{equation}
In the rigid lid case, one recovers \eqref{def:ellipticoperator}. A careful application of the Lax-Milgram theorem is able to show that $\overline{\mathbf{u}}$ depends continuously on $\overline{\mathbf{V}}$. Using this notation, an application of the stochastic Euler-Poincar\'e theorem \ref{thm:SEP} with the velocity variations given in \eqref{var:st} and the variational derivatives in \eqref{var:TRGN} of the Lagrangian $\ell_{TRGN}$ in \eqref{lag:trgn} yields the following SPDEs,
\begin{equation}
\begin{aligned}
{\rm Sr}\,{\sf d}\overline{\mathbf{V}}
+ (\overline{{\sf d}\boldsymbol \chi_t}\cdot\nabla)\overline{\mathbf{V}}
+ (\nabla \overline{{\sf d}\boldsymbol \chi_t})\cdot\overline{\mathbf{V}}
&=  - \frac{\alpha}{{\rm Fr}^2}\nabla\big((1+\mathfrak{s}\overline{b})\zeta\big)dt 
+ \frac{1}{2}\nabla|\overline{\mathbf{u}}|^2 dt
+ \frac{\sigma^2}{2}\nabla(\eta\nabla\cdot\overline{\bu} + \overline{\bu}\cdot\nabla h)^2\,dt
\\
&\quad
+ \frac{\mathfrak{s}}{2\,{\rm Fr}^2}(\alpha\zeta - h)\nabla\overline{b}\,dt
- \frac{1}{{\rm Ro}}f\hat{\mathbf{z}}\times\overline{{\sf d}\boldsymbol \chi_t} 
- \frac{1}{{\rm Ro}}\nabla(\overline{\boldsymbol\xi}_i\cdot \mathbf{R})\circ dW_t^i
,\\
{\rm Sr}\,\alpha{\sf d}\zeta + \nabla\cdot\big((\alpha\zeta+h) \overline{{\sf d}\boldsymbol \chi_t}\big) &= 0
,\\
{\rm Sr}\,{\sf d}\overline{b} + \overline{{\sf d}\boldsymbol \chi_t}\cdot\nabla\overline{b} &= 0.
\end{aligned}
\label{eq:KelThm-VCH92}
\end{equation}
where $\overline{{\sf d}\boldsymbol \chi_t} $ is defined in equation \eqref{eq:sfchi-bar-def}. It is useful to note that $\eta^{-1}\delta\ell_{TRGN}/\delta \overline{b} = (\mathfrak{s}/2)(\eta - 2h) = (\mathfrak{s}/2)(\alpha\zeta - h)$, since $\eta = \alpha\zeta + h$. These equations do satisfy a Kelvin circulation theorem, as they have been derived from the Euler-Poincar\'e variational principle. The circulation theorem takes the following form
\begin{equation}
\label{eq:KelThm-rsGN}
\begin{aligned}
{\rm Sr}\,{\sf d} \oint_{c(\overline{{\sf d}\boldsymbol \chi_t})} \left(\overline{\mathbf{V}}
+ \frac{1}{{\rm Ro}}\mathbf{R} \right)\cdot d\mathbf{x} 
&= \oint_{c(\overline{{\sf d}\boldsymbol \chi_t})} \frac{\mathfrak{s}}{2\,{\rm Fr}^2}(\alpha\zeta - h) \nabla \overline{b} \cdot d\mathbf{x}
\\&= 
\int\!\!\int_{\partial S = {c(\overline{{\sf d}\boldsymbol \chi_t})}} 
\frac{\mathfrak{s}}{2\,{\rm Fr}^2} \nabla(\alpha\zeta - h)\times \nabla \overline{b}\cdot d\mathbf{S} \,dt\,.
\end{aligned}
\end{equation}

As expected from equations \eqref{eq:circLaw} and \eqref{thm:KCEB} for the Kelvin circulation theorem which follows from the Euler-Poincar\'e equation \eqref{eq:EPeq} in three dimensions, circulation is created by misalignment of the gradients of vertically averaged buoyancy $\overline{b}$ and its dual quantity $\eta^{-1}\delta\ell_{TRGN}/\delta \overline{b}$, for the thermal rotating Green-Naghdi Lagrangian in equation \eqref{lag:trgn}. This is a balanced statement, because gradients of the bathymetry are assumed to be small. Interestingly, the misalignment of the gradient of vertically averaged buoyancy $\overline{b}$ and the difference $(\alpha\zeta - h)$ generates horizontal circulation (vertical vorticity). This represents a  barotropic mechanism for cyclogenesis (emergence of horizontal circulation, or eddies) in the ocean. The dispersion relation that corresponds to the linearised, deterministic version of equations \eqref{eq:KelThm-VCH92} is discussed in Appendix \ref{app:dispersion}.  A Kelvin circulation theorem similar to that in \eqref{eq:KelThm-rsGN} holds for the thermal rotating shallow water (TRSW) equations, as discussed in Appendix \ref{app:TRSW}.\bigskip

\begin{remark}[Comparison with JEBAR for ocean currents]
For the deterministic case, one replaces $c(\overline{{\sf d}\boldsymbol \chi_t})\to c(\overline{\mathbf{u}})$ and  the circulation theorem in \eqref{eq:KelThm-rsGN} recalls an aspect of the JEBAR (Joint Effect of Baroclinicity and Bottom Relief) approach for modelling the dynamics of ocean currents \cite{sarkisyan1971combined, cane1998utility, mellor1999comments, sarkisyan2006forty, colin2016direct}. Namely, the creation of circulation in \eqref{eq:KelThm-rsGN} occurs when the gradients of certain fluid properties are not aligned with the gradient of the bottom topography, $\nabla h(\mathbf{x})$. 

There are also may differences of \eqref{eq:KelThm-rsGN} from JEBAR. In particular, the circulation dynamics in \eqref{eq:KelThm-rsGN} represents Kelvin's theorem as derived from a vertically averaged and asymptotically expanded Hamilton's principle for Euler's fluid equations for the stochastic dynamics of an incompressible, thermal, rotating fluid flow with a free upper surface moving under the influence of gravity. Nonetheless, many of the physical principles underlying the derivation of \eqref{eq:KelThm-rsGN} also relate to principles which could be applied in the oceanographic setting for JEBAR. Hence, it may be advisable to investigate the utility of the present stochastic, asymptotic, vertically-averaged variational approach for some applications in oceanography.
\end{remark}

\paragraph{Potential vorticity.}
In the circulation theorem for the rotating, thermal, Great Lake equations in equation \eqref{thm:KCGL}, the circulation is generated by the misalignment between the horizontal gradient of the bathymetry and the horizontal gradient of the buoyancy. Here, we have seen that the misalignment of horizontal gradients of the free surface height with the horizontal gradient of the buoyancy also contributes to the generation of circulation. In terms of the potential vorticity given by
\begin{equation}
q := \eta^{-1}\big({\mathbf{\hat{z}}}\cdot\nabla\times(\overline{\mathbf{V}}+{\rm Ro}^{-1}\mathbf{R})\big)
\,,\label{PV-defn}
\end{equation}
the generation of circulation is accompanied by the following 
\begin{equation}
{\rm Sr}\,{\sf d}q + (\overline{{\sf d}\boldsymbol \chi}_t\cdot\nabla)q =  \frac{\mathfrak{s}}{2{\rm Fr}^2\eta}\,{\mathbf{\hat{z}}}\cdot\nabla(\alpha\zeta-h)\times\nabla\overline{b}.
\end{equation}
This shows that PV will also be generated by this misalignment of horizontal gradients. Equations \eqref{eq:KelThm-VCH92} also possess an infinity of conserved integral quantities of the following form
\begin{equation}
\label{eq:Casimir-rsGN}
C_{f,g} = \int_{CS} \big( f(\overline{b}) +  qg(\overline{b})\big)\eta\, dx dy,
\end{equation}
for arbitrary differentiable functions $f$, $g$ and for boundary conditions ${\sf d}\overline{\boldsymbol \chi}_t \cdot \mathbf{n} = 0$, $\nabla \overline{b}\times\mathbf{n}=0$. Invariance of the vertically averaged buoyancy $\overline{b}$ as it is advected along the tangential stochastic flow on the boundary is consistent with the latter condition, which requires the boundary to be a level set of $\overline{b}$. This can be shown by means of a direct computation using the equations of motion and the boundary conditions.
\bigskip

\subsection{Stochastic Camassa-Holm equations}\label{sec:stochCHeqns}
This section considers a sequence of reductions of the Lagrangian $\ell_{TRGN}(\overline{\mathbf{u}}, \eta, \overline{b})$ in equation \eqref{lag:trgn} in one spatial dimension which will eventually lead to the stochastic Camassa-Holm (CH) equation, considered in \cite{holm2016variational,crisan2018wave}
\begin{equation}
{\rm Sr}\,{\sf d}\overline{m} 
+ \big(\overline{m}\partial_x + \partial_x\overline{m}\big)\overline{{\sf d}\chi_t} = 0\,.
\end{equation}
In one dimension, we assume a flat bathymetry profile $h_0$ and ignore the effect of rotation. Applying these approximations to the thermal Green-Naghdi Lagrangian $\ell_{TRGN}(\overline{\mathbf{u}}, \eta, \overline{b})$ in equation \eqref{lag:trgn} yields the following Lagrangian at order $\mathcal O(\sigma^2)$,
\begin{equation}
\ell_{CH3} = \int_{-\infty}^{\infty} \left(\frac{1}{2}\overline{u}^2 + \frac{\sigma^2}{6} \eta^2 \overline{u}_x^2 - \frac{1}{2\,{\rm Fr}^2}(1+\mathfrak{s}\overline{b})(\eta - 2h_0)\right)\eta\,dx,
\label{lag:CH3}
\end{equation}
where we have completed the square on the potential energy term.
The domain of flow is taken to be the entire real line, rather than a compact line between two lateral boundaries as illustrated in figure \ref{fig:domain}. Boundary conditions on the real line require the vertically averaged velocity $\overline{u}$ and its horizontal spatial derivative $\overline{u}_x$ to vanish in the limit ${|x|\to\infty}$. The variational derivatives of the Lagrangian $\ell_{CH3}$ in \eqref{lag:CH3} are given by
\begin{equation}
\begin{aligned}
\overline{m}:= \frac{\delta\ell_{CH3}}{\delta \overline{u}} 
&= 
\eta \overline{u} - \frac{\sigma^2}{3}(\eta^3\overline{u}_{x})_x,
\\
\frac{\delta\ell_{CH3}}{\delta \eta} 
&= 
- \frac{1}{{\rm Fr}^2}(1+\mathfrak{s}\overline{b})(\eta - h_0),
\\
\frac{\delta\ell_{CH3}}{\delta \overline{b}}
&= 
-\frac{\mathfrak{s}}{2\,{\rm Fr}^2}(\eta^2-2\eta h_0).
\end{aligned}
\end{equation}
An application of the stochastic Euler-Poincar\'e theorem \ref{thm:SEP} then leads to the following set of three stochastic equations
\begin{equation}
\begin{aligned}
{\rm Sr}\,{\sf d}\overline{m} 
+ \big(\overline{m}\partial_x + \partial_x\overline{m}\big)\overline{{\sf d}\chi}_t
&= 
-\frac{1}{{\rm Fr}^2}\eta\big((1+\mathfrak{s}\overline{b})(\eta-h_0)\big)_x\,dt + \frac{\mathfrak{s}}{2\,{\rm Fr}^2}(\eta^2-2\eta h_0)\overline{b}_x\,dt,
\\
{\rm Sr}\,{\sf d}\eta + (\eta\, \overline{{\sf d}\chi}_t)_x &= 0,
\\
{\rm Sr}\,{\sf d}\overline{b} + \overline{{\sf d}\chi}_t\,\overline{b}_x &= 0.
\end{aligned}
\label{eq:CH3}
\end{equation}
The set of equations \eqref{eq:CH3} defines the three-component stochastic Camassa-Holm system (CH3). The stochastic evolution equation for momentum $\overline{m}$ includes the effects of varying depth and horizontal variations of the buoyancy. There follows a continuity equation for depth, $\eta$, and a scalar advection equation for buoyancy, $\overline{b}$. \bigskip

\begin{remark}[Is the deterministic CH3 case completely integrable?]
An investigation is underway elsewhere to determine whether the Lie--Poisson Hamiltonian system of CH3 equations in \eqref{eq:CH3} is completely integrable  in the deterministic case, where it simplifies to
\begin{equation}
\begin{aligned}
{\rm Sr}\,{\partial_t}\overline{m} 
+ \big(\overline{m}\partial_x + \partial_x\overline{m}\big)\overline{u}
&= 
-\frac{1}{{\rm Fr}^2}\eta\big((1+\mathfrak{s}\overline{b})(\eta-h_0)\big)_x + \frac{\mathfrak{s}}{2\,{\rm Fr}^2}(\eta^2-2\eta h_0)\overline{b}_x\,,
\\
{\rm Sr}\,{\partial_t}\eta + (\eta\, \overline{u} )_x &= 0,
\\
{\rm Sr}\,{\partial_t}\overline{b} + \overline{u}\,\overline{b}_x &= 0.
\end{aligned}
\label{eq:CH3-det}
\end{equation}
\end{remark}

We proceed farther now in the stochastic case by assuming that the vertically averaged buoyancy  $\overline{b}$ is constant in both space and time, so that we may replace $\overline{b}(x,t)\mapsto b_0$; a constant, or equivalently, by letting the stratification parameter tend to zero, $\mathfrak{s}\to 0$. Under this assumption, the Lagrangian $\ell_{CH3}$ simplifies, since the buoyancy term no longer contributes to the dynamics, and we arrive at the following Lagrangian $\ell_{CH2}$ for the stochastic two component Camassa-Holm (CH2) system:
\begin{equation}
\ell_{CH2} = \int_{-\infty}^{\infty} \left(\frac{1}{2}\overline{u}^2 + \frac{\sigma^2}{6}\eta^2 \overline{u}_x^2 - \frac{1}{2\,{\rm Fr}^2}(\eta - 2h_0)\right)\eta\, dx.
\label{lag:CH2}
\end{equation}
The variational derivatives of the Lagrangian $\ell_{CH2}$ in \eqref{lag:CH2} are given by
\begin{equation}
\begin{aligned}
\overline{m}:=\frac{\delta\ell_{CH2}}{\delta \overline{u}} 
&= 
\eta\overline{u} - \frac{\sigma^2}{3}(\eta^3\overline{u}_x)_x,
\\
\frac{\delta\ell_{CH2}}{\delta \eta} 
&= 
- \frac{1}{{\rm Fr}^2}(\eta - h_0).
\end{aligned}
\end{equation}
An application of the stochastic Euler-Poincar\'e theorem \ref{thm:SEP} with these variational derivatives  yields the following motion equation and advection law,
\begin{equation}
\begin{aligned}
{\rm Sr}\,{\sf d}\overline{m} 
+ \big(\overline{m}\partial_x + \partial_x\overline{m}\big)\overline{{\sf d}\chi}_t
&= 
-\frac{1}{{\rm Fr}^2}\eta\eta_x\,dt,
\\
{\rm Sr}\,{\sf d}\eta + (\eta\, \overline{{\sf d}\chi}_t)_x &= 0.
\end{aligned}
\label{eq:CH2}
\end{equation}
The set of equations \eqref{eq:CH2} is the stochastic two component Camassa-Holm (CH2) system. In the deterministic case, this set of equations is a completely integrable Hamiltonian system, as shown first by \cite{chen2006two}.

Finally, we will assume that the free surface elevation in the CH2 Lagrangian $\ell_{CH2}$ in \eqref{lag:CH2} is negligible. This assumption neglects the potential energy term in $\ell_{CH2}$, which then reduces to
\begin{equation}
\ell_{CH} = \int_{-\infty}^{\infty} \left(\frac{1}{2}\overline{u}^2 + \frac{\sigma^2}{6}h_0^2 \overline{u}_x^2\right) h_0 \, dx.
\label{lag:CH}
\end{equation}
The variation of the CH Lagrangian \eqref{lag:CH} with respect to the velocity $\overline{u}$ yields
\begin{equation}
\overline{m}:=\frac{\delta\ell_{CH}}{\delta \overline{u}} = h_0\overline{u} - \frac{\sigma^2}{3}h_0^3\overline{u}_{xx} 
\end{equation}
An application of the stochastic Euler-Poincar\'e theorem \ref{thm:SEP} then implies the SPDE,
\begin{equation}
{\rm Sr}\,{\sf d}\overline{m} 
+ \big(\overline{m}\partial_x + \partial_x\overline{m}\big)\overline{{\sf d}\chi}_t
= 0.
\label{eq:CH-nodisp}
\end{equation}
Equation \eqref{eq:CH-nodisp} is the \textit{dispersionless} stochastic Camassa-Holm equation, whose singular `peakon' solutions have been studied in \cite{holm2016variational, crisan2018wave}. Including cubic linear dispersion in the stochastic Camassa-Holm equation yields
\begin{equation}
{\rm Sr}\,{\sf d}\overline{m} 
+ \big(\overline{m}\partial_x 
+ \partial_x\overline{m} + \gamma \partial_x^3\big)\overline{{\sf d}\chi}_t
= 0\,.
\label{eq:CH-cubicdisp}
\end{equation}
The solution properties of this equation has been studied in \cite{holm2016variational, bendall2019perspectives}. When terms of order $O(\sigma^2)$ are neglected in equation \eqref{eq:CH-cubicdisp}, it reduces further  to the stochastic KdV equation, 
\begin{equation}
{\rm Sr}\,{\sf d}\overline{u} 
+ \big(\overline{u}\partial_x 
+ \partial_x\overline{u} + \gamma \partial_x^3\big)\overline{{\sf d}\chi}_t
= 0\,,
\label{eq:SKdV-cubicdisp}
\end{equation}
which has been studied in \cite{woodfield2019thesis}. The deterministic CH equation was first derived in \cite{camassa1993integrable, camassa1994new}, by using asymptotics on the Hamiltonian side. Here the stochastic CH equation has been derived by means of asymptotics in the Lagrangian for the rotating, thermal, Green-Naghdi equations \eqref{lag:trgn} followed by applying the stochastic Euler-Poincar\'e theorem to the approximated Lagrangian at a variety of levels.

\subsection{Differences between the Newtonian and variational approaches}\label{sec:NewVarDiffs}
There are several striking differences between the equations that one derives from the Newtonian approach and from the Euler-Poincar\'e approach, as illustrated with underbraces below. The most important difference is that the time derivative of $\mathbf{D}$ no longer appears explicitly in the equations above. Instead, the dynamical variable $\mathbf{V}$ appears naturally, as it did for the Great Lake equations  in \eqref{thm:KCGL}. The pressure and the buoyancy term also take slightly different forms. The averaged equations \eqref{eq:stEBav} indicate that the usage of the buoyancy equation is natural. In the Newtonian approach, the buoyancy only has dynamics at order $\sigma^4$, since $b_1$ was calculated explicitly and shown only to depend on the vertical coordinate. This explains the sole appearance of $b_2$ in the buoyancy equation. In the variational approach, we do not calculate the explicit profile of $b_1$, but instead we introduce a vertically averaged buoyancy in the Lagrangian. This means that the buoyancy is still allowed to vary horizontally, which can be seen in the equation for the buoyancy. The effect of the horizontal dependence of the buoyancy is important for the generation of horizontal circulation, as noticed in \eqref{eq:KelThm-rsGN}. Below we have expressed the two sets of equations in terms of the same variables so that the differences and similarities are clear.
\bigskip

\noindent CH92 equations:
\begin{align}
\begin{split}
\frac{1}{{\rm Fr}}{\sf d}\overline{\mathbf{u}} 
- \frac{\sigma^2}{3\,{\rm Fr}}h^2{\sf d}\nabla(\nabla\cdot \overline{\mathbf{u}})
&+ (\overline{{\sf d}\boldsymbol \chi}_t\cdot\nabla)\overline{\mathbf{u}} 
+ (\nabla\overline{{\sf d}\boldsymbol\chi}_t)\cdot\overline{\mathbf{u}}
\\&= -\frac{\alpha}{{\rm Fr}^2}\nabla\big((1+\widetilde{b})\zeta\big)\,dt
+ \frac{1}{2}\nabla|\overline{\mathbf{u}}|^2dt
- \overline{(z+h)\nabla b_2}\,dt
\\
& \quad - \frac{1}{{\rm Ro}}f\hat{\mathbf{z}}\times\overline{{\sf d}\boldsymbol \chi}_t
- \frac{1}{{\rm Ro}}\nabla(\overline{\boldsymbol \xi}_i\cdot \mathbf{R})\circ dW_t^i,
\\
\frac{\alpha}{{\rm Fr}}{\sf d}\zeta 
+ \nabla\cdot\big((\alpha\zeta + h)\overline{{\sf d}\boldsymbol \chi}_t\big) &= 0,
\\
{\sf d}\overline{(z+h)\nabla b_2} 
&= \frac{S}{3}h^2\nabla(\nabla\cdot \overline{\mathbf{u}})dt.
\end{split}
\label{eq:NCH92}
\end{align}
Thermal rotating Green--Naghdi equations:
\begin{equation}
\begin{aligned}
h\overline{\mathbf{V}} = h\overline{\mathbf{u}} + \left[-\frac{\sigma^2}{3}\nabla(\eta^3\nabla\cdot\overline{\mathbf{u}}) \right.&\left.- \frac{\sigma^2}{2}\nabla(\eta^2\overline{\mathbf{u}}\cdot\nabla h) + \frac{\sigma^2}{2}\eta^2(\nabla\cdot\overline{\mathbf{u}})\nabla h + \sigma^2 \eta(\overline{\mathbf{u}}\cdot\nabla h)\nabla h\right],\\
{\rm Sr}\,{\sf d}\overline{\mathbf{V}}
+ (\overline{{\sf d}\boldsymbol \chi}_t\cdot\nabla)\overline{\mathbf{V}}
+ (\nabla \overline{{\sf d}\boldsymbol \chi}_t)\cdot\overline{\mathbf{V}}
&=  - \frac{\alpha}{{\rm Fr}^2}\nabla\big((1+\mathfrak{s}\overline{b})\zeta\big)dt 
+ \frac{1}{2}\nabla|\overline{\mathbf{u}}|^2 dt
+ \frac{\sigma^2}{2}\nabla(\eta\nabla\cdot\overline{\bu} + \overline{\bu}\cdot\nabla h)^2\,dt
\\
&\quad
+ \frac{\mathfrak{s}}{2\,{\rm Fr}^2}(\alpha\zeta - h)\nabla\overline{b}\,dt
- \frac{1}{{\rm Ro}}f\hat{\mathbf{z}}\times\overline{{\sf d}\boldsymbol \chi}_t
- \frac{1}{{\rm Ro}}\nabla(\overline{\boldsymbol\xi}_i\cdot \mathbf{R})\circ dW_t^i
,\\
{\rm Sr}\,\alpha{\sf d}\zeta + \nabla\cdot\big((\alpha\zeta+h) \overline{{\sf d}\boldsymbol \chi}_t\big) &= 0
,\\
{\rm Sr}\,{\sf d}\overline{b} + \overline{{\sf d}\boldsymbol \chi}_t\cdot\nabla\overline{b} &= 0.
\end{aligned}
\label{eq:trgnfinal}
\end{equation}
where $\overline{{\sf d}\boldsymbol \chi_t} $ is defined in equation \eqref{eq:sfchi-bar-def}. Evaluating the Strouhal number ${\rm Sr} = \frac{1}{{\rm Fr}}$ and truncating at order $\mathcal{O}(1)$ in \eqref{eq:trgnfinal} provides the following set of equations 
\begin{equation}
\begin{aligned}
\frac{1}{\rm Fr}{\sf d}\overline{\mathbf{u}} - \frac{\sigma^2}{3\,{\rm Fr}}h^2{\sf d}\nabla(\nabla\cdot\overline{\bu}) 
&+ (\overline{{\sf d}\boldsymbol \chi}_t\cdot\nabla)\overline{\mathbf{u}}
+ (\nabla \overline{{\sf d}\boldsymbol \chi}_t)\cdot\overline{\mathbf{u}}
\\
&=  - \frac{\alpha}{{\rm Fr}^2}\nabla\big((1+\mathfrak{s}\overline{b})\zeta\big)dt 
+ \frac{1}{2}\nabla|\overline{\mathbf{u}}|^2 dt
+ \frac{\mathfrak{s}}{2\,{\rm Fr}^2}(\alpha\zeta - h)\nabla\overline{b}\,dt
\\&\quad
- \frac{1}{{\rm Ro}}f\hat{\mathbf{z}}\times\overline{{\sf d}\boldsymbol \chi}_t
- \frac{1}{{\rm Ro}}\nabla(\overline{\boldsymbol\xi}_i\cdot \mathbf{R})\circ dW_t^i
,\\
\frac{\alpha}{{\rm Fr}}{\sf d}\zeta + \nabla\cdot\big((\alpha\zeta+h) \overline{{\sf d}\boldsymbol \chi}_t\big) &= 0
,\\
\frac{1}{{\rm Fr}}{\sf d}\overline{b} + \overline{{\sf d}\boldsymbol \chi}_t\cdot\nabla\overline{b} &= 0.
\end{aligned}
\label{eq:trgnfinal2}
\end{equation}
There are still some differences between \eqref{eq:NCH92} and \eqref{eq:trgnfinal2}. In the variational approach, we introduce the vertically averaged buoyancy which gives rise to terms that create horizontal circulation, rather than introducing an explicit profile. The original CH92 equations in \eqref{eq:NCH92} were derived in \cite{camassa1992dispersive} by applying vertical averaging and strict asymptotics in the unapproximated equations in the form of Newton's force law for the fluid. Asymptotics in Strouhal number breaks the Kelvin circulation theorem. The thermal rotating Green-Naghdi equations in \eqref{eq:trgnfinal} have the following advantages over the CH92 equations

\begin{enumerate}
\item
They introduce a dynamical equation for the vertically averaged buoyancy, $\overline{b}$; 
\item
The dynamics of the vertically averaged buoyancy, $\overline{b}$, contributes to the pressure terms;
\item
They restore the Kelvin circulation theorem seen in equation \eqref{eq:KelThm-VCH92}; 
\item
They reveal a barotropic mechanism for horizontal circulation (cyclogenesis), as seen in equation \eqref{eq:KelThm-VCH92}; and
\item They allow for a hierarchy of Camassa-Holm equations to be derived, see subsection \ref{sec:stochCHeqns}.
\end{enumerate}

\section{Conclusion}\label{sec:Conclude}

\paragraph{Summary.}
This paper has extended the work of \cite{camassa1992dispersive} and \cite{camassa1996long, camassa1997long} by casting it into the framework of Hamilton's variational principle and including the multi-time effects of the Strouhal number and the barotropic effects of vertically-integrated buoyancy with horizontal gradients. As a result, a variety of new terms representing new effects relative to \cite{camassa1992dispersive} and \cite{camassa1996long, camassa1997long} have appeared in the resulting equations. For example, in the variational CH92 equations \eqref{eq:KelThm-VCH92} written in Kelvin circulation form in \eqref{eq:KelThm-rsGN} one sees how horizontal circulation (convection) is generated by an misalignment of horizontal gradient of vertically averaged buoyancy with the horizontal gradients of  bathymetry and/or surface elevation. 

Having extended the earlier work of \cite{camassa1992dispersive} and \cite{camassa1996long, camassa1997long} in a variational setting and expressed the results in Kelvin circulation form, the paper has also taken advantage of the variational framework  of \cite{holm2015variational} to include the effects of stochastic advective Lie transport (SALT). Including the effects of SALT introduces a new capability to quantify the uncertainty and then use data assimilation to reduce the uncertainty of the solutions of these equations due to unmodelled, or unresolved effects. A protocol for doing this has been been developed in \cite{cotter2018modelling, cotter2019numerically, cotter2019particle}. This protocol regards SALT as a type of `informed randomness' described by spatially correlated noise obtained from observed or simulated high-resolution data. This protocol may be applied to the present class of fluid equations. In order to reduce the investigation of these equations to their simplest form, the paper has derived the unidirectional version of the equation set in \eqref{eq:KelThm-VCH92} in the variational setting. This reduction has yielded stochastic versions of a family of CH equation, including the one derived in \cite{camassa1993integrable, camassa1994new}. These stochastic CH equations describe the interaction of solitons with noise. The first developments in this direction for the stochastic CH equation have already been studied in \cite{holm2016variational, holm2018new, crisan2018wave, bendall2019perspectives}. 

Two diagrams sketched below provide `roadmaps' of the two routes of simplification we have taken in this paper by using asymptotic expansions in the various small parameters for the ordering scheme in equation \eqref{os:orderingscheme}. The  Newtonian approach is shown in figure \ref{directapproach}. The corresponding road map for the variational approach is shown  in figure \ref{variationalapproach}.  

\begin{figure}[H]
\begin{center}
\textbf{Newtonian Approach}
\end{center}
\bigskip

\centering
\begin{tikzcd}
[row sep = 3em, 
column sep=2em, 
cells = {nodes={top color=green!25, bottom color=blue!25,draw=blue!90}},
arrows = {draw = black, rightarrow, line width = .03cm}]
& & {\begin{matrix}
\text{3D Euler equations for}\\ \text{for stratified, rotating}\\ \text{incompressible fluids}
\end{matrix}} \arrow[d,"{\begin{matrix} \text{small buoyancy}\\ \text{stratification}\end{matrix}}", shorten <= 1mm, shorten >= 1mm]  & &\\
& & \hyperref[eq:SEB]{\begin{matrix} \text{3D Euler Boussinesq equations}\\ \text{for stratified, rotating,}\\ \text{incompressible fluids}\end{matrix}} \arrow[dr, "{\begin{matrix} \text{vertical average,}\\ \text{long time - very small wave}\end{matrix}}", shorten <= 4mm, shorten >= 4mm] \arrow[dl, "{\begin{matrix} \text{vertical average,}\\ \text{short time - small wave}\end{matrix}}" swap, shorten <= 4mm, shorten >= 4mm]& &\\
& \hyperref[stswnewton]{\text{CH92 equations}} \arrow[d, "\text{restricting to 1D}", swap, shorten <= 1mm, shorten >= 1mm]  & & \hyperref[sec:rotatinggreatlake]{\begin{matrix} \text{Rotating, thermal,}\\\text{Great Lake equations} \end{matrix}} \arrow[d, "\text{small aspect ratio}",  shorten <= 1mm, shorten >= 1mm] & \\
& \hyperref[sec:bbm]{\begin{matrix} \text{BBM3, BBM2, BBM and} \\ \text{unidirectional BBM equations} \end{matrix}} & & \hyperref[sec:rotatinglake]{\text{Rotating Lake equations}} &
\end{tikzcd}
\caption{Diagram of derivations from the direct (or Newtonian) point of view. Each blue box refers to the set of equations that corresponds to the model referred to in the box. Clicking on the box will take the reader to the corresponding section. Above each arrow is the approximation that is necessary to transition from one set of equations to the next. Note that the short time - small wave approximation does not lead to rotating thermal Green-Naghdi, but to the CH92 equations. These lead to Benjamin-Bona-Mahony type equations when restricted to one dimensional motion. The rotating, stratified Euler equations are not linked because these equations have not been written down in this document.}
\label{directapproach}
\end{figure}
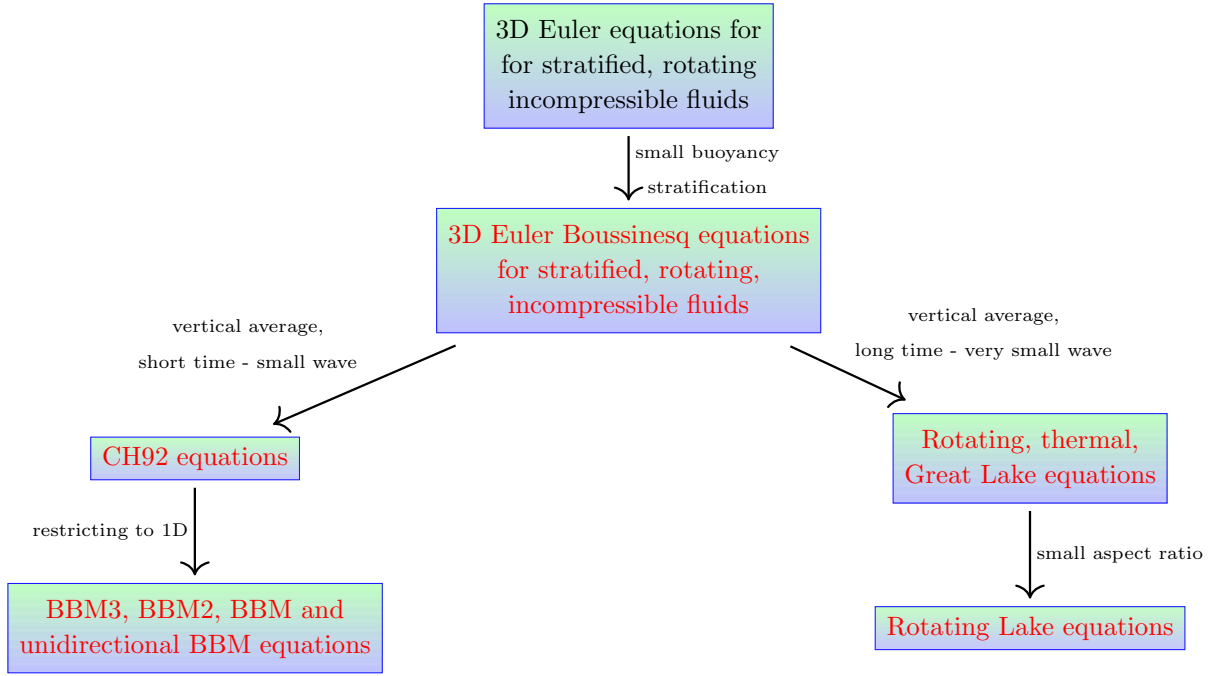

\begin{figure}[H]
\begin{center}
\textbf{Variational Approach}
\end{center}
\bigskip

\centering
\begin{tikzcd}
[row sep = 3em, 
column sep=2em, 
cells = {nodes={top color=green!25, bottom color=blue!25,draw=blue!90}},
arrows = {draw = black, rightarrow, line width = .03cm}]
& & \hyperref[lag:rsE]{\begin{matrix}
\text{3D Euler equations for}\\ \text{for stratified, rotating}\\ \text{incompressible fluids}
\end{matrix}} \arrow[d,"{\begin{matrix} \text{small buoyancy}\\ \text{stratification}\end{matrix}}", shorten <= 1mm, shorten >= 1mm] & &\\
& & \hyperref[lag:EB]{\begin{matrix} \text{3D Euler Boussinesq equations}\\ \text{for stratified, rotating,}\\ \text{incompressible fluids}\end{matrix}} \arrow[dr, "{\begin{matrix} \text{vertical average,}\\ \text{long time - very small wave}\end{matrix}}", shorten <= 4mm, shorten >= 4mm] \arrow[dl, "{\begin{matrix} \text{vertical average,}\\ \text{short time - small wave}\end{matrix}}" swap, shorten <= 4mm, shorten >= 4mm]& &\\
& \hyperref[sec:variationalshorttime]{\begin{matrix} \text{Rotating, thermal}\\ \text{Green--Naghdi equations} \end{matrix}} \arrow[d, swap, shorten <= 1mm, shorten >= 1mm] & & \hyperref[sec:variationallongtime]{\begin{matrix} \text{Rotating, thermal,}\\ \text{Great Lake equations} \end{matrix}} \arrow[d, "\text{small aspect ratio}", shorten <= 1mm, shorten >= 1mm] & \\
& \hyperref[sec:stochCHeqns]{\begin{matrix} \text{CH3, CH2, CH, KdV}\\ \text{equations}\end{matrix}}& & \hyperref[sec:variationallongtime]{\begin{matrix} \text{Rotating Lake equations}\end{matrix}} &
\end{tikzcd}
\caption{Diagram of derivations from the variational point of view. Each blue box refers to the Lagrangian that corresponds to the model referred to in the box. By clicking on the box the reader is taken to the corresponding section. Above each arrow is the approximation that is necessary to transition from one Lagrangian to the next.}
\label{variationalapproach}
\end{figure}
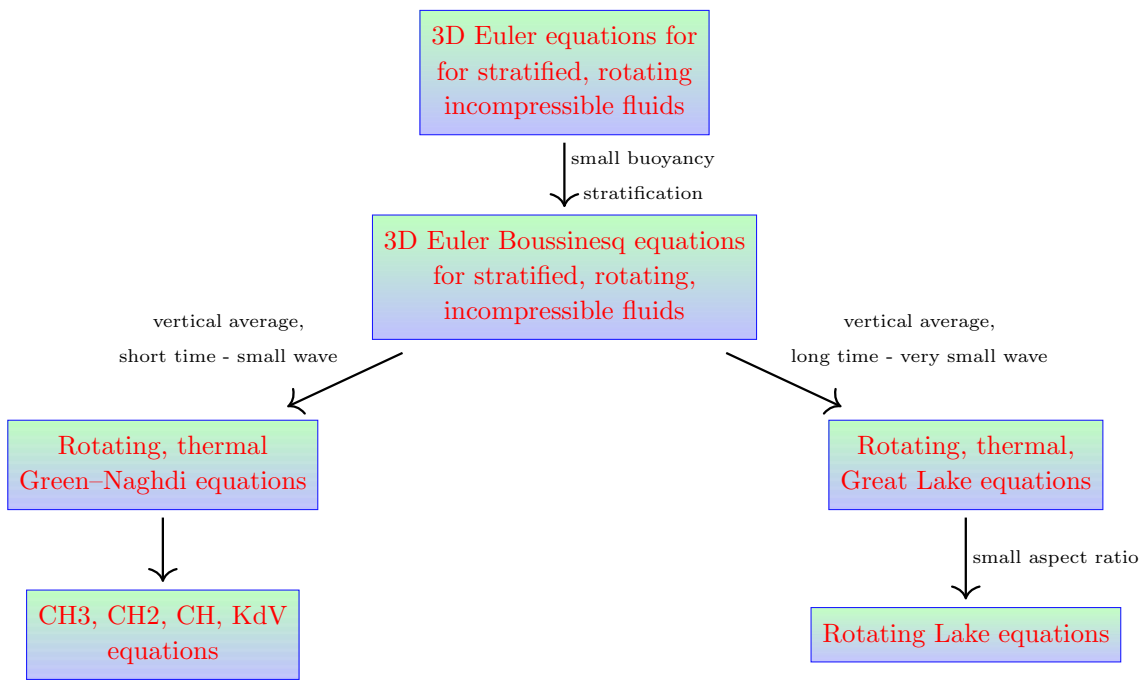

In section \ref{sec:StochVarPrinc} we investigated whether the SALT approach was compatible with the asymptotic expansions. It was shown that an additional assumption on the magnitude of the gradient of the bathymetry was required for the SALT version to be consistent with the deterministic situation. Except for this additional assumption, SALT was verified to be compatible with the methods of asymptotic analysis. From the variational point of view, this was to be expected. Any fluid model which has a corresponding Lagrangian can be made stochastic with the approach of \cite{holm2015variational}. However, boundary conditions need to be made consistent with the derivation of the equations. A simpler, but also important `sanity check' was passed, by confirming that the stochastic Lake and Great Lake equations successfully recover the deterministic Lake and Great Lake equations when the noise terms are absent.
\bigskip

In section \ref{sec:LT-Vsmall}, we showed that the Great Lake equations in \eqref{thm:KCGL} may be derived using a direct approach, by combining vertical averaging of the nondimensional Euler-Boussinesq equations with asymptotic analysis in a long time - very small wave scaling regime. The resulting averaged equations can be closed. One may also derive the same equations by vertically averaging the Lagrangian and applying the Euler-Poincar\'e theorem. In both situations, an averaging principle is required which respects the boundary conditions for the Euler-Boussinesq equations. The road map of these derivations is sketched on the right-hand branches of figures \ref{directapproach} and \ref{variationalapproach}. 
\bigskip

In section \ref{sec:STL-smallwavelimit}, we worked in a short time - small wave scaling regime, following the  left-hand branches of figures \ref{directapproach} and \ref{variationalapproach}. In this scaling regime, the Strouhal number does not equal unity. Instead, the Strouhal number is the inverse of the Froude number, which was taken to be small in this scaling regime. Consequently, the material derivative was no longer balanced  in the asymptotic expansion. Because of this imbalance, the direct asymptotic expansion approach failed to derive the rotating thermal Green-Naghdi equations in this scaling regime. However, the variational approach was able to take an arbitrary Strouhal number into account. In this scaling regime, the variational approach provided a set of equations reminiscent of the Green-Naghdi equations, and which had the geometric structure required to possess a Kelvin circulation theorem. 
Thus, the Strouhal number played a crucial role in determining the differences between the direct approach and the variational approach in the short time - small wave scaling regime. 
In addition, by further approximating the asymptotic expansion of the wave Lagrangian in Hamilton's principle, in Section \ref{sec:stochCHeqns} we derived several stochastic variants of the Camassa-Holm equation and the Korteweg - de Vries equation for one dimensional unidirectional propagation. Finally, in section \ref{sec:NewVarDiffs} we discussed the differences between the Newtonian and variational approaches in this scaling regime by making a detailed comparison of the equations and explaining the implications of the additional terms in the variational approach which were missing in the direct approach. 

%This work is of a similar flavour as \cite{lannes2019summer}, but there are many differences. In \cite{lannes2019summer} the 3D domain is the same as in this work. In order to solve the three dimensional free boundary problem it is necessary to find an equivalent formulation in which the domain is fixed. In particular the local Cauchy problem for water waves is addressed, which is not considered here. To tackle the local Cauchy problem, three approaches are mentioned. The first is a Lagrangian formulation and uses the machinery of variational calculus and geometry. The second approach is the Hamiltonian formulation by Zakharov \cite{zakharov1968}  
\subsection{Outlook and open problems. What to do?}

This paper has integrated several methodologies into a research framework for investigating the various  effects of wave-current interaction in thermal shallow water flows. Several methodologies were required because wave-current interaction involves several elements. Different time scales exist for flow and wave propagation, as indicated by the different regimes of Strouhal number. This means that simultaneous interactions take place among various physical effects with different times scales. For example, we have seen that nonlinear interactions arise among advective transport, dispersive nonlinear wave propagation, stratification and generation of circulation in the interplay of waves, topography and stratification. This is not to even mention the effects of shear on the propagation of waves and the effects of wave perturbations on unstable flow equilibria. 

Because of these various interacting elements, modelling the wave-current interaction process involves many uncertainties. These uncertainties arise from the combination of incomplete sparse observations and the `irreducible imprecision' of numerical simulations arising because of under-resolution  and the wide variety of choice in numerical simulation algorithms. In the hopes of providing a methodology for systematically quantifying these uncertainties, this paper has introduced stochastic advection by Lie transport  (SALT) in the derivation of the various new equations arising in the ramifications of the asymptotic expansions studied here. We believe that the SALT approach could eventually be made useful for stochastic parameterisation and uncertainty quantification of wave-current interaction, for example, in describing the effects of sub-mesoscale unresolved ocean dynamics on the larger, slower, resolvable oceanic flow. Combined with judicious data assimilation approaches based on the earlier work of  \cite{cotter2018modelling, cotter2019numerically, cotter2019particle}, one can hope that in some cases these uncertainties may even be reduced. The progress made here suggests that further pursuit of the SALT approach for stochastic parameterisation may soon be fruitful in the context of wave-current interaction of dispersive nonlinear waves in shallow water with horizontal buoyancy gradients. In the mean time, the present paper has combined asymptotic expansions and vertical averaging with the stochastic variational framework to formulate the SALT approach for the various thermal shallow water equations which descend from Euler's three-dimensional fluid equations under approximation by asymptotic expansions and vertical averaging.

\section{Acknowledgments}
We are grateful for many suggestions for improvements offered in discussions with A. Bethencourt de Leon, C. Cotter, D.O. Crisan, S. Ephrati, J.D. Gibbon, R. Hu, E. Johnson, A. Mashayekhi, W. Pan, D. Papageorgiou, O. Street and S. Takao. EL thanks R. Klein for his inspirational lectures on asymptotic analysis and his careful explanation of the Strouhal number. The work of DDH was partially supported by EPSRC standard grant [grant number EP/N023781/1]. EL was supported by EPSRC grant [grant number EP/L016613/1] and is grateful for the warm hospitality at the Imperial College London EPSRC Centre for Doctoral Training in the Mathematics of Planet Earth during the course of this work.

\bibliographystyle{alpha}
\bibliography{biblio}

\newcommand{\etalchar}[1]{$^{#1}$}
\begin{thebibliography}{BdLHLT20}

\bibitem[BAB{\etalchar{+}}17]{berner2017stochastic}
Judith Berner, Ulrich Achatz, Lauriane Batt{\'e}, Lisa Bengtsson, Alvaro de~la
  C{\'a}mara, Hannah~M Christensen, Matteo Colangeli, Danielle~RB Coleman, Daan
  Crommelin, Stamen~I Dolaptchiev, et~al.
\newblock Stochastic parameterization: {T}oward a new view of weather and
  climate models.
\newblock {\em Bulletin of the American Meteorological Society},
  98(3):565--588, 2017.

\bibitem[BBM72]{benjamin1972model}
Thomas~Brooke Benjamin, Jerry~Lloyd Bona, and John~Joseph Mahony.
\newblock Model equations for long waves in nonlinear dispersive systems.
\newblock {\em Philosophical Transactions of the Royal Society of London.
  Series A, Mathematical and Physical Sciences}, 272(1220):47--78, 1972.

\bibitem[BCH19]{bendall2019perspectives}
Thomas~M Bendall, Colin~J Cotter, and Darryl~D Holm.
\newblock Perspectives on the formation of peakons in the stochastic
  {C}amassa-{H}olm equation.
\newblock {\em arXiv preprint arXiv:1910.03018}, 2019.

\bibitem[BdLHLT20]{de2020implications}
Aythami Bethencourt~de Leon, Darryl~D Holm, Erwin Luesink, and So~Takao.
\newblock Implications of {K}unita--{I}t{\^o}--{W}entzell formula for $k$-forms
  in stochastic fluid dynamics.
\newblock {\em Journal of Nonlinear Science}, pages 1--34, 2020.

\bibitem[Ben73]{benney1973some}
DJ~Benney.
\newblock Some properties of long nonlinear waves.
\newblock {\em Studies in Applied Mathematics}, 52(1):45--50, 1973.

\bibitem[BJP12]{berner2012systematic}
Judith Berner, Thomas Jung, and TN~Palmer.
\newblock Systematic model error: {T}he impact of increased horizontal
  resolution versus improved stochastic and deterministic parameterizations.
\newblock {\em Journal of Climate}, 25(14):4946--4962, 2012.

\bibitem[BKM84]{beale1984remarks}
J~Thomas Beale, Tosio Kato, and Andrew Majda.
\newblock Remarks on the breakdown of smooth solutions for the 3-{D} {E}uler
  equations.
\newblock {\em Communications in Mathematical Physics}, 94(1):61--66, 1984.

\bibitem[CCH{\etalchar{+}}18]{cotter2018modelling}
Colin Cotter, Dan Crisan, Darryl~D Holm, Wei Pan, and Igor Shevchenko.
\newblock Modelling uncertainty using circulation-preserving stochastic
  transport noise in a 2-layer quasi-geostrophic model.
\newblock {\em arXiv preprint arXiv:1802.05711}, 2018.

\bibitem[CCH{\etalchar{+}}19a]{cotter2019numerically}
Colin Cotter, Dan Crisan, Darryl~D Holm, Wei Pan, and Igor Shevchenko.
\newblock Numerically modeling stochastic {L}ie transport in fluid dynamics.
\newblock {\em Multiscale Modeling \& Simulation}, 17(1):192--232, 2019.

\bibitem[CCH{\etalchar{+}}19b]{cotter2019particle}
Colin Cotter, Dan Crisan, Darryl~D Holm, Wei Pan, and Igor Shevchenko.
\newblock A particle filter for {S}tochastic {A}dvection by {L}ie {T}ransport
  {(SALT)}: {A} case study for the damped and forced incompressible 2{D}
  {E}uler equation.
\newblock {\em arXiv preprint arXiv:1907.11884}, 2019.

\bibitem[CdVO16]{colin2016direct}
Alain Colin~de Verdi{\`e}re and Michel Ollitrault.
\newblock A direct determination of the {W}orld {O}cean barotropic circulation.
\newblock {\em Journal of Physical Oceanography}, 46(1):255--273, 2016.

\bibitem[CFH19]{crisan2019solution}
Dan Crisan, Franco Flandoli, and Darryl~D Holm.
\newblock Solution properties of a 3{D} stochastic {E}uler fluid equation.
\newblock {\em Journal of Nonlinear Science}, 29(3):813--870, 2019.

\bibitem[CGH17]{cotter2017stochastic}
Colin~J Cotter, Georg~A Gottwald, and Darryl~D Holm.
\newblock Stochastic partial differential fluid equations as a diffusive limit
  of deterministic {L}agrangian multi-time dynamics.
\newblock {\em Proceedings of the Royal Society A: Mathematical, Physical and
  Engineering Sciences}, 473(2205):20170388, 2017.

\bibitem[CH92]{camassa1992dispersive}
Roberto Camassa and Darryl~D Holm.
\newblock Dispersive barotropic equations for stratified mesoscale ocean
  dynamics.
\newblock {\em Physica D: Nonlinear Phenomena}, 60(1-4):1--15, 1992.

\bibitem[CH93]{camassa1993integrable}
Roberto Camassa and Darryl~D Holm.
\newblock An integrable shallow water equation with peaked solitons.
\newblock {\em Physical Review Letters}, 71(11):1661, 1993.

\bibitem[CH18]{crisan2018wave}
Dan Crisan and Darryl~D Holm.
\newblock Wave breaking for the stochastic {C}amassa--{H}olm equation.
\newblock {\em Physica D: Nonlinear Phenomena}, 376:138--143, 2018.

\bibitem[CHH94]{camassa1994new}
Roberto Camassa, Darryl~D Holm, and James~M Hyman.
\newblock A new integrable shallow water equation.
\newblock In {\em Advances in Applied Mechanics}, volume~31, pages 1--33.
  Elsevier, 1994.

\bibitem[CHL96]{camassa1996long}
Roberto Camassa, Darryl~D Holm, and C~David Levermore.
\newblock Long-time effects of bottom topography in shallow water.
\newblock {\em Physica D: Nonlinear Phenomena}, 98(2-4):258--286, 1996.

\bibitem[CHL97]{camassa1997long}
Roberto Camassa, Darryl~D Holm, and C~David Levermore.
\newblock Long-time shallow-water equations with a varying bottom.
\newblock {\em Journal of Fluid Mechanics}, 349:173--189, 1997.

\bibitem[CHP10]{cotter2010square}
Colin~J Cotter, Darryl~D Holm, and James~R Percival.
\newblock The square root depth wave equations.
\newblock {\em Proceedings of the Royal Society A: Mathematical, Physical and
  Engineering Sciences}, 466(2124):3621--3633, 2010.

\bibitem[CKK98]{cane1998utility}
Mark~A Cane, Vladimir~M Kamenkovich, and Alexander Krupitsky.
\newblock On the utility and disutility of {JEBAR}.
\newblock {\em Journal of Physical Oceanography}, 28(3):519--526, 1998.

\bibitem[CZ{\etalchar{+}}06]{chen2006two}
Ming Chen, Youjin Zhang, et~al.
\newblock A two-component generalization of the {C}amassa-{H}olm equation and
  its solutions.
\newblock {\em Letters in Mathematical Physics}, 75(1):1--15, 2006.

\bibitem[DS94]{dukowicz1994implicit}
John~K Dukowicz and Richard~D Smith.
\newblock Implicit free-surface method for the {B}ryan-{C}ox-{S}emtner ocean
  model.
\newblock {\em Journal of Geophysical Research: Oceans}, 99(C4):7991--8014,
  1994.

\bibitem[EM70]{ebin1970groups}
David~G Ebin and Jerrold Marsden.
\newblock Groups of diffeomorphisms and the motion of an incompressible fluid.
\newblock {\em Ann. Math}, 92(1):102--163, 1970.

\bibitem[FKAB{\etalchar{+}}19]{fox2019challenges}
Baylor Fox-Kemper, Alistair Adcroft, Claus~W B{\"o}ning, Eric~P Chassignet,
  Enrique Curchitser, Gokhan Danabasoglu, Carsten Eden, Matthew~H England,
  R{\"u}diger Gerdes, Richard~J Greatbatch, et~al.
\newblock Challenges and prospects in ocean circulation models.
\newblock {\em Frontiers in Marine Science}, 6:65, 2019.

\bibitem[GCF16]{gottwald2016stochastic}
Georg~A Gottwald, Daan~T Crommelin, and Christian~LE Franzke.
\newblock Stochastic climate theory.
\newblock {\em arXiv preprint arXiv:1612.07474}, 2016.

\bibitem[GH96]{gjaja1996self}
Ivan Gjaja and Darryl~D Holm.
\newblock Self-consistent hamiltonian dynamics of wave mean-flow interaction
  for a rotating stratified incompressible fluid.
\newblock {\em Physica D: Nonlinear Phenomena}, 98(2-4):343--378, 1996.

\bibitem[GHL17]{geurts2017lyapunov}
Bernard~J Geurts, Darryl~D Holm, and Erwin Luesink.
\newblock Lyapunov exponents of two stochastic {L}orenz 63 systems.
\newblock {\em arXiv preprint arXiv:1706.05882}, 2017.

\bibitem[GN76]{green1976derivation}
Albert~E Green and Paul~M Naghdi.
\newblock A derivation of equations for wave propagation in water of variable
  depth.
\newblock {\em Journal of Fluid Mechanics}, 78(2):237--246, 1976.

\bibitem[HLP20]{holm2020stochastic}
Darryl~D Holm, Erwin Luesink, and Wei Pan.
\newblock Stochastic circulation dynamics in the ocean mixed layer.
\newblock {\em arXiv preprint arXiv:2006.05707}, 2020.

\bibitem[HMR98]{holm1998euler}
Darryl~D Holm, Jerrold~E Marsden, and Tudor~S Ratiu.
\newblock The {E}uler--{P}oincar{\'e} equations and semidirect products with
  applications to continuum theories.
\newblock {\em Advances in Mathematics}, 137(1):1--81, 1998.

\bibitem[HMR99]{holm1999euler}
Darryl~D Holm, Jerrold~E Marsden, and Tudor~S Ratiu.
\newblock The {E}uler-{P}oincar{\'e} equations in geophysical fluid dynamics.
\newblock {\em arXiv preprint chao-dyn/9903035}, 1999.

\bibitem[Hol15]{holm2015variational}
Darryl~D Holm.
\newblock Variational principles for stochastic fluid dynamics.
\newblock {\em Proceedings of the Royal Society A: Mathematical, Physical and
  Engineering Sciences}, 471(2176):20140963, 2015.

\bibitem[HT16]{holm2016variational}
Darryl~D Holm and Tomasz~M Tyranowski.
\newblock Variational principles for stochastic soliton dynamics.
\newblock {\em Proceedings of the Royal Society A: Mathematical, Physical and
  Engineering Sciences}, 472(2187):20150827, 2016.

\bibitem[HT18]{holm2018new}
Darryl~D Holm and Tomasz~M Tyranowski.
\newblock New variational and multisymplectic formulations of the
  {E}uler--{P}oincar{\'e} equation on the {V}irasoro--{B}ott group using the
  inverse map.
\newblock {\em Proceedings of the Royal Society A: Mathematical, Physical and
  Engineering Sciences}, 474(2213):20180052, 2018.

\bibitem[KLMP13]{khesin2013geometry}
Boris Khesin, Jonatan Lenells, G~Misio{\l}ek, and Stephen~C Preston.
\newblock Geometry of diffeomorphism groups, complete integrability and
  geometric statistics.
\newblock {\em Geometric and Functional Analysis}, 23(1):334--366, 2013.

\bibitem[Kup06]{kupershmidt2006extended}
Boris~A Kupershmidt.
\newblock Extended equations of long waves.
\newblock {\em Studies in Applied Mathematics}, 116(4):415--434, 2006.

\bibitem[Lan05]{lannes2005well}
David Lannes.
\newblock Well-posedness of the water-waves equations.
\newblock {\em Journal of the American Mathematical Society}, 18(3):605--654,
  2005.

\bibitem[Lan13]{lannes2013water}
David Lannes.
\newblock {\em The water waves problem: mathematical analysis and asymptotics},
  volume 188.
\newblock American Mathematical Soc., 2013.

\bibitem[Lan19]{lannes2019lecture}
David Lannes.
\newblock From the swell to the beach: modelling shallow water waves.
\newblock 2019.

\bibitem[LO97]{levermore1997analyticity}
C~David Levermore and Marcel Oliver.
\newblock Analyticity of solutions for a generalized {E}uler equation.
\newblock {\em Journal of differential equations}, 133(2):321--339, 1997.

\bibitem[LOT96a]{levermore1996global}
C~David Levermore, Marcel Oliver, and Edriss~S Titi.
\newblock Global well-posedness for models of shallow water in a basin with a
  varying bottom.
\newblock {\em Indiana University Mathematics Journal}, pages 479--510, 1996.

\bibitem[LOT96b]{levermore1996global2}
C~David Levermore, Marcel Oliver, and Edriss~S Titi.
\newblock Global well-posedness for the lake equations.
\newblock {\em Physica D: Nonlinear Phenomena}, 98(2-4):492--509, 1996.

\bibitem[Mel99]{mellor1999comments}
George Mellor.
\newblock Comments on “{O}n the utility and disutility of {JEBAR}”.
\newblock {\em Journal of physical oceanography}, 29(8):2117--2118, 1999.

\bibitem[Pro05]{protter2005stochastic}
Philip~E Protter.
\newblock Stochastic differential equations.
\newblock In {\em Stochastic integration and differential equations}, pages
  249--361. Springer, 2005.

\bibitem[Sar06]{sarkisyan2006forty}
AS~Sarkisyan.
\newblock Forty years of {JEBAR}—the finding of the joint effect of
  baroclinicity and bottom relief for the modeling of ocean climatic
  characteristics.
\newblock {\em Izvestiya, Atmospheric and Oceanic Physics}, 42(5):534--554,
  2006.

\bibitem[SC20]{street2020semi}
Oliver~D Street and Dan Crisan.
\newblock Semi-martingale driven variational principles.
\newblock {\em arXiv preprint arXiv:2001.10105}, 2020.

\bibitem[SI71]{sarkisyan1971combined}
AS~Sarkisyan and VF~Ivanov.
\newblock The combined effect of baroclinicity and bottom relief as an
  important factor in the dynamics of ocean currents ({AGU} translation).
\newblock {\em Izv. Acad. Sci. USSR, Atmos. Oceanic Phys}, pages 173--188,
  1971.

\bibitem[WD13]{warneford2013quasi}
Emma~S Warneford and Paul~J Dellar.
\newblock The quasi-geostrophic theory of the thermal shallow water equations.
\newblock {\em Journal of Fluid Mechanics}, 723:374--403, 2013.

\bibitem[Woo19]{woodfield2019thesis}
James Woodfield.
\newblock The interplay among nonlinearity, dispersion and noise.
\newblock {\em MRes Thesis}, 2019.

\bibitem[Wu81]{wu1981long}
Theodore~Y Wu.
\newblock Long waves in ocean and coastal waters.
\newblock {\em Journal of Engineering Mechanics}, 107(EM3):501--522, 1981.

\end{thebibliography}

\appendix 

%%%%%%%%%%%%%%%%%%%%%%%%%%%%%%%%%

\section{Linear dispersion relations for deterministic equilibria of Green--Naghdi equations}\label{app:dispersion}
In the coupled set of stochastic Green--Naghdi equations \eqref{eq:KelThm-VCH92}, there are no time independent solutions. That is, there are no equilibria in the presence of noise.  Hence, in order to investigate the wave behaviour of the solutions of these equations near a steady state, we must switch off the noise, and investigate the equilibria of the deterministic equations. By writing the equations in componentwise form, assuming that the bathymetry $h_0$ is flat and assuming that the Coriolis parameter $f_0$ is constant, linearising around $(\overline{u},\overline{v},\zeta,\overline{b}) = (0,0,0,0)$ yields a set of equations with constant coefficients, given by
\begin{equation}
\begin{aligned}
\frac{1}{{\rm Fr}}\overline{u}_t - \frac{\sigma^2}{3\,{\rm Fr}}h_0^2\overline{u}_{xxt} &= -\frac{\alpha}{{\rm Fr}^2}\zeta_x - \frac{\sigma^2}{2{\rm Fr}^2}h_0\overline{b}_x + \frac{f_0}{{\rm Ro}}\overline{v},
\\
\frac{1}{{\rm Fr}}\overline{v}_t - \frac{\sigma^2}{3\,{\rm Fr}}h_0^2\overline{v}_{yyt} &= -\frac{\alpha}{{\rm Fr}^2}\zeta_y - \frac{\sigma^2}{2{\rm Fr}^2}h_0\overline{b}_y - \frac{f_0}{{\rm Ro}}\overline{u},
\\
\frac{1}{{\rm Fr}}\zeta_t &= -h_0(\overline{u}_x + \overline{v}_y),
\\
\frac{1}{{\rm Fr}}\overline{b}_t &= 0.
\end{aligned}
\label{eq:linearrsGN}
\end{equation}
We can now substitute the travelling wave Ansatz $(\overline{u},\overline{v},\zeta,\overline{b}) = (\overline{u}_0,\overline{v}_0,\zeta_0,\overline{b}_0)e^{{\rm i}(\mathbf{k}\cdot\mathbf{x} - \omega t)}$ into \eqref{eq:linearrsGN}. Standard procedures in linear algebra then imply the dispersion relation as the roots of a quartic polynomial; namely,
\begin{equation}
\begin{aligned}
\omega(\mathbf{k}) &= 0,\\
\omega(\mathbf{k}) &= \pm \sqrt{\frac{\frac{{\rm Fr}^2 f_0^2}{{\rm Ro}^2} + \alpha h_0 |\mathbf{k}|^2 + \frac{2\alpha\sigma^2 h_0^3}{3} k^2l^2}{1 + \frac{\sigma^2 h_0^2}{3}|\mathbf{k}|^2 + \frac{\sigma^4h_0^4}{9} k^2l^2}}.
\end{aligned}
\label{disp-relation}
\end{equation}
In the dispersion relation, $\omega(\mathbf{k})$, the quantity $\mathbf{k} = (k, l)$ is the wave vector in two horizontal dimensions. The zero frequency dispersion relation corresponds to geostrophically balanced motion; uniform in time. When the aspect ratio goes to zero the second expression for the frequency yields dispersion relation for inertio-gravity (or Poincar\'e) waves. At high wave numbers, the wave oscillation  frequency tends to a limiting constant; regularised by nonhydrostatic dispersion. 

Upon further restricting to one-dimensional motion without rotation, the dispersion relation \eqref{disp-relation} takes the form
\begin{equation}
\begin{aligned}
\omega(k) &= 0,\\
\omega(k) &= \pm \frac{\sqrt{\alpha h_0} k}{\sqrt{1 + \frac{\sigma^2 h_0^2}{3}k^2}}\,,
\end{aligned}
\end{equation}
and we can compute the phase velocity $v_p=\omega/k$ and the group velocity $v_g=d\omega/dk$ to be
\begin{equation}\label{SW-disp}
\begin{aligned}
v_p(k) &= \pm \frac{\sqrt{\alpha h_0}}{\sqrt{1 + \frac{\sigma^2 h_0^2}{3}k^2}},\\
v_g(k) &= \pm \frac{\sqrt{\alpha h_0}}{(1+\frac{\sigma^2 h_0^3}{3}k^2)^{3/2}}.
\end{aligned}
\end{equation} 
Equation \eqref{SW-disp} shows the dispersion of shallow water waves, as excitations of longer wavelength travel faster than excitations of shorter wavelength. 

%%%%%%%%%%%%%%%%%%%%%%%%%%%%%%%%%

\section{The stochastic thermal rotating shallow water (TRSW) model}\label{app:TRSW}

The  thermal rotating shallow water (TRSW) model describes an upper active layer of fluid motion with horizontally varying buoyancy and an inert lower layer. 
The TRSW model is an extension of the RSW model and a simplification of the various models we have discussed in the text. This TRSW model comprises an upper active layer of fluid motion with horizontally varying buoyancy and an inert lower layer. Since the lower layer is inert, the TRSW model is sometimes called a 1.5 layer model \cite{warneford2013quasi}. For a discussion of a fully multilayer model with nonhydrostatic pressure, see \cite{cotter2010square}.

The TRSW equations are expressed using the following definition for the (nonnegative) buoyancy $b(\bx,t) = (\bar{\rho} - \rho(\bx,t))/\bar{\rho}$, where  $\rho$ is the (time and space dependent) mass density of the active upper layer, $\bar{\rho}$ is the uniform mass density of the inert lower layer. We let $\eta =  \eta(\bx, t)$ be the thickness of the active layer, where $\bx=(x,y)$ is the horizontal vector position, and $t$ is time. The nondimensional deterministic TRSW equations are 
\begin{equation}   
\frac{D}{Dt} \bu+ \frac{1}{{\rm Ro}}f\zh\times\bu   
+ \frac{1}{{\rm Fr}^2}\nabla(b\zeta) - \frac{1}{2\,{\rm Fr}^2}(\zeta-h)\nabla b = 0\, ,   
\qquad 
\frac{\partial \eta}{\partial t} + \nabla\cd (\eta\bu) = 0\, ,   
\qquad
\frac{D b}{Dt} = 0\,,
\label{trsw-eqns}  
\end{equation}
with notation ${\rm Ro}$ for Rossby number and the standard advective time derivative $\frac{D}{Dt}=\partial_t + \bu\cdot\nabla$. The boundary conditions are 
\begin{equation} 
\mathbf{n}\cd \bu=0
\quad\hbox{and}\quad
\mathbf{n}\times\nabla b = 0\,,
\label{trsw-bdy}  
\end{equation}
meaning that fluid velocity $\bu$ is tangential and buoyancy $b$ is constant on the boundary of the domain of flow.\bigskip

Upon introducing the following  stochastic vector field in $\mathbb{R}^2$ for fluid transport  
\begin{equation}
\label{eq:chi-defnA}
{\sf d}\boldsymbol \chi_{t} := \mathbf{u}(\mathbf{x},t)dt + \sum_{i=1}^M \boldsymbol \xi_i(\mathbf{x})\circ dW_t^i\,,
\end{equation}
we can derive the stochastic TRSW equations. The deterministic equations in \eqref{trsw-eqns}  follow as Euler-Poincar\'e equations for the action integral
\begin{equation}
S = \int_0^T\, \ell_{TRSW}(\mathbf{u},\eta,b) \,dt 
=
\int_0^T\!\!\! \int_{CS} \left(\frac{1}{2} |\bu|^2 + \frac{1}{{\rm Ro}}\bu\cdot\bR(\bx) - \frac{1}{2\,{\rm Fr}^2} b(\eta-2h)\right) \eta\,dx\,dy\,dt  \,,
\label{trsw-lag2}   
\end{equation}
where $CS$ denotes some horizontal surface. The stochastic TRSW equations are derived by first evaluating the variational derivatives for the Lagrangian in the action integral \eqref{trsw-lag2} as
\begin{align}
\begin{split}
\frac{1}{\eta}\frac{{\delta} l}{{\delta} \mathbf{u}} &= \bu+\frac{1}{{\rm Ro}}\bR(\bx) =: \mathbf{V}(\bx,t),
\\
\frac{{\delta} l}{{\delta} \eta} &= \frac{1}{2} |\bu|^2 + \frac{1}{{\rm Ro}}\bu\cdot\bR(\bx) -  \frac{1}{{\rm Fr}^2}b (\eta-h),
\\
\frac{{\delta} l}{{\delta} b} &= -  \frac{1}{{\rm Fr}^2}(\eta^2-2\eta h).
\end{split}
\label{vars-trsw}
\end{align}
Next, we apply the stochastic Euler-Poincar\'e theorem \ref{thm:SEP} with the variational derivatives as above and obtain
\begin{equation}
\begin{aligned}
{\sf d}\bu + ({\sf d}\boldsymbol \chi_t\cdot\nabla)\bu  + (\nabla \boldsymbol \xi_i)\cdot\bu\circ dW_t^i &= - \frac{1}{{\rm Fr}^2}\nabla(b \zeta)\,dt + \frac{1}{2\,{\rm Fr}^2}(\zeta-h)\nabla b\,dt - \frac{1}{{\rm Ro}}f\hat{\mathbf{z}}\times {\sf d}\boldsymbol \chi_t - \frac{1}{{\rm Ro}}\nabla(\boldsymbol \xi_i\cdot\mathbf{R})\circ dW_t^i,\\
{\sf d}\eta + \nabla\cdot(\eta {\sf d}\boldsymbol \chi_t) &= 0,\\
{\sf d}b + ({\sf d}\boldsymbol \chi_t\cdot\nabla)b &= 0.
\end{aligned}
\label{eq:STRSW}
\end{equation}
In \eqref{eq:STRSW}, we used $\zeta = \eta - h$ for the free surface elevation.
\begin{remark}
The \textit{stochastic} Euler-Poincar\'e equation may be written in three dimensional vector notation as,
\begin{equation}
{\sf d}
\Big(\frac{1}{\eta} \frac{{\delta} l}{{\delta} \mathbf{u}}\Big)
\,-\, {\sf d}\boldsymbol \chi_{t}\times {\rm \curl}
\Big(\frac{1}{\eta} \frac{{\delta} l}{{\delta} \mathbf{u}}\Big)
\,+\, \nabla\Big({\sf d}\boldsymbol \chi_{t}\cdot\frac{1}{\eta}
\frac{{\delta} l}{{\delta} \mathbf{u}}
\,-\, \frac{{\delta} l}{{\delta} \eta} \,dt \Big)
+ \frac{1}{\eta} \frac{{\delta} l}{{\delta}b} \nabla b \,dt
=0\,.
\label{EP-comp2-trsw-stoch}
\end{equation}
For the Lagrangian in \eqref{trsw-lag2} with variational derivatives given in \eqref{vars-trsw} the stochastic Euler-Poincar\'e equation in \eqref{EP-comp2-trsw-stoch} implies
\begin{equation}
{\sf d}\mathbf{V}
\,-\, {\sf d}\boldsymbol \chi_{t}\times {\rm \curl}\, \mathbf{V}
\,+\, \nabla\Big( \mathbf{V} \cdot \boldsymbol \xi_i(\mathbf{x})\circ dW_t^i
+ \frac{1}{2} |\bu|^2\,dt  \Big)
+ \frac{1}{{\rm Fr}^2}\nabla(b \zeta)\,dt - \frac{1}{2\,{\rm Fr}^2}(\zeta-h)\nabla b\,dt
=0\,.
\label{EP-stochtrsw}
\end{equation}
\end{remark}

\begin{remark}
The stochastic TRSW equations \eqref{eq:STRSW} imply the following Kelvin circulation law
\begin{equation}
{\sf d}\oint_{c({\sf d}\boldsymbol \chi_t)} \frac{1}{\eta} \frac{{\delta} l}{{\delta} \mathbf{u}} \cdot d\bx 
= -\, \oint_{c({\sf d}\boldsymbol \chi_t)} \frac{1}{\eta} \frac{{\delta} l}{{\delta} b} \nabla b \cdot d\bx
\,,
\label{EP-Kelvin-Noether-trsw}   
\end{equation}
where $c({\sf d}\boldsymbol \chi_t)$ is a closed loop moving with stochastic horizontal fluid velocity ${\sf d}\boldsymbol \chi_t(\bx,t)$ in two dimensions. 
Evaluating for the variational derivatives for TRSW in \eqref{vars-trsw} yields
\begin{equation}
{\sf d}\oint_{c({\sf d}\boldsymbol \chi_t)}\mathbf{V} \cdot d\bx 
= \frac{1}{2\,{\rm Fr}^2}  \oint_{c({\sf d}\boldsymbol \chi_t)} (\zeta-h) \nabla b \cdot d\bx
= \frac{1}{2\,{\rm Fr}^2}  \int\!\!\int_{\partial S = c({\sf d}\boldsymbol{\chi_t})}
\nabla(\zeta-h) \times\nabla b \,d\mathbf{S}\,dt.
\,,
\label{EP-KN-trsw}   
\end{equation}
One sees in equation \eqref{EP-KN-trsw} that misalignment of the horizontal gradients of free surface elevation $\zeta$, bathymetry $h$ and buoyancy $\gamma^2$ will generate circulation, cf. the corresponding Kelvin circulation theorems in equations \eqref{GL-KelThm} and  \eqref{eq:KelThm-rsGN}. 
\end{remark}
\bigskip

\begin{remark}
The evolution of potential vorticity on fluid parcels for the TRSW equations in \eqref{eq:STRSW} is given by
\begin{equation}
{\sf d}q + ({\sf d}\boldsymbol \chi_t\cdot\nabla)q = \frac{1}{2\,{\rm Fr}^2\,\eta}J(\eta,b),
\end{equation}
where the potential vorticity is defined by
\begin{equation}
q:= \frac{\varpi}{\eta}, \qquad \hbox{and} \qquad \varpi := \hat{\mathbf{z}}\cdot\nabla\times\mathbf{V},
\end{equation}
and
\begin{equation}
J(\eta,b) = \hat{\mathbf{z}}\cdot\nabla\eta\times\nabla b = -\nabla\cdot(\eta\hat{\mathbf{z}}\times\nabla b)
\end{equation}
is the Jacobian of the depth $\eta$.
\end{remark}
\bigskip

\begin{remark}
The stochastic TRSW equations \eqref{eq:STRSW} have an infinite number of conserved integral quantities
\begin{equation}
C_{f,g} = \int_{CS} \big( f(b) +  q g(b)\big) \eta\,dxdy,
\label{eq:casimirstrsw}
\end{equation}
for the boundary conditions given in \eqref{trsw-bdy} and any differentiable functions $f$ and $g$.
\end{remark}
\bigskip

\begin{remark}
The Legendre transform which determines the Hamiltonian ${\sf d}h$ for the stochastic TRSW equations is defined as\,\footnote{Notice that the Hamiltonian ${\sf d}h$ in \eqref{LegXform-VCH092} is a semimartingale. Recall the definition \ref{def:smartingale}.}
\begin{equation}
{\sf d}h  (\mu ,\eta,b)
:= 
\big\langle \mu , {\sf d}\boldsymbol \chi_t \big\rangle - \ell_{TRSW} (\mathbf{u},\eta,b)dt
\,,
\label{LegXform-VCH092}
\end{equation}
in which the angle brackets in the definition of the Legendre transform denote the $L^2$ pairing over the horizontal cross-section $CS$. The Hamiltonian form of the stochastic TRSW equations is given by
\begin{align}
\frac{1}{{\rm Fr}}{\sf d} F =\Big\{ F, {\sf d}h\Big\}
=
- \int_{\Omega}
\begin{bmatrix}
{\delta F}/{\delta \mu_j} \\  {\delta F}/{ \delta \eta} \\ {\delta F}/{ \delta b} 
\end{bmatrix}^T
\begin{bmatrix}
\mu_j\partial_i + \partial_j \mu_i & \eta \partial_i & -\,b_{,i}
\\ 
\partial_j \eta & 0 & 0 \\
b_{,j} & 0 & 0
\end{bmatrix}
\begin{bmatrix}
{\delta ({\sf d}h }) / {\delta \mu_j} = {\sf d} \chi_t^{\,j}\
\\
{\delta ({\sf d}h })/{ \delta \eta} = - \, {\delta \ell_{TRSW} }/{\delta \eta}
\\  
{\delta ({\sf d}h })/{ \delta b} = - \, {\delta \ell_{TRSW} }/{\delta b}
\end{bmatrix}
\,dx\,dy\,.
\label{Ham-matrix-EB-stoch}
\end{align}
The conserved integral quantities $C_{f,g}$ defined in \eqref{eq:casimirstrsw} are Casimirs of the Lie--Poisson bracket in \eqref{Ham-matrix-EB-stoch} which persist when the Hamiltonian is made stochastic.  This means that these equations describe stochastic coadjoint motion in function space on level sets of the Casimir functionals $C_{f,g}$. Thus, the SALT introduction of stochasticity into the TRSW equations preserves their Lie--Poisson bracket and thereby preserves their geometric interpretation. 
\end{remark}

\end{document}